\begin{document}
\title{Word-Mappings of level $3$}
\author{G. S\'enizergues}
\institute
{LaBRI and 
Universit\'e de Bordeaux 
\footnote{
mailing adress:LaBRI and UFR info, Universit\'e de Bordeaux\\
               351 Cours de la lib\'eration -33405- Talence Cedex.\\
email:\{geraud.senizergues\}@u-bordeaux.fr; 
fax: 05-56-84-66-69;\\
URL:http://dept-info.labri.u-bordeaux.fr/$\sim$ges
}
}

\newtheorem{rem}[theorem]{Remark}
\newenvironment{sketch}[0]{\noindent{\bf Sketch of proof}:}{$\Box$\\}
\newenvironment{constru}[0]{\noindent{\bf Construction}:}{$\Box$\\}
\newenvironment{exmp}[0]{\noindent{\bf Example}:}{$\Box$\\}

\newcommand{\bbbB}{{\mathbb B}}
\newcommand{\bbbS}{{\mathbb S}}
\newcommand{\bbbD}{{\sf D}}
\newcommand{\bbbK}{{\sf K}}
\newcommand{\bbbM}{{\sf M}}
\newcommand{\bbbN}{{\mathbb N}}
\newcommand{\bbbR}{{\sf R}}
\newcommand{\bbbQ}{{\mathbb Q}}
\newcommand{\bbbV}{{\sf V}}
\newcommand{\bbbZ}{{\mathbb Z}}

\newcommand{\Bin}{{\bf \mathcal B}}
\newcommand{\card}{{\rm Card}}
\newcommand{\mi}{{\rm min}}
\newcommand{\borninf}{{\rm inf}}
\newcommand{\ma}{{\rm max}}
\newcommand{\rd}{{\rm rd}}
\newcommand{\di}{{\rm dim}}
\newcommand{\Div}{{\rm Div}}
\newcommand{\dom}{{\rm dom}}
\newcommand{\im}{{\rm im}}

\newcommand{\AutA}{{\mathbb A}}
\newcommand{\BB}{\mathbb{B}}
\newcommand{\CC}{\mathbb{C}}
\newcommand{\DD}{\mathbb{D}}
\newcommand{\A}{\mathcal{A}}
\newcommand{\B}{\mathcal{B}}
\newcommand{\C}{\mathcal{C}}
\newcommand{\GG}{\mathcal{G}}
\newcommand{\Id}{{\rm Id}}
\newcommand{\MM}{{\mathbb M}}
\newcommand{\NN}{{\mathbb N}}
\newcommand{\OMEG}{{\mathbb O}}
\newcommand{\Pd}{{\mathbb P}{\mathbb d}}
\newcommand{\Pow}{{\mathcal P}} 
\newcommand{\U}{{\mathcal U}}
\newcommand{\T}{\mathcal{T}}
\newcommand{\Un}{{\bf 1}}
\newcommand{\apda}{\operatorname{apda}}
\newcommand{\pda}{\operatorname{pda}}
\newcommand{\dpda}{\operatorname{dpda}}
\newcommand{\sdpda}{\operatorname{sdpda}}

\newcommand{\BR}{\operatorname{BR}}
\newcommand{\CF}{\operatorname{CF}}
\newcommand{\CHG}{\operatorname{CHG}}
\newcommand{\CP}{\operatorname{CP}}
\newcommand{\FL}{\operatorname{FL}}
\newcommand{\HOM}{\operatorname{HOM}}
\newcommand{\IRR}{\operatorname{IRR}}
\newcommand{\MAC}{\operatorname{MAC}}
\newcommand{\OP}{\operatorname{OP}}
\newcommand{\PB}{\operatorname{PB}}
\newcommand{\FS}{{\cal P}}
\newcommand{\POP}{\operatorname{POP}}
\newcommand{\PUSH}{\operatorname{PUSH}}
\newcommand{\RAT}{\operatorname{RAT}}
\newcommand{\REG}{\operatorname{REG}}
\newcommand{\SOL}{\operatorname{SOL}}
\newcommand{\TR}{\operatorname{TR}}
\newcommand{\TOPSYMS}{\operatorname{TOPSYMS}}

\newcommand{\Al}{\operatorname{Alph}}
\newcommand{\al}{\operatorname{al}}
\newcommand{\cp}{\operatorname{copy}}
\newcommand{\lp}{\langle~\langle}
\newcommand{\rp}{\rangle~\rangle}
\newcommand{\pa}[1]{\langle\langle~#1~\rangle\rangle}
\newcommand{\papol}[1]{\langle~#1~\rangle}
\newcommand{\dd}{{\rm d}}
\newcommand{\CO}{{\rm Cong}}
\newcommand{\cpds}{\operatorname{cpds}}
\newcommand{\DIFF}{{\mathcal D}}
\newcommand{\elsop}{\operatorname{else}}
\newcommand{\flag}{f}
\newcommand{\fop}{{\rm f}}
\newcommand{\FRAC}{{\mathcal F}}
\newcommand{\HYP}{{\mathcal H}}
\newcommand{\ifop}{\operatorname{if}}
\newcommand{\iter}{\operatorname{it}}
\newcommand{\Language}{{\rm L}}
\newcommand{\LYP}{{\mathcal L}}
\newcommand{\N}{{\rm N}}
\newcommand{\norm}[1]{\parallel#1\parallel}
\newcommand{\op}{\operatorname{op}}
\newcommand{\pds}{\operatorname{pds}}
\newcommand{\ptcup}{\stackrel{.}{\cup}}
\newcommand{\chg}{\operatorname{chg}}
\newcommand{\pop}{\operatorname{pop}}
\newcommand{\push}{\operatorname{push}}
\newcommand{\rest}{r}
\newcommand{\sep}{; }
\newcommand{\Str}{{\mathcal S}}
\newcommand{\supp}{{\rm supp}}
\newcommand{\Sys}{{\mathcal S}}
\newcommand{\term}{\operatorname{term}}
\newcommand{\uterm}{\operatorname{uterm}}
\newcommand{\Terms}[1]{{#1}-\term(\Gamma \cup {\mathcal U})}
\newcommand{\UTerms}[1]{{#1}-\uterm(\Gamma \cup {\mathcal U})}
\newcommand{\thenop}{\operatorname{then}}
\newcommand{\topo}{{\mathcal T}}
\newcommand{\tops}{\operatorname{tops}}
\newcommand{\topsyms}{\operatorname{topsyms}}
\newcommand{\transpose}[1]{\;\stackrel{t}{\,}\!\!#1}
\newcommand{\var}{\operatorname{var}} 
\newcommand{\vterm}{\operatorname{vterm}} 
\newcommand{\WD}{\operatorname{WD}}

\newcommand{\Alg}{{\cal A}}
\newcommand{\Mod}{{\cal M}}
\newcommand{\ff}{\bot}
\newcommand{\cd}{\cdots}
\newcommand{\Clos}{{\rm Cl}}
\newcommand{\imp}{\rightarrow}
\newcommand{\eq}{\leftrightarrow}
\newcommand{\ou}{\vee}
\newcommand{\et}{\wedge}
\newcommand{\non}{\neg}
\newcommand{\I}[1]{{\rm I}(#1)}
\newcommand{\DB}{\operatorname{DB}}
\newcommand{\vir}{\underline{,}}
\newcommand{\Form}{{\cal L}_1}
\newcommand{\Term}{{\cal T}}
\newcommand{\GForm}{{\cal G}_1}
\newcommand{\FVar}{{\rm FV}}
\newcommand{\Suc}{{\rm Succ}}
\newcommand{\spc}{\operatorname{spc}}
\newcommand{\sps}{\operatorname{sps}
}
\newcommand{\E}{{\sf E}}
\newcommand{\DIF}{{\sf \Delta}}
\newcommand{\SUM}{{\sf \Sigma}}
\newcommand{\COMPseq}{{\circ}}
\newcommand{\COMPser}{{\bullet}}
\def\nbOne{{\mathchoice {\rm 1\mskip-4mu l} {\rm 1\mskip-4mu l} {\rm
1\mskip-4.5mu l} {\rm 1\mskip-5mu l}}}

\newcommand{\eqbar}{\:|\!\!\!=\!\!\!\!=\!\!\!|\:}
\newcommand{\deducedir}{\mbox{$\:|\!\!\!-\!\!\!-\:$}}
\newcommand{\deducedirnobox}{\:|\!\!\!-\!\!-\:}
\newcommand{\ededucedir}{\mbox{$\:|\!\:|\!\!\!-\!\!\!-\:$}}
\newcommand{\ededucedirnobox}{\:|\!\deducedirnobox}
\newcommand{\antiededucedir}{-\!\!-\!\!\!||}
\newcommand{\deduce}[1]{\stackrel{<#1>}{\deducedirnobox}}
\newcommand{\drule}[2]{\frac{\underline{#1}}{#2}}
\newcommand{\dual}[1]{{\rm D}(#1)}
\newcommand{\ededuce}[1]{\stackrel{<#1>}{\ededucedirnobox}}
\newcommand{\antiededuce}[1]{\stackrel{<#1>}{\antiededucedir}}
\newcommand{\deducesq}[1]{\stackrel{[#1]}{\deducedirnobox}}
\newcommand{\ededucesq}[1]{\stackrel{[#1]}{\ededucedirnobox}}
\newcommand{\eqsem}{\:|\!\!\!=\!\!\!\!=\!\!\!|\:}
\newcommand{\force}{\ededucedir}
\newcommand{\notforce}{\not{\!\!\!\!\force}}
\newcommand{\deducesem}{\:|\!\!\!=\!\!\!\!=\:}
\newcommand{\equivsem}{\:|\!\!\!=\!\!\!\!=\!\!\!|\:}
\newcommand{\notdeducesem}{\not{\!\!\!\!\deducesem}}
\newcommand{\these}{\deducedir}

\newcommand{\deralt}[1]{{\rightarrow}^*_{#1}\;}

\newcommand{\path}[2]{\stackrel{#1}{\rightarrow_{#2}}\;}

\newcommand{\derivedir}[1]{{\rightarrow}_{#1}\;}

\newcommand{\derive}[2]{{\rightarrow}^{#1}_{#2}\;}

\newcommand{\derivestar}[1]{{\rightarrow}^*_{#1}\;}

\newcommand{\computdir}[1]{{\vdash}_{#1}\;}

\newcommand{\comput}[2]{{\vdash}_{#1}^{#2}\;}

\newcommand{\computstar}[1]{{\vdash}^*_{#1}\;}

\newcommand{\qcomputdir}[1]{{\twoheadrightarrow}_{#1}\;}

\newcommand{\qcomput}[2]{{\twoheadrightarrow}^{#1}_{#2}\;}

\newcommand{\qcomputstar}[1]{{\twoheadrightarrow}^*_{#1}\;}

\newcommand{\CFT}{\operatorname{CFT}}
\newcommand{\D}{\operatorname{D}}
\newcommand{\DPT}{\operatorname{DPT}}
\newcommand{\Pil}{\operatorname{P}}
\newcommand{\PebT}{\operatorname{PT}}
\newcommand{\PilT}{\operatorname{PT}}
\newcommand{\RT}{\operatorname{RT}}
\maketitle
\begin{abstract}
Sequences of numbers (either natural integers, or integers or rational)
of level $k \in \bbbn$ have been defined in \cite{Fra05,Fra-Sen06} as the
sequences which can be computed by deterministic pushdown automata of level $k$.
This definition has been extended to sequences of {\em words} indexed by {\em words}
in \cite{Sen07,Fer-Mar-Sen14}.
We characterise here the sequences of level 3 as the compositions of two  
HDT0L-systems. Two applications are derived:\\
- the sequences of rational numbers of level 3 are characterised by polynomial
recurrences \\
- the equality problem for sequences of rational numbers of level 3 is decidable.
\end{abstract}

\noindent{\bf Keywords:}$\;\;$Iterated pushdown automata \sep 
recurrent sequences of numbers\sep equivalence problems.

\pagenumbering{arabic}
\noindent {\small{\bf Version}: \today }

\section{Introduction}
\label{introduction}
The class of pushdown automata of level $k$ (for $k \geq 1$) has been
introduced in \cite{Gre70},\cite{Maslov74} as a generalisation of the
automata and grammars of \cite{Aho68},\cite{Aho69},
\cite{Fischer68} and has been the object of many further studies: 
see \cite{Maslov76}, \cite{Eng-Sch77}, \cite{Damm82}, \cite{Engel83}, \cite{Eng-Slut84},
\cite{Eng-Vog86}, \cite{Damm&Goerdt86},
and more recently \cite{Cau02}, \cite{Knapik},
\cite{Car-Wor03},\cite{Fra05},\cite{HagueOng07},\cite{Car-Ser21}.\\

The class of {\em integer} sequences computed
(in a suitable sense) by such automata was defined in \cite{Fra05},\cite{Fra-Sen06} (we denote it by $\mathbb{S}_k$).
The class of {\em word} mappings from a free monoid $A^*$ into a free monoid $B^*$  computed
by such automata was defined in \cite{Sen07} (as a straightforward extension of $\mathbb{S}_k$); we denote it by $\mathbb{S}_k(A^*, B^*)$.

Let $\FRAC(S_k(A^*,\NN))$ be the set of all the sequences of 
{\em rational} numbers which can be decomposed as $\frac{a_n - b_n}{a'_n - b'_n}$ for sequences $a,b,a',b' \in \mathbb{S}_k(A^*,\NN)$
(Definition 9 of \cite{Sen07}).\\
These classes of number sequences fulfill many closure properties and generalize some well-known
classes of recurrent sequences or formal power series(see \cite{Fra-Sen06}, sections 4,6,7).\\

This paper focuses on level 3, for words i.e. $\mathbb{S}_3(A^*, B^*$).
The class $\FRAC(S_3(\NN,\NN))$ contains all the so-called P-recurrent sequences
of rational numbers, corresponding also to the D-finite formal power series
(see \cite{Sta80} for a survey and \cite{Pet-Wil-Zei96} for a thorough study of their algorithmic properties).

The main result of this paper consists in characterising the class $\mathbb{S}_3(A^*, B^*)$  in terms of Lindenmayer systems i.e.
iterated homomorphisms.\\
{\bf Theorem \ref{characterisation_level3}}:\\
The following properties are equivalent:\\
1- $f \in \mathbb{S}_3(A^*,B^*)$\\
2- There exists a finite family $(H_i)_{i\in [1,n]}$ of mappings $A^* \rightarrow Hom(C^*,C^*)$ 
which fulfils a system of recurrent relations in $(Hom(C^*,C^*),\circ,{\rm Id})$ ,
an element $h \in Hom(C^*,B^*)$ and a letter $c \in C$ such that, for every $w \in A^*$:
$$f(w)=h(H_1(w)(c)).$$
3- $f$ is a composition of a DT0L sequence $g: A^* \rightarrow C^*$ by a HDT0L sequence $h: C^* \rightarrow B^*$.

As a corollary, the class $\mathbb{S}_3(A^*,\NN)$ is characterised by {\em polynomial recurrences}. 
The equality problem for two sequences in $\FRAC(S_3(A^*,\NN))$ can thus be solved by a 
suitable reduction to polynomial ideal theory (Theorem 5 of \cite{Sen07}).\\

This theorem \ref{characterisation_level3}, as well as a general version treating the class $\mathbb{S}_k(A^*,B^*)$, for every $k \geq 2$,
was announced in Theorem 6 of the extended abstract \cite{Sen07}. The full proof for $k=3$ is given here for the first time.\\
The main difficulty is to prove that $(1) \Rightarrow (2)$. The proof consists in constructing from the given pushdown-automaton of level 3,
a sequence of homomorphisms that fulfills a monoidal recurrence, in the sense of \cite{Fer-Mar-Sen14}. The main theorem of this reference
(slightly reformulated  here as Theorem \ref{characterisation_level2} in section \ref{sec-prelimin2}) then gives the conclusion.

\tableofcontents
\section{Preliminaries}
\label{sec-prelimin2}
We recall/introduce in \S \ref{subsec-sets}-\ref{subsec-rec-monoids}  some notation and basic definitions which will be used
throughout the text. In \S \ref{subsec-level2} we recall previous results on sequences of level 2, which are crucial for this work.
For overviews on the links between automata-theory and number-theory, we forward the reader to \cite{All-Sha03,Rig06}.
\subsection{Sets-Relations}
\label{subsec-sets}
Given a set $E$, we denote by $\Pow(E)$ the set of its subsets and by ${\mathcal P}_f(E)$ the set of its {\em finite } subsets.\\
A {\em binary relation} from a set $E$ into a set $F$ is a subset $R$ of $E \times F$.
The domain and image of $R$ are defined by:
$$\dom(R) := \{x \in E \mid \exists y \in F, (x,y) \in R\},\;\;
\im(R) := \{y \in F \mid \exists x \in E, (x,y) \in R\}.$$
We denote by $\circ$ the composition of binary relations:
if $R \subseteq E \times F, R'\subseteq F \times G$ then:
$$R \circ R':= \{ (x,z) \in E \times G \mid \exists y \in F, (x,y) \in R \;\et\; (y,z) \in R'\}$$
We denote by $\BR(Q)$ the set of binary relations over $Q$.\\
A {\em function} from the set $E$ into the set $F$ is a binary relation 
$f \subseteq E \times F$ such that,
$$\forall (x,y), \forall (x',y') \in f, x=x' \Rightarrow y=y'$$
Note that, when using a functional notation, we still use the composition operator $\circ$ as above i.e.
$$(f \circ g)(x) := g(f(x)).$$
We call {\em mapping} (or simply, {\em map}) from $E$ to $F$ any function $f: E \rightarrow F$ such that $\dom(f) = E$.\\
The empty set $\emptyset$ is also denoted by the symbol $\varepsilon$ when it is viewed as the empty word.
\subsection{Abstract rewriting}
\label{preli_abstract_rewriting}
Let $E$ be some set and $\rightarrow \subseteq E \times E$. 
The relations $\rightarrow^{n}$ (for any natural integer $n$) and $\rightarrow^{*}$ are
defined from the binary relation $\rightarrow$ as usual (see \cite{Hue80}).
A {\em derivation} with respect to the relation $\to$ is a sequence :
  $$D= (e_i)_{i \in I}$$
of elements $e_i \in E$, in dexed by $I=[0,n]$ or $I = \NN$,  such that, for every $i\in \NN$, if $i+1 \in I$ then
$e_i \to e_{i+1}$.
The {\em length} of derivation $D$ is the integer $n$ (if $I=[0,n]$)  or $\infty$ (if $I=\NN$).
The notation  $e \to^{\infty}$ means that there exists some  derivation of length $\infty$ that starts on $e$.
The relation $\to$ is called {\em noetherian} iff there exists no $e \in E$ such that 
$e \to^{\infty}$. 

\subsection{Monoids}
\label{preli_monoids}
We recall a monoid is a triple $\langle M,\cdot,\Un \rangle$ such that,
$M$ is a set (the carrier of the monoid), $\cdot$ is a composition law which is associative and $\Un$ is a neutral element for $\cdot$ (on both sides).
Given two monoids
$\MM_1:=\langle M_1,\cdot,\Un_1\rangle,\;\;\MM_2:=\langle M_2,\cdot,\Un_2\rangle $ a 
monoid-homomorphism from $\MM_1$ to $\MM_2$ is a map $h: M_1 \rightarrow M_2$
fulfilling: for every $x,y \in  M_1$
$$h(x \cdot y) = h(x) \cdot h(y) \mbox{ and } h(\Un_1) = \Un_2.$$
We denote by $\HOM(\MM_1,\MM_2)$ the set of all monoid-homomorphism from $\MM_1$ to $\MM_2$.
For every monoid $\MM$, the set $\HOM(\MM,\MM)$, endowed with the composition law $\circ$ and the identity map $\Id_M$ is a monoid.\\
Given a set $X$ (that we see as an ``alphabet''), we denote by $X^*$ the set of all 
finite words labelled on this set $X$. We denote by $\cdot$ the binary operation
of concatenation over $X^*$ 
and denote by $\varepsilon$ the empty word.
The structure $\langle X^*, \cdot,\varepsilon \rangle$ is the {\em free monoid} 
over the alphabet $X$.
For every integer $k \geq 0$, by $X^{\leq k}$ we mean the set $\{ u \in X^* \mid |u| \leq k\}$.\\
\paragraph{Partial monoids}
We call {\em partial-monoid} every triple $\MM = \langle M,\cdot,\Un \rangle$ such that,
$M$ is a set (the carrier of the partial-monoid), $\cdot$ is a function from $M \times M$ into $M$ such that:
for every $x,y,z \in M$
$$ [(x,y) \in \dom(\cdot) \mbox{ and } (x\cdot y,z ) \in \dom (\cdot)] \Leftrightarrow
[(y,z) \in \dom(\cdot) \mbox{ and } (x, y\cdot z ) \in \dom (\cdot)]$$
and, if $x,y,z$ fulfill the above two equivalent prerequisite, then
$$ (x \cdot y) \cdot z = x\cdot (y \cdot z)$$
and, for every $x \in M$
$$x\cdot \Un = \Un \cdot x = x.$$
Note that, if $\langle M,\cdot,\Un \rangle$ is a partial monoid, then
$$\langle \Pow(M),\cdot,\{\Un\} \rangle$$ is a monoid
(here , for $X,Y \subseteq M, X \cdot Y =\{ x\cdot y \mid x \in X,y \in Y, (x,y) \in \dom(\cdot)\}$). Let
us consider the monoid $\MM^0$ defined by
$$\MM^0:= \langle \Pow_1(M),\cdot,\{\Un\} \rangle$$ where $\Pow_1(M)$ denotes the set of subsets of $M$ with at most one element.
$\MM^0$  is a sub-monoid of the monoid $\langle \Pow(M),\cdot,\{\Un\} \rangle$. When $\MM$ is a partial-monoid which fails to be a monoid, there is a bijection between $\Pow_1(M)$ and $M \cup \{0\}$,
where $0$ is a new element not in $M$. Most assertions about a partial monoid $\MM$  can thus be deduced from the similar assertions about the monoid 
$\MM^0$.
\paragraph{Congruences}
\label{par-prelimin-congruences}
An equivalence relation $\sim$ over $M$ is called a {\em right-regular} equivalence
if and only if, for every $x, y , z \in M$
$$x \sim y \Rightarrow x\cdot z \sim y \cdot z.$$
(the notion of left-regular equivalence is defined analogously).\\
An equivalence over the monoid $M$ is called a {\em congruence} iff it is both right-regular and left-regular.\\
Given an homomorphism $h: \MM_1 \rightarrow \MM_2$, the kernel of $h$ is the congruence $\equiv$ over $\MM_1$ defined by:
$$x \equiv y \;\;\Leftrightarrow \;\;h(x)=h(y).$$
We shall use the following construction.
Let $\MM_1,\MM_2$ be monoids and let us consider their free product $\MM_1 * \MM_2$.
Let $h:\MM_1 \rightarrow \MM_2$ be an homomorphism. Let $\hat{h}: \MM_1 * \MM_2 \rightarrow M_2$ be the unique monoid-homomorphism such that:
$$\forall x_1 \in M_1,\;\;\hat{h}(x_1) = h(x_1),\;\;\forall  x_2\in M_2,\;\;\hat{h}(x_2)= x_2.$$
The definition of $\hat{h}$ shows that:
$$\forall x,y \in M_1, (x,y) \in \ker(\hat{h}) \Leftrightarrow (x,y) \in \ker(h). $$
Therefore $\ker(\hat{h})$ is an extension of the congruence $\ker(h)$ to the monoid $\MM_1 * \MM_2$.\\
Starting with a congruence $\equiv$ over $\MM_1$, we can define $\MM_2 := \MM_1/\equiv$, consider the projection
$h: \MM_1 \rightarrow \MM_2$ and define an extension $\hat{\equiv}$ of $\equiv$  over the monoid $\MM_1 * (\MM_1/\equiv)$
by
\begin{equation}
\hat{\equiv} := \ker(\hat{h}).
\label{eq-extension-congruence}
\end{equation}
We call $\hat{\equiv}$ the self-extension of $\equiv$.
The monoid $\MM_1/\equiv$ is isomorphic with $(\MM_1 * (\MM_1/\equiv))/\hat{\equiv}$, by the map
$$[x]_\equiv \mapsto [x]_{\hat{\equiv}}.$$
Intuitively, the elements of $\MM_1 * (\MM_1/\equiv)$ are ``mixed products'' of ``genuine'' elements of $M_1$ with ``fake'' elements
that are merely congruence-classes. When we map the genuine elements onto their classes, what we obtain is 
a product of classes. The value of this product  is the class of the
``mixed product'' for the self-extension $\hat{\equiv}$.\\
This construction will be useful for extending some natural congruences over ordinary pushdowns of order 3 to
pushdowns of order 3 {\em extended with undeterminates} (see subsection \ref{subsub-termination}), where the undeterminates
are classes of ordinary pushdowns modulo some congruence.

\subsection{General Automata }
\label{sub_general_nondetautomata}
\paragraph{General automaton}
We introduce in this paragraph a notion of automaton over
an arbitrary memory structure, which is essentially the one defined by \cite{Eng91}. 
Let us call {\em data-structure} every tuple 
$\DD = \langle D, F, \OP, \tops \rangle$ consisting of a set 
$D$, a finite set $F$, a set $\OP$ of unary operations (
$\forall \op \in \OP$, $\op$ is a total map : $D \rightarrow D$) 
 and a total map $\tops: D \rightarrow F$.\\
\begin{definition}[Automaton over $\DD$]
An automaton over the data-structure $\DD$ and 
the terminal alphabet $B$ is a 4-tuple
    $$\mathcal{A}=(Q,B,\DD,\delta)$$ where
$Q$ is a finite set of states,\\
$\DD= \langle D, F, \OP, \tops \rangle$ is a data-structure,\\
the transition function, $\delta$, is a map from $Q \times (B\cup\{\varepsilon\}) \times F$ 
into $\FS(Q \times \OP(\DD))$.
\label{def-syntax_pda}
\end{definition}
A configuration is a triple $(q,u,w) \in Q \times B^* \times D$.
The {\em direct computation relation}  $\computdir{\A}$, is defined as usual:
it is a binary relation over the set of configurations.For every $q,q'\in Q, u,u'\in B^*, d,d'\in D$,
we let
$$(q,u,d)\computdir{\A}(q',u',d')$$ iff
\begin{itemize}
\item $u=\bar{b} \cdot u'$ for some $\bar{b} \in B \cup \{\varepsilon\}$
\item $\exists \op \in \OP , (q', \op) \in \delta(q,\bar{b},\tops(d))$ and $d'= \op(d)$.
\end{itemize}

\paragraph{Pushdowns}
Given a data-structure $\DD$ and a non-empty  alphabet $\Gamma$, one constructs a new data-structure, denoted by $\Pd(\Gamma,\DD)$ and called the
set of {\em pushdowns over} $(\Gamma,\DD)$. It is defined as follows:\\
$$ \Pd(\Gamma,\DD) := \langle P(\Gamma,D), F', \OP', \tops' \rangle$$
where
\begin{eqnarray*}
P(\Gamma,D) & := & (\Gamma[D])^*,\;\; F':= \Gamma \cdot F \cup \Gamma \cup \{\varepsilon\},\;\;
\OP':= \OP \cup \{ \pop\}\cup \{ \push(h) \mid h \in \Gamma^+\},\\
&&\tops'(\gamma[d]w) := \gamma \cdot \tops(d),\;\;\tops(\varepsilon):= \varepsilon.
\end{eqnarray*}
The maps $\push(h): P(\Gamma,D) \rightarrow P(\Gamma,D)$ and $\pop: P(\Gamma,D)  \rightarrow P(\Gamma,D)$
are defined by: for every $h \in \Gamma^+, d \in D, w \in D^*$:if $h=h_1\cdots h_n$ with $h_i \in \Gamma$
$$ \push(h)(\gamma[d]w) := h_1[d]\cdots h_n[d]w,\;\;\pop(\gamma[d]w):=w,$$
$$\;\;\push(h)(\varepsilon) := \varepsilon,\;\;\pop(\varepsilon):=\varepsilon.$$
A {\em pushdown automaton}
of level $k$, for some integer $k \geq 0$, is an automaton over the data-structure $\DD_k$ defined by:
$$\DD_0 := \langle \{\emptyset\},\{\emptyset\},\emptyset,\{(\emptyset,\emptyset)\}\rangle\;\;\mbox{ and }\;\;\forall i \in \NN,\;\;\DD_{i+1} := \Pd(\Gamma,\DD_i).$$
The usual notion of finite automaton (resp. pushdown-automaton) coincides with the above notion of pda of level $0$ (resp. $1$). 
\paragraph{Derivations versus computations}
\label{par-derive-vs-comput}
Let $\DD$ be some data-structure and $$\mathcal{A}=(Q,B,\Pd(\Gamma,\DD),\delta)$$ a pushdown-automaton i.e. an automaton over the data-structure $\Pd(\Gamma,\DD)$.
We associate with ${\mathcal A}$ the infinite ``alphabet''
\begin{equation} 
V_{\mathcal A}=\{( p, \omega, q) \mid p,q \in Q, \omega \in
P(\Gamma,D)-\{\varepsilon\}\}.
\label{alphabet_VA}
\end{equation}
whose elements are called {\em variables}.
The set of {\em productions } associated with ${\mathcal A}$, denoted by
$P_{\mathcal A}$ is made of the set of all the following rules:\\
the {\em transition} rules: 
$$ ( p, \omega, q) \derivedir{\mathcal A} \bar{a} ( p', \omega', q)$$
if $(p ,\bar{a},\omega) \computdir{\mathcal A} ( p',\varepsilon,\omega')$
and $q \in Q $ is arbitrary,
$$ ( p, \omega, q) \derivedir{\mathcal A} \bar{a} $$
if $(p ,\bar{a},\omega) \computdir{\mathcal A} ( q,\varepsilon,\varepsilon)$.\\
the {\em decomposition} rule:
$$ ( p ,\omega, q) \derivedir{\mathcal A} ( p, \eta, r) ( r, \eta', q)$$ 
if $\omega= \eta \cdot \eta', \eta \neq \varepsilon, \eta' \neq \varepsilon$ and $p,q,r\in Q$ are arbitrary.
The one-step {\em derivation} generated by ${\mathcal A}$, denoted by 
$\derivedir{{\mathcal A}}$, is the smallest subset of $(V \cup \Sigma)^* \times (V \cup \Sigma)^*$ which contains $P_{\mathcal A}$ and is compatible with left-product and  right-product.
Finally, the {\em derivation} generated by ${\mathcal A}$, denoted by 
$\derivestar{{\mathcal A}}$, is the reflexive and transitive closure of
$\derivedir{{\mathcal A}}$.
These notions correspond to the usual notion of {\em context-free grammar}
associated with the pushdown automaton ${\mathcal A}$.
As soon as $D$ is infinite, this variable alphabet is infinite,
but all the usual properties of the relation $\derivedir{{\mathcal A}}$ and its links with
$\computdir{{\mathcal A}}$ remain true in
this context (see \cite[proof of theorem 5.4.3, pages 151-158]{Har78}).
In particular, for every $u \in \Sigma^*, p,q \in Q, \omega \in P(D)$
$$ (p,\omega,q) \derivestar{{\mathcal A}} u \Leftrightarrow (p,u,\omega) \computstar{{\mathcal A}}
(q,\varepsilon,\varepsilon).$$
We usually assume that $P(D)$ and $Q$ are disjoint, therefore, omitting the commas in $(p,\omega,q)$ does not lead to any confusion.
The notions of {\em derivation} $\pmod{\derivedir{{\mathcal A}}}$ is obtained by instantiating the general notion defined
in subsection \ref{preli_abstract_rewriting}.

\subsection{Pushdown automata of level $k$}
\label{sub_automata}
\paragraph{pda}
Beside the usual notions of finite automaton and pushdown automaton, we shall
consider here the notion of {\em pushdown automaton of level $k$}:
this is an automaton (in the sense of Definition \ref{def-syntax_pda})
over the data-structure $\DD_k$ (as defined in subsection \ref{sub_general_nondetautomata}).
Let us describe in more details these automata:
\begin{definition}[$k$-iterated pushdown store]
Let $\Gamma$ be a set. We define inductively the set of \emph{ $k$-iterated pushdown-stores over
$\Gamma$} by:
$$
0-\pds(\Gamma)=\{ \varepsilon \} \; \; 
(k+1)-\pds(\Gamma)=(\Gamma[k-\pds(\Gamma)])^*\;\;
\iter-\pds(\Gamma)= \bigcup_{k \geq 0} k-\pds(\Gamma).
$$
\label{def-kpds}
\end{definition}
The elementary operations that a $k$-pda can perform are:\\
- $\pop_j$ of level $j$ (where $1 \leq j \leq k$), which consists of 
popping the leftmost letter of level $j$ and all the bracket just on its right\\
- $\push_j(h)$ of level $j$ (where $1 \leq j \leq k,h \in \Gamma^+$), which consists of 
pushing successively all the symbols of $h$ as new heads of the leftmost pushdown of level $j$, thus copying this pushdown
in each bracket following each new head on its right(see example \ref{ex-pushdowns1} below).\\
An  operation of level $j$ (whether $\pop_j$ or $\push_j(h)$) applied on a pushdown where this leftmost pushdown of level $j$ is $\varepsilon$,
leaves the pushdown invariant. 

\begin{example}
\label{ex-pushdowns1}
Let $\Gamma=\{ A_1,A_2,A_3,B_1,B_2,B_3,C_1,C_2,C_3,D_1,D_2,D_3\}.$
Let $$\omega=A_1[A_2[A_3[\varepsilon]C_3[\varepsilon]]B_2[D_3[\varepsilon]C_3[\varepsilon]]]B_1[B_2[B_3[\varepsilon]D_3[\varepsilon]]].$$
We shall (abusively) write:
$$\omega=A_1[A_2[A_3C_3]B_2[D_3C_3]]B_1[B_2[B_3D_3]].$$
i.e. we use a short notation where the bracketed empty words  $[\varepsilon]$ that occur at the most internal level of the pushdown (here level 4) are removed.\\
The reading operation, applied on the above example gives:\\
$\topsyms(\omega)=A_1A_2A_3$\\
The $\pop$ operations, applied on the above example give:
\\
$\pop_1(\omega)=B_1[B_2[B_3D_3]]$\\
$\pop_2(\omega)=A_1[B_2[D_3C_3]]B_1[B_2[B_3D_3]]$\\
$\pop_3(\omega)=A_1[A_2[C_3]B_2[D_3C_3]]B_1[B_2[B_3D_3]]$\\
The $\push$ operations, applied on the above example gives:
\\
$\push_1(AB)(\omega)=A[A_2[A_3C_3]B_2[D_3C_3]]B[A_2[A_3C_3]B_2[D_3C_3]]B_1[B_2[B_3D_3]]$\\
$\push_2(AB)(\omega)=A_1[A[A_3C_3]B[A_3C_3]B_2[D_3C_3]]B_1[B_2[B_3D_3]]$\\
$\push_3(AB)(\omega)=A_1[A_2[ABC_3]B_2[D_3C_3]]B_1[B_2[B_3D_3]]$
\end{example}

A transition of the automaton consists, given the word $\gamma$ made of all the leftmost letters 
of the $k$-pushdown (the one of level 1, followed by the one of level 2, ..., followed by the one of level $k$),
the state $q$ and the leftmost letter $b$ (or, possibly, the empty word $\varepsilon$) on the input tape,
in performing one of the above elementary operations. More formally, 
\begin{definition}[operations on $k$-$pds$]
Let $k \geq 1$ , let $\POP(k):=\{ \pop_j | j \in [1,k]   \}$,
$\PUSH(k,\Gamma):=\{ \push_j(\gamma) | \gamma \in \Gamma^+ , j \in [1,k]\}$, $\OP(k,\Gamma) = \POP(k)  \cup \PUSH(k,\Gamma)$ ,
$\TOPSYMS(k,\Gamma):=\Gamma^{\leq k} -\{\varepsilon\}$.
\label{def-kops}
\end{definition}

\begin{definition}[$k$-$pdas$]
A k-iterated pushdown automaton (abbreviated $k$-pda) over a terminal alphabet $B$ is a 4-tuple
    $$\mathcal{A}=(Q,B,\Gamma,\delta)$$ where
    \begin{itemize}
        \item $Q$ is a finite set of states,
        \item $\Gamma$ is a finite set of pushdown-symbols
        \item the transition function $\delta$ is a map from $Q \times
        (B \cup \{\varepsilon \}) \times \TOPSYMS(k,\Gamma)$
        into the set of finite subsets of $Q \times \OP(k,\Gamma)$ 
      \end{itemize}
\label{def-syntax_kpda}
\end{definition}
The automaton ${\mathcal A}$ is said {\em deterministic} iff, 
for every $q \in Q, \gamma \in \Gamma^{\leq k},b \in B$\\
\begin{equation}
\card (\delta (q, \varepsilon, \gamma)) \leq 1 \mbox{ and }\card(\delta (q,b,\gamma)) \leq 1,
\label{determinism0}
\end{equation}
\begin{equation}
\card (\delta (q, \varepsilon,\gamma)) = 1 \Rightarrow
\card (\delta(q, b,\gamma)) = 0.
\label{determinism1}
\end{equation}
\paragraph{Normalized automata}
\label{par-normal-aut}
In order to define a useful notion of map {\em computed} by a $k$-pda we introduce the following stronger condition: ${\mathcal A}$ is called {\em strongly deterministic} iff, 
for every $q \in Q, \gamma \in \Gamma^{(k)}-\{\varepsilon\}$\\
\begin{equation}
\sum_{\bar{b} \in \{\varepsilon\} \cup B} \card (\delta (q, \bar{b}, \gamma)) \leq 1 
\label{strongdeterminism}
\end{equation}
In other words, the automaton ${\mathcal A}$ is {\em strongly deterministic} iff, 
the leftmost contents
$\gamma$ of the memory and the state $q$ completely determine the transition of ${\mathcal A}$,
in particular which letter $b$ (or possibly the empty word) can be read. Therefore, 
such an automaton ${\mathcal A}$ can accept at most one word $w$ from a given configuration.
We say that ${\mathcal A}$ is {\em level-partitioned} iff $\Gamma$ is the disjoint union of subsets $\Gamma_1,\Gamma_2,\ldots,\Gamma_k$
such that, in every transition of ${\mathcal A}$, every occurrence of 
a letter from $\Gamma_i$ is at level $i$.\\
It is easy to transform any $k$-pushdown automaton ${\mathcal A}$ into
another one ${\mathcal A}'$ which recognizes the same language and is level-partitioned.
Moreover, if ${\mathcal A}$ is strongly deterministic then 
${\mathcal A}'$ is strongly deterministic.

\paragraph{Variables}
As a particular case of the defining equation (\ref{alphabet_VA}), we define the set $k-\var(\Gamma)$ of {\em order $k$ variables} over an aphabet $\Gamma$ and a set of states $Q$:
\begin{equation} 
k-\var(\Gamma)=\{( p, \omega, q) \mid p,q \in Q, \omega \in k-\pds(\Gamma) \setminus\{\varepsilon\}\}.
\label{alphabet_kvar}
\end{equation}
We name {\em variable-words} of level $3$, the elements of $(k-\var(\Gamma))^*$ i.e. the words over the alphabet $k-\vterm(\Gamma)$.
\paragraph{Terms}
\label{par-prelimin-terms}
Given a denumerable  alphabet $\Gamma$ of pushdown symbols, we
introduce another alphabet ${\mathcal U}=\{\Omega,\Omega',\Omega'',\ldots,\Omega_1,
\Omega_2,\ldots,\Omega_n,\ldots\}$ of {\em undeterminates}. We suppose
that $\Gamma \cap {\mathcal U} = \emptyset$.
We call a {\em term} of level $k$ over the constant alphabet $\Gamma$
and the alphabet of undeterminates ${\mathcal U}$, any  
$T \in k-\pds(\Gamma \cup {\mathcal U})$ where every occurence of an undeterminate $U$ in $T$
is followed by $[\varepsilon]$, in the rigorous bracketed notation (see example \ref{ex-pushdowns1}, first notation for $\omega$).\\
Such a pushdown $T$, seen as a sequence of planar trees (as in fig.2 p.366 of \cite{Fra-Sen06}), has all
the undeterminates at the leaves, as is the case for classical terms \footnote{However, no arity is attached here to the symbols
of $\Gamma$, unlike what happens for the classical notion of term}.
We denote by $k-\term(\Gamma,{\mathcal U})$ the set of all terms of level $k$ over the constant alphabet $\Gamma$
and the alphabet of undeterminates ${\mathcal U}$.\\
We denote an element of $k-\term(\Gamma,{\mathcal U})$ by
$T[\Omega_1,\Omega_2, \ldots,\Omega_n]$
(resp. $T[\Omega,\Omega',\Omega'']$)
provided that the only undeterminates appearing in $T$ belong to $\{\Omega_1,\Omega_2, \ldots,\Omega_n\}$
(resp. $\{\Omega,\Omega',\Omega''\}$).      
The notation $k-\vterm(\Gamma,{\cal U})$ designates the set of {\em variable-terms} of level $k$, over the pushdown alphabet $\Gamma$ and the set of
undeterminates ${\cal U}$:
\begin{equation} 
k-\vterm(\Gamma,{\cal U})=\{( p, \omega, q) \mid p,q \in Q, \omega \in k-\term(\Gamma,{\cal U}) \setminus\{\varepsilon\}\}.
\label{alphabet_kvterm}
\end{equation}
We name {\em variable-term-words} of level $3$, the elements of $(k-\vterm(\Gamma,{\cal U}))^*$ i.e. the words over the alphabet $k-\vterm(\Gamma,{\cal U})$.
\paragraph{Graded alphabets}
A graded alphabet of height $k$ is an alphabet $\Gamma = \stackrel{.}{\bigcup}_{i \in [1,k]}\Gamma_i$ .
We call the elements of $\Gamma_i$ the letters of {\em level} $i$.
Given $j \in [1,k]$, we call graded pushdown of level $j$ over $\Gamma$, any element $u$ of $j-\pds(\Gamma)$ where every occurrence of a letter $\gamma$ at level $k-j+i$ in $u$ belongs to $\Gamma_{i}$. The notion of graded $k$-variable (over $\Gamma,Q$) is defined in the same way.
Given two disjoint graded alphabets $\Gamma, {\cal U}$, we define in the same way the graded $k$-terms and the graded $k$-variable-terms, 
Given a graded alphabet $\Gamma$ we denote again by $k-\pds(\Gamma), k-\var(\Gamma)$
the corresponding sets of graded pushdowns and variable (hoping that the context will make clear that we use
graded objects). We also denote by $k-\term(\Gamma,{\mathcal U}), k-\vterm(\Gamma,{\mathcal U})$ the sets of graded terms and variable terms.
\begin{example}
 Let $\Gamma=\{ A_1,A_2,A_3,B_1,B_2,B_3,C_1,C_2,C_3,D_1,D_2,D_3\},\;\;
 {\cal U} = \{\Omega_1, \Omega'_1,\Omega_2,\Omega'_2,\Omega_3,\Omega'_3$ be graded alphabets: for every $i \in [1,3]$,
 $$\Gamma_i = \{ A_i,B_i,C_i,D_i\},\;\; {\cal U}_i =\{ \Omega_i,\Omega'_i\}.$$
 Let
 $$T=A_1[A_2[A_3[\varepsilon]\Omega_3[\varepsilon]]B_2[D_3[\varepsilon]C_3[\varepsilon]]]\Omega_1[\varepsilon],\;\;
 T'= A_1[A_1[A_3[\varepsilon]]].\;\;$$
$$T''= A_2[A_3[\varepsilon]B_3[\varepsilon]\Omega_3]\Omega_2, \;\;T'''=A_3[\varepsilon]B_3[\varepsilon]\Omega_3$$
$$U =A_1[\Omega_2[A_3[\varepsilon]\Omega_3]], \;\;U'= A_1[A_2[A_3[\varepsilon]\Omega_2]]$$
$$U''=A_1[A_2[\varepsilon]B_2[\varepsilon]]$$
Here $T$ is a graded $3$-term,\\
$T'$ is not a graded 3-term because $A_1$ occurs at level $2$ in $T$,\\
$T''$ is a graded 2-term,$T'''$ is a graded 1-term.\\
$U$ is not a graded term because $\Omega_2$ occurs at a non-leaf position.\\
$U'$ is not a graded 3-term because $\Omega_2$ occurs at level $3$ in $U'$.\\
$U''$ is a graded 3-term, though it has depth only 2.\\
For the same reasons
$pTq$ is a graded $3$-variable-term while $pT'q$, $pUq$, $pU'q$ are not 3-variable-terms.
\end{example}
\paragraph{Substitutions}
\label{par-substitutions-terms}
Given $T[\Omega_1,\ldots,\Omega_n] \in k-\term(\Gamma,{\cal U})$, and $H_1,\ldots,H_n
\in k'-\term(\Gamma,{\cal U})$, we denote by $T[H_1/\Omega_1,\ldots,H_n/\Omega_n]$ the $(k+k'-1)$-term obtained by substituting
$H_i$ to $\Omega_i$ in $T$. Note that the new $(k+k'-1)$-pushdown thus obtained is really a term.
The map $T \mapsto T[H_1/\Omega_1,\ldots,H_n/\Omega_n]$ is extended as an homomorphism  $(k-\vterm(\Gamma,{\cal U}))^* \rightarrow ((k+k'-1)-\vterm(\Gamma,{\cal U}))^*$ by: for every $T \in k-\term(\Gamma,{\cal U})\setminus\{\varepsilon\}, p,q, \in Q$:
$$ (p,T,q)[H_1/\Omega_1,\ldots,H_n/\Omega_n]:=\;\;(p,T[H_1/\Omega_1,\ldots,H_n/\Omega_n],q).$$
When the alphabets $\Gamma, {\cal U}$ are graded, if the substitution replaces every undeterminate $\Omega_\lambda \in {\cal U}_{j_\lambda}$ by a term $H_\lambda  \in (k-j_\lambda+1)-\term(\Gamma,{\cal U})$, then the result 
$T[H_\lambda/\Omega_\lambda,\lambda \in [1,\Lambda]]$ is again a (graded) $k$-term.
\begin{example}
\label{ex-terms}
Let $\Gamma=\{ A_1,A_2,A_3,B_1,B_2,B_3,C_1,C_2,C_3,D_1,D_2,D_3 \},\;\;
{\cal U} = \{\Omega_1, \Omega'_1,\Omega_2,\Omega'_2,\Omega_3,\Omega'_3\}$.\\
Let $$T=A_1[A_2[A_3[\varepsilon]\Omega_3[\varepsilon]]B_2[D_3[\varepsilon]C_3[\varepsilon]]]\Omega_1[\varepsilon].$$
Using the short notation:
$$T=A_1[A_2[A_3\Omega_3]B_2[D_3C_3]]\Omega_1.$$
Let
$$H_1:= B_1[B_2[\Omega_3]],\;\; H_2 := C_2[A_3B_3\Omega'_3],\;\; H_3 := C_3C_3C_3\Omega_3$$
Then
$$T[H_1/\Omega_1,H_2/\Omega_2,H_3/\Omega_3] = A_1[A_2[A_3C_3C_3C_3\Omega_3]B_2[D_3C_3]]B_1[B_2[\Omega_3]]$$
Let $w := (p,T,q)(q,A_1[\Omega_2],p)$.
Then
$$w[H_1/\Omega_1,H_2/\Omega_2,H_3/\Omega_3] =
(p,A_1[A_2[A_3C_3C_3C_3\Omega_3]B_2[D_3C_3]]B_1[B_2[\Omega_3]],q)(q,A_1[C_2[A_3B_3\Omega'_3]],p)$$

\end{example}
The following ``substitution-principle'' is straightforward and will be widely used in our proofs.
Given some $k$-pda ${\mathcal A}$ over a pushdown alphabet included
in $\Gamma$, we extend the relations $\derivestar{{\mathcal A}}, \computstar{{\mathcal A}}$ to the pushdown alphabet $\Gamma \cup {\mathcal U}$. 
\begin{lemma}
Let $\vec{\Omega}=(\Omega_1,\ldots,\Omega_n),\;\; w \in (k-\vterm(\Gamma,{\cal U}))^*, w'\in \hat{\Gamma \cup {\cal U}})^*$.
If 
$$w \derivedir{{\mathcal A}} w'$$
then\\
1- $w' \in (k-\vterm(\Gamma,{\cal U}))^*$\\
2- for every $\vec{H} \in (k'-\term(\Gamma,{\cal U}))^n$,
$$w [\vec{H}/\vec{\Omega}] \derivedir{{\mathcal A}} w´  [\vec{H}/\vec{\Omega}].$$
These properties still hold for every level-partitionned automaton ${\cal A}$ and graded alphabets $\Gamma,{\cal U}$,
provided that the levels of ${\cal A}$ coincide with the graduations of the alphabets.
\label{substitution_principle}
\end{lemma}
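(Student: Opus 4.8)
The plan is to reduce everything to a single rewriting step applied to one letter and then to analyse the three shapes of productions in $P_{\mathcal A}$. Since $\derivedir{{\mathcal A}}$ is by definition the least relation containing $P_{\mathcal A}$ and compatible with left- and right-product, the hypothesis $w \derivedir{{\mathcal A}} w'$ means $w = w_1 X w_2$ and $w' = w_1 \rho w_2$ for a single production $X \derivedir{{\mathcal A}} \rho$ of $P_{\mathcal A}$ and some words $w_1,w_2$. Extending the substitution so that it fixes every terminal letter, it stays a homomorphism, hence $w[\vec{H}/\vec{\Omega}] = w_1[\vec{H}/\vec{\Omega}]\,X[\vec{H}/\vec{\Omega}]\,w_2[\vec{H}/\vec{\Omega}]$ and likewise for $w'$; moreover $X[\vec{H}/\vec{\Omega}]=(p,\omega[\vec{H}/\vec{\Omega}],q)$ is again a single variable. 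So it suffices to prove the two claims for the lone production $X \derivedir{{\mathcal A}} \rho$, and compatibility with products then lifts them back to $w \derivedir{{\mathcal A}} w'$.

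The central technical fact, which I would isolate as a preliminary observation, is that every elementary operation $\op\in\OP(k,\Gamma)$ commutes with leaf-substitution: for every term $\omega$ and every family $\vec{H}$,
$$\op(\omega)[\vec{H}/\vec{\Omega}] = \op(\omega[\vec{H}/\vec{\Omega}]).$$
This holds because the undeterminates occur only at the leaves (each $\Omega$ as $\Omega[\varepsilon]$), whereas $\pop_j$ merely erases an initial subpushdown and $\push_j(h)$ only duplicates an inner subpushdown while inserting new heads taken from $\Gamma$; copying a subtree and then substituting into the copies gives literally the same term as substituting first and then copying. In particular $\op(\omega)$ is again a genuine term, nothing from ${\cal U}$ being created and every undeterminate staying at a leaf, which already yields the well-formedness claim (1) for the transition rules. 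One checks at the same time that the read symbols are untouched: whenever a transition fires we have $\tops(\omega)\in\TOPSYMS(k,\Gamma)=\Gamma^{\le k}\setminus\{\varepsilon\}$, so the top symbols lie in $\Gamma$, sit strictly above every leaf, and therefore $\tops(\omega[\vec{H}/\vec{\Omega}]) = \tops(\omega)$.

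With these facts the case analysis is routine. For a transition rule $X=(p,\omega,q)\derivedir{{\mathcal A}}\bar{a}\,(p',\omega',q)$ arising from $(p,\bar{a},\omega)\computdir{{\mathcal A}}(p',\varepsilon,\omega')$ with $\omega'=\op(\omega)$, equality of top symbols enables the same transition from $\omega[\vec{H}/\vec{\Omega}]$, and the commutation identity gives $\omega'[\vec{H}/\vec{\Omega}]=\op(\omega[\vec{H}/\vec{\Omega}])$; hence $X[\vec{H}/\vec{\Omega}]\derivedir{{\mathcal A}}\bar{a}\,(p',\omega'[\vec{H}/\vec{\Omega}],q)=\rho[\vec{H}/\vec{\Omega}]$. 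The terminating form $X\derivedir{{\mathcal A}}\bar{a}$ is identical, the commutation identity ensuring that the operation still empties the store. For a decomposition rule $X=(p,\omega,q)\derivedir{{\mathcal A}}(p,\eta,r)(r,\eta',q)$ with $\omega=\eta\cdot\eta'$, the homomorphism property with respect to top-level concatenation gives $\omega[\vec{H}/\vec{\Omega}]=\eta[\vec{H}/\vec{\Omega}]\cdot\eta'[\vec{H}/\vec{\Omega}]$ with both factors nonempty terms, so the same rule applies to $X[\vec{H}/\vec{\Omega}]$ and produces $\rho[\vec{H}/\vec{\Omega}]$; since a top-level factor of a term is a term, this also settles claim (1) here. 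The graded, level-partitioned refinement needs no new idea: a level-$j$ operation of a level-partitioned ${\mathcal A}$ preserves the graduation, and substituting a graded $\Omega_\lambda\in{\cal U}_{j_\lambda}$ by a graded $(k-j_\lambda+1)$-term again yields a graded $k$-term by the remark following the definition of substitution, so every step above stays inside the graded world.

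The main obstacle I anticipate is the commutation identity for $\push_j$: one must be scrupulous that the inner subpushdown duplicated by a level-$j$ push carries its undeterminates along, and that performing the substitution before or after the copy gives exactly the same term — this is precisely where the hypotheses ``undeterminates only at leaves'' and ``the automaton pushes only $\Gamma$-symbols'' are both essential. Once that identity is nailed down, the remainder is bookkeeping with the homomorphism property of substitution and the closure of $\derivedir{{\mathcal A}}$ under products.
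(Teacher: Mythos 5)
Your proposal is correct and takes essentially the same route as the paper, which disposes of point (1) as straightforward and for point (2) gives only the key idea that, since $\Gamma \cap {\mathcal U} = \emptyset$, the undeterminates can be copied or erased during the derivation but cannot influence the sequence of rules used. Your reduction to a single production, the commutation identity $\op(\omega)[\vec{H}/\vec{\Omega}] = \op(\omega[\vec{H}/\vec{\Omega}])$ for $\pop_j$ and $\push_j(h)$, and the observation that $\tops(\omega[\vec{H}/\vec{\Omega}]) = \tops(\omega)$ whenever a transition fires are precisely a detailed formalization of that sketch.
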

In other words:\\
1- The relation $\derivedir{{\mathcal A}}$ saturates the set $(k-\vterm(\Gamma,{\cal U}))^*$.\\
2- Every substitution preserves the derivation relation.\\
By induction over the integer $m$, it also preserves all the relations $\derive{m}{{\mathcal A}}$ for $m \geq 0$.\\
Point (1) is straighforward. The key-idea for point (2) is that, as $\Gamma \cap {\mathcal U}=
\emptyset$, the symbols $\Omega_i$ can be copied or erased during the
derivation, but they cannot {\em influence} the sequence of rules used in
that derivation.
\paragraph{$k$-computable mappings}
\begin{definition}[$k$-computable mapping]
A mapping $f: A^* \mapsto B^*$ is called {\em $k$-computable}
iff there exists a  strongly deterministic $k$-$pda$ $\mathcal{A}$, over a pushdown-alphabet $\Gamma$
which is level-partitionned, such that $\Gamma$ contains $k-1$ symbols
 $\gamma_1\in \Gamma_1,\ldots,\gamma_i \in \Gamma_i,\ldots,\gamma_{k-1}\in \Gamma_{k-1}$, a state $q_0$, the alphabet $A$ is a subset of  $\Gamma_k$ and for all $w \in A^*$:
$$(q_0,f(w),\gamma_1[\gamma_2 \ldots [\gamma_{k-1}[w]]\ldots]) \computstar{{\mathcal A}} (q_0, \varepsilon,\varepsilon).$$
One denotes by $\mathbb{S}_k(A^*,B^*)$ the set of $k$-computable mappings from $A^*$ to $B^*$.
\label{def_Sk}
\end{definition}
The particular case where $\card(A)=\card(B)=1$ was studied in \cite{Fra-Sen06}.
\subsection{Regular sets of pushdowns of level $k$}
\label{sub_regularlevel-k}
The set $k-\pds(\Gamma)$ of all puhdowns of kevel $k$ over the alphabet $\Gamma$, can be considered as a language over the
following {\em extended alphabet} $\hat{\Gamma}$:\\
$$\hat{\Gamma} := \Gamma \cup \{ x,\bar{x}\},$$
where $x$ (resp. $\bar{x}$) denotes an opening (resp. closing) square bracket.
From this point of view the $3-\pds$ of example \ref{ex-pushdowns1}
$$\omega=A_1[A_2[A_3C_3]B_2[D_3C_3]]B_1[B_2[B_3D_3]].$$
is described by the word
$$A_1xA_2xA_3C_3\bar{x}B_2xD_3C_3\bar{x}\bar{x}B_1xB_2xB_3D_3\bar{x}\bar{x}\in \hat{\Gamma}^*.$$
One can check that, thus turned into a set of words, $k-\pds(\Gamma)$ is a regular subset of $\hat{\Gamma}^*$.
Let ${\mathcal A}=(Q,B,\Gamma,\delta)$ be some $k-\pda$.
For every states $p,q \in Q$, we define the language
$$P_{p,q}({\mathcal A}) := \{ w \in k-\pds(\Gamma)\mid  \exists u \in B^*, (p,w,q) \to_{\mathcal A}^* u\}.$$
\begin{lemma}
  \label{THE-fin-congre}
  Let ${\mathcal A}=(Q,B,\Gamma,\delta)$ be some $k-\pda$.
  There exists a congruence of finite index $\equiv_{\mathcal A}$ over $\hat{\Gamma}^*$ such that:\\
  - $\equiv_{\mathcal A}$ saturates the language $k-\pds(\Gamma)$\\
  - for every $p,q \in Q$, $\equiv_{\mathcal A}$ saturates the language $P_{p,q}({\mathcal A})$.
\end{lemma}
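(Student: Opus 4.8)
The plan is to construct a single finite automaton over the extended alphabet $\hat{\Gamma} = \Gamma \cup \{x,\bar{x}\}$ that simultaneously recognizes the structure of well-formed $k$-pushdowns and tracks enough behavioural information about the $k$-pda ${\mathcal A}$ to decide membership in each $P_{p,q}({\mathcal A})$. The required congruence $\equiv_{\mathcal A}$ will then be taken to be the syntactic congruence (or simply the Nerode–Myhill congruence) of that automaton, whose finite index yields the finite index of $\equiv_{\mathcal A}$ automatically. Two things must be captured by the states: first, that a word of $\hat{\Gamma}^*$ is a genuinely balanced $k$-pushdown (this is a regular property, as already noted in the text just before the lemma), and second, for a pushdown $w$, the set of pairs $(p,q)$ such that $(p,w,q)\derivestar{{\mathcal A}} u$ for some $u \in B^*$.

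First I would make precise the behavioural information to be recorded. Fix $w \in k\text{-}\pds(\Gamma)$ and consider the binary relation on $Q$ given by $R(w) := \{(p,q) \mid \exists u \in B^*,\ (p,w,q)\derivestar{{\mathcal A}} u\}$, i.e. $R(w) = \{(p,q) \mid w \in P_{p,q}({\mathcal A})\}$. Since $\card(Q)$ is finite, $R(w)$ ranges over the finite set $\BR(Q)$ of binary relations on $Q$. The heart of the argument is a compositionality claim: $R$ is compatible with the ``top-level concatenation'' of pushdowns, because the decomposition rule $(p,\omega,q)\derivedir{{\mathcal A}} (p,\eta,r)(r,\eta',q)$ (together with its converse, the combination of derivations at an intermediate state $r$) shows that for pushdowns $\eta,\eta'$ whose top-level concatenation $\eta\cdot\eta'$ is again a pushdown, one has $R(\eta\cdot\eta') = R(\eta)\circ R(\eta')$ in $\BR(Q)$. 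This is exactly the relational composition defined in \S\ref{subsec-sets}, so the map $w \mapsto R(w)$ behaves like a morphism into the finite monoid $\langle\BR(Q),\circ,\Id\rangle$.

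Next I would assemble the congruence. I take $\equiv_{\mathcal A}$ to be the coarsest congruence on $\hat{\Gamma}^*$ refining the partition induced by (i) the finite-state information certifying bracket-balance and correct nesting depth up to level $k$ — a regular, hence finite-index, datum — and (ii) on the class of well-formed pushdowns, the value $R(w)\in\BR(Q)$. Concretely one can build a deterministic finite automaton whose states are tuples recording a bounded stack-depth counter (capped at $k$, since deeper nestings are rejected) together with the accumulated relation in $\BR(Q)$, and then let $\equiv_{\mathcal A}$ be its transition (Nerode) congruence; finiteness of the state set gives finite index. Saturation of $k\text{-}\pds(\Gamma)$ follows from (i), and saturation of each $P_{p,q}({\mathcal A})$ follows from (ii), since $w \equiv_{\mathcal A} w'$ forces $R(w)=R(w')$ and hence $w \in P_{p,q} \Leftrightarrow w' \in P_{p,q}$.

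The main obstacle is establishing the compositionality identity $R(\eta\cdot\eta') = R(\eta)\circ R(\eta')$ rigorously, and in particular checking that concatenation at the \emph{top} level of a $k$-pushdown interacts correctly with the grammar $P_{\mathcal A}$ of \S\ref{par-derive-vs-comput}. The subtlety is that the operations of ${\mathcal A}$ can copy and rearrange sub-pushdowns, so one must verify that a derivation of $(p,\eta\cdot\eta',q)$ to a terminal word factors through some intermediate state $r$ reached precisely at the boundary between $\eta$ and $\eta'$ — this is guaranteed by the decomposition rule and by the fact, recorded in \S\ref{par-derive-vs-comput}, that $(p,\omega,q)\derivestar{{\mathcal A}} u \Leftrightarrow (p,u,\omega)\computstar{{\mathcal A}}(q,\varepsilon,\varepsilon)$, so that the machine must empty $\eta$ (reaching some state $r$) before it can touch $\eta'$. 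Making this factorization argument airtight — including the degenerate cases where $\eta$ or $\eta'$ is empty, and the level-$j$ copying behaviour of $\push_j$ — is where the bulk of the care is needed; once it is in place, the finite-index congruence is immediate from the finiteness of $\BR(Q)$ and the regularity of the bracket language.
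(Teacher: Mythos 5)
There is a genuine gap, and it sits exactly where you located ``the bulk of the care'': your finite-state construction does not actually compute $R(w)$ for $k \geq 2$. Your compositionality identity $R(\eta\cdot\eta') = R(\eta)\circ R(\eta')$ is correct, but it only handles \emph{horizontal} (top-level) concatenation of $k$-pushdowns. A finite automaton reading the encoding of $w$ over $\hat{\Gamma}$ must also be compositional \emph{vertically}, across the bracket nesting: when it finishes reading $\gamma\, x\, \mathrm{enc}(d)\, \bar{x}$ it must know $R(\gamma[d])$, having retained only a bounded summary of the inner $(k-1)$-pushdown $d$. A state consisting of a depth counter capped at $k$ plus an accumulated relation in $\BR(Q)$ cannot supply this: the map $d \mapsto R(\gamma[d])$ is not determined by the standalone termination relation of $d$, because $d$ is consumed \emph{in context} --- level-$j$ operations ($j \geq 2$) push onto and pop from $d$, and $\push_1$ duplicates $d$ wholesale, so the relevant summary of $d$ is a higher-order behaviour (roughly, a function describing how $d$ reacts to all possible sequences of such operations interleaved with state changes), not an element of $\BR(Q)$. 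Identifying a finite set of such summaries at each level, and proving that they compose correctly, is precisely the nontrivial content that your sketch leaves open; for $k=1$ your construction is the classical proof that the set of accepting stack contents of a pda is regular, but it does not lift to $k \geq 2$ by adding a depth counter.

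The paper sidesteps all of this by quoting Theorem 1, p.~218 of the Hague--Ong reference as a black box: regularity of the set of configurations co-accessible from a regular set of $k$-pushdowns, which immediately gives regularity of each $P_{p,q}({\mathcal A})$. From there the paper does what your second step does: take the syntactic congruence $\sim$ of $k\text{-}\pds(\Gamma)$ and the syntactic congruences $\sim_{{\mathcal A},p,q}$ of the languages $P_{p,q}({\mathcal A})$, and let $\equiv_{\mathcal A}$ be their (finite) intersection, a finite-index congruence with the stated saturation properties. So your assembly of the congruence and the finite-index bookkeeping are fine and match the paper; what is missing is the proof (or at least a citation) of the regularity of $P_{p,q}({\mathcal A})$ itself, which is a known but genuinely hard theorem about higher-order pushdown automata, proved by saturation techniques rather than by the direct product-automaton construction you propose. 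To repair your write-up in its own spirit you would have to develop, by induction on the level, a finite hierarchy of behaviour abstractions for inner stacks and prove their compositionality under $\push_j$, $\pop_j$ and copying --- essentially re-proving the cited theorem.
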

\begin{proof}
  According to Theorem 1, p. 218 of \cite{HagueOng07}, for every pda of order $k$, the set of co-accessible configurations
  (and, more generally, the set of all configurations that are co-accessible from a given regular set of $k$-pushdowns)
  is a regular subset of $k-\pds(\Gamma)$.
  This shows that every set $P_{p,q}({\mathcal A})$ is regular.\\
  Let $\sim$ (resp. $\sim_{{\mathcal A},p,q}$) be the syntactic congruence of the language $k-\pds(\Gamma)$ (resp. of the language $P_{p,q}({\mathcal A})$).
  Let us define
  $$\equiv_{\mathcal A} := \;\;\sim \;\cap \;(\bigcap_{(p,q) \in Q \times Q} \sim_{{\mathcal A},p,q}).$$ 
  Since each binary relation $\sim, \sim_{{\mathcal A},p,q}$ is a congruence of finite index, $\equiv_{\mathcal A}$ is a congruence of finite index
  (over the monoid $\hat{\Gamma}^*$) and it has the required saturation properties.
  \end{proof}
\subsection{Recurrences in monoids}
\label{subsec-rec-monoids}
When considering mappings into {\em words} instead of integers, one is lead to consider
recurrent relations based on the {\em concatenation} operation.
Let us define a more general notion of mapping defined by recurrent relations based on 
the {\em product} operation in some arbitrary monoid $(M,\cdot,1)$.
\begin{definition}[recurrent relations in $M$]
Given a finite  set $I$ and a family of mappings indexed by $I$,
$f_i: A^* \rightarrow M$ (for $i \in I$),
we call {\em system of recurrent relations in $\MM$} over the family $(f_i)_{i \in I}$, 
a system of the form
\begin{equation*}
f_i(aw)= \prod_{j=1}^{\ell(i,a)} f_{\alpha(i,a,j)}(w) \mbox{ for all } i\in I,a \in A, w \in A^*
\label{eq-monoidal-system}
\end{equation*}
where $\ell(a,i) \in \bbbn,\alpha(i,a,j) \in I$ and the symbol $\prod$ stands for 
the extension 
of the binary product in $\MM$ to an arbitrary finite number of arguments.
\label{monoidal_recurrence}
\end{definition}
When the monoid $\MM$ is a finitely generated {\em free monoid}, such a system is called a system of {\em catenative} recurrent relations.
The free monoid is used in the characterisation below of mappings of level $2$, while
the monoid $(\HOM(C^*,C^*),\circ,{\rm Id})$ will be suitable for studying
mappings of level $3$.
As an intermediate tool, we shall use the following more general notion.
\begin{definition}[regular recurrent relations]
Let $\equiv$ be a congruence of finite index on $A^*$. 
Let $\mathcal{C} = X^*/\equiv$, let $I$ be a finite set and $f_i: X^* \rightarrow M$ (for $i \in I$) be a family of mappings.
We call {\em system of regular recurrent relations in $\MM$} over the family $(f_i)_{i \in I}$
a system of the form
$$f_i(aw)= \prod_{j=1}^{\ell(i,a,d)} f_{\alpha(i,a,d,j)}(u_{i,a,d,j}w) \mbox{ for all } i\in I,a \in A, d \in \mathcal{C}, w \in d$$
where $\ell(i,a,d) \in \bbbn,\alpha(i,a,d,j) \in I, u_{i,a,d,j}\in A^*$.
\label{regular_monoidal_recurrence}
\end{definition}
The system is a system of {\em recurrent relations} (i.e. meets the conditions of 
Definition \ref{monoidal_recurrence}) when all the words $u_{i,x,c}$ 
have null length and the congruence $\equiv$ is maximal i.e. has only one class.
The system is called {\em noetherian} when the term-rewriting system consisting 
of all the oriented rules
$$f_i(aw) \rightarrow  \prod_{j=1}^{\ell(i,a,d)} f_{\alpha(i,a,d,j)}(u_{i,a,d,j}w)$$
for all $i\in I,a \in A, d \in \mathcal{C}, w \in d$, is noetherian
(in this rewriting system, both sides are understood as formal terms
\footnote{i.e. both sides of the rules are elements of the free monoid generated by 
the (infinite) set of terms of depth 1 $\{ f_i(w)\mid i \in I, w \in X^*\}$; 
the system is thus a so-called {\em monadic} semi-Thue system over this free monoid)}.
The system is called {\em strict} when all the words $u_{i,x,c}$ are empty.\\
Clearly every strict system is also noetherian.
In particular every system of recurrent relations is a noetherian system of regular recurrent relations, but the converse does not hold in general.
\subsection{Word-mappings of level 2}
\label{subsec-level2}
Let us recall a notion which originated in computational biology, but turns out 
to be useful in general formal language theory (see \cite{Kar-Roz-Sal97}).
\begin{definition}[HDT0L sequences]
Let $f: A^* \rightarrow B^*$. The mapping $f$ is called a HDT0L sequence iff there exists
a finite alphabet $C$,
a homomorphism $H: A^* \rightarrow \HOM(C^*,C^*)$, an homomorhism $h \in \HOM(C^*,B^*)$ and a letter
$c \in C$ such that, for every $w \in A^*$
$$f(w)= h( H^w (c)).$$
\label{hdt0l}
\end{definition}
(here we denote by $H^w$ the image of $w$ by $H$).
The mapping $f$ is called a DT0L when $B=C$ and the homomorphism $h$ is just the identity; $f$ is called a HD0L when $A$ is reduced to one element.\\

The following characterisation of word-mappings of level 2 is proved in 
\cite{Fer-Mar-Sen14}.
\begin{theorem}
Let us consider a mapping $f:A^* \rightarrow B^*$.
The following properties are equivalent:\\
1- $f \in \mathbb{S}_2(A^*,B^*)$\\
2- There exists a finite family $(f_i)_{i\in [1,n]}$ of mappings $A^* \rightarrow B^*$ 
which fulfils a noetherian system of regular recurrent relations of order 1 in $\langle B^*, \cdot,\varepsilon \rangle$ and such that $f=f_1$\\
3- There exists a finite family $(f_i)_{i\in [1,n]}$ of mappings $A^* \rightarrow B^*$ 
which fulfils a strict system of regular recurrent relations 
in $\langle B^*, \cdot,\varepsilon \rangle$ 
and such that $f=f_1$\\
4- $f$ is a HDT0L sequence.\\
5- There exists a finite family $(f_i)_{i\in [1,n]}$ of mappings $A^* \rightarrow B^*$ 
which fulfils a system of catenative recurrent relations and such that $f=f_1$.
\label{characterisation_level2}
\end{theorem}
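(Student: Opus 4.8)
The plan is to establish the cycle $(4)\Rightarrow(5)\Rightarrow(3)\Rightarrow(2)\Rightarrow(1)\Rightarrow(4)$, so that the delicate work is concentrated in the two arrows that involve the automaton. The three implications $(4)\Rightarrow(5)\Rightarrow(3)\Rightarrow(2)$ are purely formal. For $(4)\Rightarrow(5)$, given $f(w)=h(H^w(c))$ I set $g_\gamma(w):=h(H^w(\gamma))$ for each $\gamma\in C$; since $H^{aw}=H^a\circ H^w$ and each $H^w$ is a homomorphism, writing $H^a(\gamma)=\delta_1\cdots\delta_k$ gives $g_\gamma(aw)=\prod_{i=1}^{k}g_{\delta_i}(w)$, which is a system of catenative recurrent relations (Definition \ref{monoidal_recurrence}) in $\langle B^*,\cdot,\varepsilon\rangle$ whose component $g_c$ equals $f$. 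A catenative system is exactly a strict regular system for the one-class congruence (as observed after Definition \ref{regular_monoidal_recurrence}), giving $(5)\Rightarrow(3)$; and a strict system is noetherian and of order at most $1$, giving $(3)\Rightarrow(2)$.

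For $(2)\Rightarrow(1)$ I would realise a noetherian regular recurrence of order $1$ by a strongly deterministic $2$-pda $\mathcal A$. The argument $w$ is stored as the level-$2$ content of a single level-$1$ cell; the finite control, together with the current top symbol, records the index $i\in I$ and the class $d=[w]_{\equiv}$, which is legitimate since $\equiv$ is of finite index and hence computable from $w$ by finite memory. A step of the recurrence $f_i(aw)=\prod_j f_{\alpha(i,a,d,j)}(u_{i,a,d,j}w)$ reads the top letter $a$, duplicates $w$ once per factor by the level-$1$ copying operation, tags the copies with the indices $\alpha(i,a,d,j)$, and installs the prefixes $u_{i,a,d,j}$ (of length $\le 1$, by the order-$1$ hypothesis) by a level-$2$ push; the pending factors are then processed left to right and their outputs concatenated. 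The noetherian hypothesis ensures that the induced rewriting terminates, so $\mathcal A$ halts after emitting a finite output, and the single-valuedness of the recurrence yields strong determinism.

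The crux is $(1)\Rightarrow(4)$: recovering a finite HDT0L presentation from a strongly deterministic $2$-pda $\mathcal A$. I would pass to the context-free grammar associated with $\mathcal A$, the relation $\derivedir{\mathcal A}$ on the variables $(p,\omega,q)$, for which (\S\ref{par-derive-vs-comput}) $f(w)$ is the unique terminal word with $(q_0,\gamma_1[w],q_0)\derivestar{\mathcal A}f(w)$. This grammar is infinite, but Lemma \ref{THE-fin-congre} supplies a congruence $\equiv_{\mathcal A}$ of finite index on $\hat\Gamma^*$ that saturates every language $P_{p,q}(\mathcal A)$; consequently the derivation behaviour of a variable $(p,\gamma[s],q)$ depends on the carried word $s$ only through its finite class. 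Analysing how the derivation of $(p,\gamma[aw],q)$ decomposes — the level-$1$ copying of $\mathcal A$ becoming concatenation, and the successive transformations of $w$ becoming a composition of letter-indexed homomorphisms — I would read off an alphabet $C$ whose letters are these finitely many types, homomorphisms $H^a\in\HOM(C^*,C^*)$, a terminal homomorphism $h$ and a letter $c$ with $f(w)=h(H^w(c))$. The substitution principle (Lemma \ref{substitution_principle}) is the workhorse here, since it guarantees that the chosen decomposition commutes with replacing $w$ by $aw$.

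I expect this extraction to be the main obstacle: one must verify that the finitely many classes of level-$1$ contents, paired with the state set, really carry a single homomorphism $H^a$ per input letter, so that iterating the letters of $w$ reproduces the iterated pushdown computation exactly and independently of the derivation actually chosen. Strong determinism is essential throughout, as it makes $f$ single-valued and the grammar unambiguous, so that the extracted object is a deterministic (HDT0L) system rather than a merely context-free one; the noetherian hypothesis plays the dual role, in $(2)\Rightarrow(1)$, of excluding the non-terminating recurrences that would escape $\mathbb{S}_2(A^*,B^*)$. The remaining bookkeeping — checking well-definedness of $H^a$ and of $h$ on the type alphabet — should be routine once the finite-index congruence of Lemma \ref{THE-fin-congre} is in place.
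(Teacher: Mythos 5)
Your route differs from the paper's: the paper obtains $(1)\Rightarrow(3)\Rightarrow(4)\Rightarrow(5)\Rightarrow(1)$ by citing Lemma 27 of \cite{Fer-Mar-Sen14}, and its only self-contained content is the bridge $(2)\Leftrightarrow(3)$, proved by normalizing each right-hand side $R_{i,a}(d)$ to an irreducible form modulo a rewriting system in which the classes $d\in{\cal C}$ act as undeterminates (using the self-extension $\hat{\equiv}$ of \S\ref{par-prelimin-congruences}, with noetherianness transferred from $\Sys_A$ to $\Sys_{\cal C}$ via the smallest representative $w_d$). Your cycle $(4)\Rightarrow(5)\Rightarrow(3)\Rightarrow(2)\Rightarrow(1)\Rightarrow(4)$ bypasses that normalization entirely, but at the price of having to prove the two automaton implications from scratch --- and these are exactly where your proposal has genuine gaps. (The formal arrows $(4)\Rightarrow(5)\Rightarrow(3)\Rightarrow(2)$ are fine, and your computation $g_\gamma(aw)=\prod_i g_{\delta_i}(w)$ is correct under the paper's composition convention $(f\circ g)(x)=g(f(x))$.)

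First, in $(2)\Rightarrow(1)$ your justification that the finite control can record $d=[w]_\equiv$ because $\equiv$ ``is of finite index and hence computable from $w$ by finite memory'' fails as stated: when a rule $f_i(aw)=\prod_j f_{\alpha(i,a,d,j)}(u_{i,a,d,j}w)$ must be fired, $w$ is the unread level-$2$ content of the cell, and the automaton sees only its top symbol; a finite control cannot know the class of the whole suffix without destructively scanning it. The standard repair is a copy-and-scan device (a $\push_1$ duplicates the cell, one copy is consumed by $\pop_2$ steps to compute $[w]_\equiv$ in the finite quotient monoid, then discarded), after which strong determinism and correctness of the interleaved scans must be re-verified; none of this appears in your text. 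Second, and more seriously, $(1)\Rightarrow(4)$ is only a program, not a proof. What the grammar analysis via Lemma \ref{THE-fin-congre} naturally yields is a family of \emph{class-indexed} recurrences --- that is, condition (3), not (4): the derivation behaviour of $(p,\gamma[aw],q)$ depends not only on $a$ but on the class of the remaining suffix $w$ (through the saturation of the languages $P_{p,q}({\mathcal A})$, exactly as at level $3$ the sets $\Al(i,w)$ depend on $[w]_{\equiv_3}$), so one does \emph{not} directly read off a single homomorphism $H^a$ per input letter. You flag this verification yourself as ``the main obstacle'' and leave it open; closing it requires folding the finite class information into the DT0L alphabet and exploiting that classes compose as letters are prepended --- precisely the content of Lemma 27 of \cite{Fer-Mar-Sen14} that the paper invokes instead of reproving. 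As it stands, both hard arrows of your cycle are sketches with at least one incorrect justification, so the proposal does not yet constitute a proof of the theorem.
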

\begin{proof}
  Lemma 27 of \cite{Fer-Mar-Sen14} proves that:
  $$(1) \Rightarrow (3)\Rightarrow (4) \Rightarrow (5) \Rightarrow (1).$$
  Implication $(3) \Rightarrow (2)$ is obvious.\\
  Let us suppose that (2) holds:\\
$$f_i(aw)= \prod_{j=1}^{\ell(i,a,d)} f_{\alpha(i,a,d,j)}(u_{i,a,d,j}w) \mbox{ for all } i\in I,a \in A, d \in \mathcal{C}, w \in d$$
  where $\ell(i,a,d) \in \bbbn,\alpha(i,a,d,j) \in I, u_{i,a,d,j}\in A^*$.
  Let us consider ${\cal C}= A^*/\equiv$ as a set of undeterminates.
Let us define:
$$R_{i,a}(d):= \prod_{j=1}^{\ell(i,a,d)} f_{\alpha(i,a,d,j)}(u_{i,a,d,j}d) \mbox{ for all } i \in I,a \in A, d \in \mathcal{C}.$$
We denote by $\Sys_{A}$ the rewriting system consisting of the set of rules
$$ f_i(a\cdot w) \rightarrow R_{i,a}(d)[w/d]\;\;\mbox{ for } d \in {\cal C},w \in d.$$
By hypothesis (2), this system ${\cal S}_A$ is noetherian.\\
  Let us denote by $\Sys_{\cal C}$ the rewriting system consisting of the set of rules
  $$ f_i(a \cdot w) \rightarrow R_{i,a}(d)[w/d]\;\;\mbox{ for } d \in {\cal C}, w \in A^*\cdot {\cal C}, w \equiv d,$$
  where $\equiv$ is the self-extension of $\equiv$ to $(A \cup{\cal C})^*$ (defined in \S \ref{par-prelimin-congruences}).
  Let us define, for every $d \in {\cal C}$ , $w_d$ as the smallest element of $d$.
  We claim that, if
  $$f_i(w \cdot d) \rightarrow_{\Sys_{\cal C}}^n  \prod_{j=1}^\ell f_{i_j}(w_j\cdot d)$$
    then
 $$f_i(w \cdot w_d) \rightarrow_{\Sys_{A}}^n \prod_{j=1}^\ell f_{i_j}(w_j\cdot w_d).$$
Hence the rewriting-system $\Sys_{\cal C}$ is noetherian too.
It follows that there exists some irreducible expression $S_{i,a}(d)$ obtained from $R_{i,a}(d)$
by a derivation $\pmod{\Sys_{\cal C}}$.
The new system of relations:
$$f_i(aw)= S_{i,a}(d)[w/d] \mbox{ for all } i\in I,a \in A, d \in \mathcal{C}, w \in d$$
is a {\em strict} system of regular recurrent relations in $\langle B^*, \cdot,\varepsilon \rangle$
which is fulfilled by the family $(f_i)_{i \in I}$.
Hence $(2) \Rightarrow (3)$.\\
\end{proof}

\section{Word-mappings of level 3}
\label{level3}
Let us prove the main result of this paper.

\begin{theorem}
Let us consider a mapping $f:A^* \rightarrow B^*$.
The following properties are equivalent:\\
1- $f \in \mathbb{S}_3(A^*,B^*)$\\
2- There exists a finite family $(H_i)_{i\in [1,n]}$ of mappings $A^* \rightarrow \HOM(C^*,C^*)$ 
which fulfils a system of recurrent relations in $\langle \HOM(C^*,C^*),\circ,{\rm Id}\rangle$,
an element $h \in \HOM(C^*,B^*)$ and a letter $c \in C$ such that, for every $w \in A^*$:
$$f(w)=h(H_1(w)(c)).$$
3- $f$ is a composition of a DT0L sequence $g: A^* \rightarrow C^*$ by a HDT0L sequence $h: C^* \rightarrow B^*$.
\label{characterisation_level3}
\end{theorem}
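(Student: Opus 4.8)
The plan is to prove the cycle $(1) \Rightarrow (2) \Rightarrow (3) \Rightarrow (1)$, with $(1) \Rightarrow (2)$ carrying essentially all the work and the remaining two implications being an unfolding and a simulation. First I would fix a strongly deterministic, level-partitioned $3$-pda $\mathcal{A}$ realising $f$, so that $(q_0, f(w), \gamma_1[\gamma_2[w]]) \computstar{\mathcal{A}} (q_0,\varepsilon,\varepsilon)$, and separate the \emph{innermost} level, which carries the input word $w$ (a word over $A \subseteq \Gamma_3$), from the two outer levels, which form a working store that is itself a level-$2$ pushdown, i.e.\ a word over the set of its outermost brackets. The idea is to track how consuming $w$ one letter at a time transforms this working store. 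Since an outer push operation \emph{duplicates} the sub-store it copies, reading the head letter of $aw$ spawns finitely many independent continuations, each of which goes on to process its own copy of the tail $w$. Applying the substitution principle (Lemma \ref{substitution_principle}) with an innermost undeterminate $\Omega$ standing for the as-yet-unread tail guarantees that these continuations cannot depend on $w$ itself, only on a finite ``continuation type''; and Lemma \ref{THE-fin-congre} (regularity of the co-accessible configurations) bounds both the number of continuation types and the number of working-store classes. Concretely, I would let $C$ be the finite set of $\equiv_{\mathcal{A}}$-classes of level-$2$ stores and $[1,n]$ index the continuation types.

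Then I would define $H_i(w) \in \HOM(C^*,C^*)$ to be the homomorphism that processing the tail $w$ from continuation type $i$ induces on the word-over-$C$ describing the working store, and verify the recurrence $H_i(aw) = \prod_{j} H_{\alpha(i,a,j)}(w)$ in $\langle \HOM(C^*,C^*),\circ,\Id\rangle$: the product is exactly the composition of the transformations attached to the continuations created when the head letter $a$ is read, and the grammar derivation $(q_0,\gamma_1[\gamma_2[a\Omega]],q_0) \derivestar{\mathcal{A}} \cdots$ exhibits these continuations as variable-terms $(p_j,\Theta_j[\Omega],q_j)$ whose $\equiv_{\mathcal{A}}$-classes supply the indices $\alpha(i,a,j)$. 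Taking $c$ to encode the initial store and $h \in \HOM(C^*,B^*)$ to read off the emitted terminal letters then gives $f(w)=h(H_1(w)(c))$. That each macro-step transformation is genuinely a homomorphism is where Theorem \ref{characterisation_level2} is invoked: between two input reads the machine performs only level-$1$ and level-$2$ operations, so a macro-step is a level-$2$ computation and hence an HDT0L sequence. The main obstacle is precisely this bookkeeping --- proving finiteness of the family and that the level-$3$ store evolution projects onto a catenative recurrence in $\HOM(C^*,C^*)$ --- and it rests on the self-extension of congruences to terms-with-undeterminates (\S\ref{par-prelimin-congruences}) together with Lemmas \ref{substitution_principle} and \ref{THE-fin-congre}.

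For $(2) \Rightarrow (3)$ I would unfold the recurrence. Writing $\theta_i := H_i(\varepsilon)$, the recurrence shows that $H_1(w)$ is a composite $\theta_{i_1}\circ\cdots\circ\theta_{i_s}$ whose index sequence $d(w):=i_1\cdots i_s \in [1,n]^*$ is generated by the single iterated homomorphism $\rho: A^* \to \HOM([1,n]^*,[1,n]^*)$ with $\rho^a(i)=\alpha(i,a,1)\cdots\alpha(i,a,\ell(i,a))$; thus $d$ is a DT0L sequence $A^* \to [1,n]^*$, which I take as the map $g$ of clause (3), with intermediate alphabet $[1,n]$. The evaluation $\Psi(i_1\cdots i_s):=(\theta_{i_1}\circ\cdots\circ\theta_{i_s})(c)$ is itself a DT0L sequence $[1,n]^* \to C^*$ (iterated homomorphism $i\mapsto\theta_i$, start symbol $c$), so its post-composition $h\circ\Psi$ with the homomorphism $h$ is an HDT0L sequence $[1,n]^* \to B^*$. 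Since $f(w)=h(\Psi(d(w)))$, this exhibits $f$ as a DT0L sequence followed by an HDT0L sequence, which is (3).

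For $(3) \Rightarrow (1)$, both $g$ (a DT0L, hence HDT0L) and the HDT0L map $h$ belong to $\mathbb{S}_2$ by Theorem \ref{characterisation_level2}, so it suffices to establish the closure $\mathbb{S}_2\circ\mathbb{S}_2\subseteq\mathbb{S}_3$. I would build a $3$-pda for $f$ by \emph{nesting}: the two outer pushdown levels simulate the strongly deterministic $2$-pda for $h$ consuming the word $g(w)$, while the innermost level holds the input $w$ and, on demand, runs the strongly deterministic $2$-pda for $g$ to furnish the next letter of $g(w)$. The two two-level computations overlap in one level, so they combine into a single level-$3$ computation with the input correctly placed at the innermost level; strong determinism of the two components yields strong determinism of the product, giving $f\in\mathbb{S}_3(A^*,B^*)$ and closing the cycle.
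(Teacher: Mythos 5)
Your global architecture --- the cycle $(1)\Rightarrow(2)\Rightarrow(3)\Rightarrow(1)$, the unfolding of the compositional recurrence into an index DT0L $d(w)=\rho^w(1)$ followed by an evaluation DT0L post-composed with $h$, and the reduction of $(3)\Rightarrow(1)$ to the closure $\mathbb{S}_2\circ\mathbb{S}_2\subseteq\mathbb{S}_3$ --- matches the paper's outline (the paper obtains that closure by adapting Proposition 70 of \cite{Fra-Sen06}, and your nesting sketch is at roughly the same level of detail as the paper's own citation-plus-adaptation). The genuine gap is in $(1)\Rightarrow(2)$, which carries essentially all of the paper's work. Your claimed recurrence $H_i(aw)=\prod_j H_{\alpha(i,a,j)}(w)$ presupposes that, once the head letter $a$ of the innermost word is consumed, every spawned continuation processes the \emph{unchanged} tail $w$. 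This fails for a general $3$-pda: $\push_3(\omega)$ moves rewrite the innermost word $aw$ into $\omega\cdot w$ with $\omega\in\Gamma_3^{\leq 2}$ (normalization (PI) bounds their size but cannot eliminate them --- removing all level-$3$ pushes would essentially presuppose the theorem), so the true relation links $H_i^{aw}$ to homomorphisms with exponents $\omega_{i,a,d,W,j}\cdot w$ rather than $w$; moreover which transition fires, the intermediate state in the $\push_1$ case, and the non-emptiness conditions all depend on the $\equiv_3$-class $d$ of $w$. Hence the construction yields only a \emph{regular} recurrent system with nonempty prefix words (Definition \ref{regular_monoidal_recurrence}), i.e., condition (2) of Theorem \ref{techcharacterisation_level3}, and one must then \emph{prove} this system noetherian before the $(2)\Rightarrow(3)$ machinery of Theorem \ref{characterisation_level2} can convert it into a strict, then plain, system. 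That termination proof --- relating ${\cal C}$-derivations to typed refinements of fixed ${\cal A}$-derivation-paths and exploiting strong determinism (Lemmas \ref{l-mixedrule_increases_the_type}--\ref{compositionnal_sys2_isnotherian}, \S\ref{subsub-termination}) --- is the technical heart of the paper and is entirely absent from your proposal; indeed your assertion that ``between two input reads the machine performs only level-$1$ and level-$2$ operations'' is precisely the point at which the argument breaks.

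Two further defects in the same implication. First, even setting $\push_3$ aside, the raw recurrence is not compositional but \emph{mixed}: in the $\push_2$ case the right-hand side has the shape $H_\alpha^{\omega w}\circ\Phi\circ H_\beta^{\omega w}$ evaluated on a letter, where the index $\beta$ depends on the \emph{word} $\Phi(H_\alpha^{\omega w}(V))$ (equation (\ref{mixed_recurrence}), Case 1.3); the paper must pass to the extended alphabet $\hat{\cal W}$ and the auxiliary homomorphisms $\psi_{*},\hat{\psi}_{*},\theta_{*},\hat{\theta}_{*}$ (equations (\ref{eq-technical-homo1}), (\ref{eq-technical-homo2})) to obtain a purely compositional system. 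Your finite indexing by ``continuation types'' accounts neither for these value-dependent indices nor for the paper's index set $I_0={\cal P}({\cal W})\times\Gamma_2$, whose subset component records on which letters the action is live. Second, your appeal to Theorem \ref{characterisation_level2} to show that a macro-step acts homomorphically is misplaced: homomorphy is automatic from the context-free decomposition rules and the substitution principle (Lemma \ref{substitution_principle}); Theorem \ref{characterisation_level2} is needed at the opposite end, to pass from the noetherian regular system to a strict one and thence to the DT0L/HDT0L decomposition. Your $(2)\Rightarrow(3)$ unfolding and $(3)\Rightarrow(1)$ closure are correct as stated (modulo the composition convention $(f\circ g)(x)=g(f(x))$ used by the paper), but they constitute the easy part; as it stands the proposal does not prove $(1)\Rightarrow(2)$ for any automaton that uses $\push_3$.
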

We prove this theorem through the more technical statement
\begin{theorem}
Let us consider a mapping $f:A^* \rightarrow B^*$.
The following properties are equivalent:\\
1- $f \in \mathbb{S}_3(A^*,B^*)$\\
2- There exists a finite family $(H_i)_{i\in [1,n]}$ of mappings $A^* \rightarrow \HOM(C^*,C^*)$ 
which fulfils a {\em noetherian} system of {\rm regular} recurrent relations in $(\HOM(C^*,C^*),\circ,{\rm Id})$,
an element $h \in \HOM(C^*,B^*)$ and a letter $c \in C$ such that, for every $w \in A^*$:
$$f(w)=h(H_1(w)(c)).$$
3- There exists a finite family $(H_i)_{i\in [1,n]}$ of mappings $A^* \rightarrow \HOM(C^*,C^*)$ 
which fulfils a {\em strict} system of {\rm regular} recurrent relations in $(\HOM(C^*,C^*),\circ,{\rm Id})$,
an element $h \in \HOM(C^*,B^*)$ and a letter $c \in C$ such that, for every $w \in A^*$:
$$f(w)=h(H_1(w)(c)).$$
4- $f$ is a composition of a DT0L sequence $g: A^* \rightarrow C^*$ by a HDT0L sequence $h: C^* \rightarrow B^*$.\\
5- There exists a finite family $(H_i)_{i\in [1,n]}$ of mappings $A^* \rightarrow \HOM(C^*,C^*)$ 
which fulfils a system of recurrent relations in $(\HOM(C^*,C^*),\circ,{\rm Id})$,
an element $h \in \HOM(C^*,B^*)$ and a letter $c \in C$ such that, for every $w \in A^*$:
$$f(w)=h(H_1(w)(c)).$$
\label{techcharacterisation_level3}
\end{theorem}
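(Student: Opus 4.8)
The plan is to establish the cycle of implications $(1)\Rightarrow(2)\Rightarrow(3)\Rightarrow(4)\Rightarrow(5)\Rightarrow(1)$, treating $(1)\Rightarrow(2)$ as the genuine content and deriving the remaining implications by elementary manipulations that mirror the level-2 argument of Theorem~\ref{characterisation_level2}. For the implications among $(2),(3),(5)$ I would work inside the monoid $\langle\HOM(C^*,C^*),\circ,\Id\rangle$ exactly as Theorem~\ref{characterisation_level2} works inside $\langle B^*,\cdot,\varepsilon\rangle$: the passage $(2)\Rightarrow(3)$ from a noetherian regular system to a \emph{strict} regular system is obtained by the same self-extension-congruence rewriting argument (reduce each right-hand side $R_{i,a}(d)$ to an irreducible form $S_{i,a}(d)$ via the noetherian system $\Sys_{\cal C}$, using the congruence $\hat\equiv$ of \S\ref{par-prelimin-congruences}), and $(3)\Rightarrow(5)$ is immediate since a strict system is a particular system of recurrent relations in the sense of Definition~\ref{monoidal_recurrence} once the congruence is collapsed to a single class. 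The equivalence of Theorem~\ref{characterisation_level3} with the present statement then follows because its clauses $(2),(3)$ are literally clauses $(5),(4)$ here.

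Next I would address $(3)\Leftrightarrow(4)$, the translation between a strict system of recurrent relations in $\HOM(C^*,C^*)$ and a composition $g$ of a DT0L by a HDT0L $h$. In one direction, a strict regular system over the family $(H_i)_{i\in[1,n]}$ is precisely a morphism $A^*\to\HOM(D^*,D^*)$ (where $D$ encodes the finite index set $I$ together with the letters of $C$): applying Theorem~\ref{characterisation_level2} with the target monoid $\HOM(C^*,C^*)$ in place of a free monoid, one reads off a DT0L sequence $g:A^*\to D^*$ producing the iterate, and the final homomorphism $h\in\HOM(C^*,B^*)$ supplies the HDT0L layer. The converse bundles the two DT0L/HDT0L layers back into a single recurrence in $\HOM(C^*,C^*)$. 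These steps are bookkeeping: the real point is that a \emph{system of recurrent relations in} $\HOM(C^*,C^*)$ is nothing but a collection of iterated homomorphisms, so the Lindenmayer reformulation is a change of vocabulary.

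The serious implication, and the one I expect to consume the bulk of the proof, is $(1)\Rightarrow(2)$. Here I would start from a strongly deterministic level-partitioned $3$-pda $\mathcal A$ computing $f$ and build, from its associated context-free grammar (the derivation relation $\derivestar{\mathcal A}$ on variable-terms of level $3$), a finite family of mappings $H_i:A^*\to\HOM(C^*,C^*)$ satisfying a noetherian regular recurrence. The idea is that a level-$3$ pushdown is a term whose level-$1$ structure governs a \emph{sequence} of level-$2$ pushdowns, and each level-$2$ pushdown, via Theorem~\ref{characterisation_level2}, contributes a homomorphism on $C^*$; the level-$1$ dynamics then composes these homomorphisms. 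Concretely I would use the finite congruence $\equiv_{\mathcal A}$ of Lemma~\ref{THE-fin-congre} to index the regular recurrence by congruence classes of pushdowns, the substitution principle (Lemma~\ref{substitution_principle}) to propagate derivations through the undeterminates that stand for suspended sub-pushdowns, and the self-extension construction of \S\ref{par-prelimin-congruences} to extend $\equiv_{\mathcal A}$ to pushdowns-with-undeterminates. The main obstacle is organizing the reduction so that the level-$1$ recurrence lands in $\HOM(C^*,C^*)$ while the nested level-$2$ behaviour is absorbed into the homomorphisms $H_i(w)$ themselves, and proving the resulting system is noetherian—this is exactly where the finite index of $\equiv_{\mathcal A}$ and the strong determinism of $\mathcal A$ must be combined to bound the termination of the induced term-rewriting system. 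Once $(1)\Rightarrow(2)$ is in place, the final implication $(5)\Rightarrow(1)$ is obtained by simulating the iterated homomorphisms by a level-$3$ pda, reversing the construction.
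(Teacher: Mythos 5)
Your global skeleton does agree with the paper's: the same cycle of implications, $(1)\Rightarrow(2)$ as the sole hard step, the chain $(2)\Rightarrow(3)\Rightarrow(4)\Rightarrow(5)$ delegated to Theorem~\ref{characterisation_level2} (the paper is equally terse there), and $(5)\Rightarrow(1)$ by composition of two level-$2$ maps --- the paper concludes via the adapted Proposition 70 of \cite{Fra-Sen06} rather than by re-simulating with a pda, but that difference is cosmetic. One local claim is wrong as written: $(3)\Rightarrow(5)$ is \emph{not} ``immediate'', since a strict regular system still carries the class parameter $d$ inside $\alpha(i,a,d,j)$, which Definition~\ref{monoidal_recurrence} forbids; eliminating it is exactly the content of the detour through $(4)$ --- lift to the free monoid $I^*$ (for a strict system all values lie in the submonoid generated by the $H_j(\varepsilon)$), apply the level-$2$ implications there, then push forward along $j\mapsto H_j(\varepsilon)$. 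Since you sketch precisely such a lift for $(3)\Leftrightarrow(4)$, this is repairable.

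The genuine gap is $(1)\Rightarrow(2)$: your text names the paper's preliminary tools (Lemma~\ref{THE-fin-congre}, Lemma~\ref{substitution_principle}, the self-extension of \S\ref{par-prelimin-congruences}) but supplies neither a construction nor a termination proof, and your guiding picture is oriented wrongly. In the paper the compositional recurrence is driven by the letters $a\in\Gamma_3$ of the \emph{innermost} word (where the input sits, cf.\ Definition~\ref{def_Sk}); Theorem~\ref{characterisation_level2} is never invoked in the construction; and the homomorphisms $H_i^w$ are defined directly from $\mathcal{A}$-derivations $(pS[T[w]\Omega]q)\rightarrow_{\mathcal{A}}^{+}\prod_j (p_{i,j}S_{i,j}[\Omega]q_{i,j})$ filtered by the non-emptiness condition (\ref{nonemptiness_condition}), over the alphabet ${\cal W}$ of triples $pS[e]q$ with $e$ a $\equiv_2$-class, with index set $I_0={\cal P}({\cal W})\times\Gamma_2$ so that the recurrence can restrict the active letters $\Al(i,w)$. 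Entirely absent from your proposal are: (i) the five-case analysis of Lemma~\ref{l-mixed_recurrence_exists}, in particular the $\push_2$ case, which forces a genuine composition $H_\alpha\circ\Phi\circ H_\beta$ together with the class-shift $e'=[T''[\,\cdot\,]]_{\equiv_2}\cdot e$ --- the one place where the recurrence becomes compositional rather than catenative; (ii) the conversion of this \emph{mixed} recurrence (products in ${\cal W}^*$ and compositions in $\HOM({\cal W}^*,{\cal W}^*)$) into the purely compositional system (\ref{compositionnal_recurrence21}--\ref{compositionnal_recurrence22}) over the enlarged alphabet $\hat{\cal W}$ with the marker homomorphisms $\psi,\theta,\hat\theta,\hat\psi$; (iii) the noetherianity proof, which is not a generic consequence of ``finite index plus strong determinism'' but the specific machinery of Lemmas~\ref{l-mixedrule_increases_the_type}--\ref{compositionnal_sys2_isnotherian}: typing derivations by elements of $\bbbN\langle{\cal T}\rangle$ and showing that each ${\cal C}_1$-step refines a fixed $\mathcal{A}$-derivation-path while adding one $\bar\tau$, so the number of ${\cal C}_1$-steps is bounded by the length of a single terminating $\mathcal{A}$-derivation, finite precisely by strong determinism. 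Your closing sentence that finite index and strong determinism ``must be combined to bound the termination'' states the goal rather than achieving it; as written, the proposal could not be completed without inventing all of (i)--(iii).
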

We first prove that $(1) \Rightarrow (2)$, then we prove that $(2) \Rightarrow (3)\Rightarrow (4)\Rightarrow (5)$
and finally that $(5) \Rightarrow (1)$.\\
Let us consider some $f \in \mathbb{S}_3(A^*,B^*)$ and 
let us show that it fulfils condition (2).\\
Let $\mathcal{A}=(Q,B,\Gamma,\delta,q_0,Z_0)$ be a strongly deterministic 
3-iterated pushdown automaton, such that 
$\mathcal{A} \subseteq \Gamma$ and  $\mathcal{A}$ computes $f$.
We can normalize the automaton $\mathcal{A}$ in such a way that:\\
\begin{description}
\item{(LP)} $\mathcal{A}$ is level-partitioned (see \S \ref{par-normal-aut})
\item{(RL)} the only transitions that read a letter from $B$  are of the form:\\
$$ \delta(p,b,S)=(q,pop_1)$$
for some $p,q \in Q, b \in B, S \in \Gamma_1$.
\item{(PI)} the only push intructions occuring in $\delta$ are of the form 
$\push_j(\gamma)$ with $\gamma \in \Gamma^2$.
\end{description}
Let us define from this automaton a finite alphabet $\hat{\cal W}$ and a family of mappings $H_i: \Gamma^* \rightarrow \HOM(\hat{\cal W}^*,\hat{\cal W}^*)$ that is closely related to the computations of $\mathcal{A}$ and fulfils a
{\em noetherian system of regular recurrent relations of order $1$ in $\HOM(\hat{\cal W}^*,\hat{\cal W}^*)$}.\\
\paragraph{General idea}
\label{par-genral-idea}
The general idea that guides our proof of $(1) \Rightarrow (2)$ in Theorem \ref{techcharacterisation_level3} is as follows.
Every $2-\pds$ atom $T[w]$ (where $T$ has level 2 and $w$ is a product of symbols of level 3) {\em acts} on the set of $3-\pds$ by:
$$ S[u] \mapsto S[T[w]u].$$
We represent all the possible $S[u]$ by a finite alphabet ${\cal W}$, in such a way that the family of actions of $T[w]$ (for $T \in \Gamma_2$) is represented by a single homomorphism $H_i^w: {\cal W}^* \rightarrow {\cal W}^*$.
The final outcome is that, to every word $w \in \Gamma_3^*$ is associated a finite family of homomorphisms $H_i^w: {\cal W}^* \rightarrow {\cal W}^*$ with $i \in I$ , where $I$ is a finite set of indices that encode the various letters $T \in \Gamma_2$ and various subsets
of ${\cal W}$ on which the action is restricted ( with additional homomorphisms that do not depend on the argument $w$ but
ease the definition of a compositional recurrence that links the $H^{aw}_i$ with the $H^{w}_i$ for letters $a \in \Gamma_3$,
see the defining equations (\ref{eq-technical-homo1},\ref{eq-technical-homo2})).
\subsection{Alphabets $E$ and  ${\cal W}$}
\label{sub-alphabetW}
\paragraph{Congruences}
Let $\equiv_{\mathcal{A}}$ be a finite-index congruence over $\hat{\Gamma}^*$ given by Lemma \ref{THE-fin-congre} of subsection \ref{sub_regularlevel-k}.
For every integer $j \in [1,3]$ , we denote by $(\equiv_{\mathcal{A},j})$ the restriction of $\equiv_{\mathcal{A}}$ over $(4-j)-\pds(\Gamma)$:
$$\forall w,w'\in (4-j)-\pds(\Gamma),\;\;w \equiv_{\mathcal{A},j} w' \Leftrightarrow w \equiv_{\mathcal{A}} w'.$$
The following facts are immediate:\\
\begin{eqnarray}
\mbox{if } w \equiv_{\mathcal{A},1} w',& \mbox{then } \forall p,q \in Q, \Language(\mathcal{A},pwq)=\emptyset \Leftrightarrow \Language(\mathcal{A},pw'q)=\emptyset,&\label{eq-decrease-level}\\\nonumber
\mbox{if } w \equiv_{\mathcal{A},2} w',& \mbox{then } \forall S \in \Gamma_1, S[w] \equiv_{\mathcal{A},1} S[w'],&\\\nonumber
\mbox{if } w \equiv_{\mathcal{A},3} w',& \mbox{then } \forall T \in \Gamma_2, T[w] \equiv_{\mathcal{A},2} T[w'].&\label{eq-three-facts}
\end{eqnarray}
From now on, we abbreviate each congruence $\equiv_{\mathcal{A},i}$ by $\equiv_i$.\\
\paragraph{Alphabet $E$}
Let
$$E := 2-\pds(\Gamma)/\equiv_2.$$
In the sequel we use $E$ as a graded alphabet of undeterminates, where the level of all the symbols is 2.

\paragraph{Alphabet ${\cal W}$}
We define the finite alphabet
$$
{\cal W} := \{ (p S[e] q) \mid p, q \in Q, S \in \Gamma_1, e \in 2-\pds(\Gamma)/\equiv{2} \}
$$
Note that ${\cal W}$ is a finite subset of the infinite alphabet $3-\vterm(\Gamma,E)$.
\subsection{Morphisms $H_i^w$}
We use letters $S, S_1,\ldots,S_n,\ldots$ to denote elements of $\Gamma_1$,
$T, T_1,\ldots,T_n,\ldots$ to denote elements of $\Gamma_2$,
$a, a_1,\ldots,a_n,\ldots$ to denote elements of $\Gamma_3$,
$\Omega, \Omega_1,\ldots,\Omega_n,\ldots$ to denote undeterminates other than those in $E$ i.e. symbols which are not elements of $\Gamma$ (see \S \ref{par-prelimin-terms} of section \ref{sec-prelimin2}).
Let 
$$I_0 := {\cal P}({\cal W}) \times \Gamma_2.$$
Let $i \in I_0$: it has  the form $i=({\cal V}, T)$ where 
${\cal V} \subseteq {\cal W}, T \in \Gamma_2$. We also denote by 
$\Al(i)$ the first component ${\cal V}$ of $i$.
For every $w \in \Gamma_3^*$, we define the homomorphism $H_i^w$ by:\\
for every letter  $W=(p S[e] q) \in {\cal W}$, if
\begin{equation}
W \in {\cal V}     \label{alphabet_condition}
\end{equation}
and
\begin{equation}
(p S[T[w] \Omega]q) \rightarrow_{\mathcal{A}}^+ \prod_{j=1}^{\ell(i,W)} (p_{i,j}S_{i,j}[\Omega] q_{i,j})\label{derivational_condition}
\end{equation}
and
\begin{equation}
\forall j \in [1, \ell(i,W)],\forall t \in e,\;\; \Language(p_{i,j}S_{i,j}[t] q_{i,j}) \neq \emptyset
\label{nonemptiness_condition}
\end{equation}
then
\begin{equation}
H_i^w: \;\;(p S[e] q) \mapsto  \prod_{j=1}^{\ell(i,W)} (p_{i,j}S_{i,j}[e] q_{i,j}).
\end{equation}
otherwise
\begin{equation}
H_i^w:\;\;(p S[e] q) \mapsto  (p S[e] q).
\end{equation}
We denote by $H_i$ the mapping: $\Gamma^* \rightarrow \HOM({\cal W}^*,{\cal W}^*)$ defined by
$w \mapsto H_i^w$. 
\begin{lemma}
For every $i= ({\cal V}, T)\in I_0$, word $w \in \Gamma^*$  and letter $W \in {\cal V}$,  there exists at most one possible righthandside for the equation (\ref{derivational_condition}) that also satisfies equation 
(\ref{nonemptiness_condition}).
\end{lemma}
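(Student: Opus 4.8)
The plan is to establish uniqueness by analyzing the structure of the strongly deterministic automaton $\mathcal{A}$ and tracing how equation (\ref{derivational_condition}) can arise. Fix $i = (\mathcal{V}, T)$, a word $w \in \Gamma^*$, and a letter $W = (p S[e] q) \in \mathcal{V}$. The right-hand side of (\ref{derivational_condition}) is a product of variable-terms of the form $(p_{i,j} S_{i,j}[\Omega] q_{i,j})$, each containing the single undeterminate $\Omega$ at a leaf, obtained by a nonempty derivation $\derivestar{\mathcal{A}}$ starting from the single variable-term $(p S[T[w]\Omega]q)$. First I would invoke the substitution principle (Lemma \ref{substitution_principle}): since $\Omega \notin \Gamma$, the symbol $\Omega$ can be copied or erased during the derivation but cannot influence which rules are applied. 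Consequently, the derivation $(p S[T[w]\Omega]q) \derivestar{\mathcal{A}} w'$ projects, by erasing $\Omega$ (substituting $\varepsilon/\Omega$), onto a derivation of the concrete $3$-pushdown $p S[T[w]] q$, and the resulting word $w'$ is determined up to the placement of the copies of $\Omega$.

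Next I would argue that the sequence of rules applied in any such derivation is forced. Because $\mathcal{A}$ is strongly deterministic (condition (\ref{strongdeterminism})), the leftmost contents of the store together with the current state completely determine the next transition. Hence from the configuration $p S[T[w]\Omega]q$ there is at most one maximal derivation with respect to $\derivedir{\mathcal{A}}$ that halts in a variable-term-word lying in $(3\text{-}\vterm(\Gamma,E))^*$ whose atoms all have the shape $(p' S'[\Omega] q')$ — i.e. a configuration where the leftmost level-$1$ symbol $S'$ sits directly above the undeterminate $\Omega$ and no further transition fires. The determinism of $\mathcal{A}$ forces a unique such halting variable-term-word, so at most one product $\prod_{j} (p_{i,j} S_{i,j}[\Omega] q_{i,j})$ can satisfy (\ref{derivational_condition}). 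This already pins down the factorization modulo the subsequent nonemptiness test.

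The remaining point is to reconcile this with condition (\ref{nonemptiness_condition}) and the congruence structure. The class $e \in 2\text{-}\pds(\Gamma)/\equiv_2$ is a $\equiv_2$-class of level-$2$ pushdowns; condition (\ref{nonemptiness_condition}) demands that for every representative $t \in e$ and every factor, $\Language(p_{i,j} S_{i,j}[t] q_{i,j}) \neq \emptyset$. By the saturation property of $\equiv_{\mathcal{A}}$ recorded in equations (\ref{eq-decrease-level})--(\ref{eq-three-facts}), whether $\Language(p' S'[t] q')$ is empty depends only on the class of $t$ modulo $\equiv_2$, so this test is well-defined on $e$ and does not distinguish between candidate right-hand sides beyond what determinism already fixed. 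Since the right-hand side produced by the forced derivation is unique and the nonemptiness test is a property of that fixed right-hand side (not a selection among several), there is at most one right-hand side satisfying both (\ref{derivational_condition}) and (\ref{nonemptiness_condition}).

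The main obstacle I anticipate is the careful bookkeeping needed to justify that the derivation $(p S[T[w]\Omega]q) \derivestar{\mathcal{A}}^+ w'$ halting in the prescribed normal form $\prod_j (p_{i,j} S_{i,j}[\Omega] q_{i,j})$ is genuinely unique, rather than merely the rule-sequence being unique. One must check that the decomposition rules of the associated grammar (which split a pushdown $\omega = \eta \cdot \eta'$) together with the transition rules cannot produce two distinct but equivalent factorizations, and that the normalization hypotheses (LP), (RL), (PI) guarantee that every atom reached indeed has the single-undeterminate shape $S'[\Omega]$ required for membership in $\mathcal{W}$. Handling this cleanly will likely require an induction on the length of the derivation, tracking that $\Omega$ remains at a leaf throughout and that the level-$1$ symbol directly governing it is exposed exactly when the derivation terminates.
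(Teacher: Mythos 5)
Your central claim --- that strong determinism of $\mathcal{A}$ already forces a unique halting variable-term-word, so that at most one right-hand side can satisfy (\ref{derivational_condition}) alone, condition (\ref{nonemptiness_condition}) being ``a property of that fixed right-hand side (not a selection among several)'' --- is false, and it misses exactly the point of the lemma. Strong determinism (\ref{strongdeterminism}) determines which \emph{transition} rules fire, but the derivation relation $\derivedir{\mathcal{A}}$ also contains the \emph{decomposition} rules $(p,\eta\cdot\eta',q)\derivedir{\mathcal{A}}(p,\eta,r)(r,\eta',q)$, in which the intermediate state $r\in Q$ is arbitrary. By normalization (PI), a $\push_1(S_1S_2)$ step turns the single atom into $S_1[\sigma]S_2[\sigma]$, and to reach a product of atoms of the shape $S'[\Omega]$ the word must be decomposed; for \emph{every} choice of $r\in Q$ the two resulting factors then evolve by transition steps that are independent of their third component (the third component is inert in transition rules, and since $\Omega$ sits at the bottom of every level-2 stack, no reading, no $\pop_1$ and no terminating rule can fire before the shape $S'[\Omega]$ is reached). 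So if one derivation of the required form exists and involves a $\push_1$ step, then replaying the same forced transition sequence with any other seam state $r$ yields another derivation satisfying (\ref{derivational_condition}) with a \emph{different} right-hand side: condition (\ref{derivational_condition}) alone admits in general up to $|Q|^m$ right-hand sides, where $m$ is the number of $\push_1$ steps. There is no ``unique maximal derivation'' of the prescribed shape, and the factorization is \emph{not} pinned down before the nonemptiness test.

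The correct argument runs in the opposite direction: (\ref{nonemptiness_condition}) is precisely the mechanism that selects the seam states. Fix any $t\in e$ and substitute it for $\Omega$ (Lemma \ref{substitution_principle}). For both factors $(p_1 S_1[\sigma]r)$ and $(r S_2[\sigma]q)$ to have nonempty languages after substitution, $r$ must be the state in which the maximal computation of $\mathcal{A}$ starting from state $p_1$ with store $S_1[\sigma[t/\Omega]]$ empties its store; since $\mathcal{A}$ is strongly deterministic, that maximal computation --- including the letters read along the way --- is unique, hence $r$ is unique. Propagating this through every decomposition fixes all intermediate states, and since the transition steps are forced and products of atoms $S'[\Omega]$ are irreducible (the occurrence of $\Omega\notin\Gamma$ in the topsyms blocks every transition, and a single atom cannot be decomposed), at most one right-hand side survives both conditions. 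This is exactly the argument the paper deploys in the $\push_1(S_1S_2)$ case of the proof of Lemma \ref{l-mixed_recurrence_exists} (``there exists a unique $r\in Q$ such that $\Language(\mathcal{A},p_1S_1[T[aw]u]r)\neq\emptyset$ and $\Language(\mathcal{A},rS_2[T[aw]u]q)\neq\emptyset$''); the paper states the present lemma itself without proof, but the very reason condition (\ref{nonemptiness_condition}) appears in the statement is that (\ref{derivational_condition}) alone does not have a unique right-hand side. Your closing paragraph does brush against the real issue when you worry about ``two distinct but equivalent factorizations,'' but you treat it as bookkeeping to be discharged, whereas distinct factorizations genuinely occur and are eliminated only by the nonemptiness condition.
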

For every $i \in I_0$, we denote by $\Al(i,w)$ the set of letters 
$W \in \Al(i)$, such that there exists exactly one righthandside for the equation (\ref{derivational_condition}) that satisfies also  equation(\ref{nonemptiness_condition}).
Let us remark that, if $w \in \Gamma^* \setminus \Gamma_3^*$, then $\Al(i,w)= \emptyset$.
For this reason most of our further statements are trivial for words in 
$\Gamma^* \setminus \Gamma_3^*$.
\begin{lemma}
For every $w,w'\in \Gamma_3^*$, if $w \equiv_{3} w'$, then, for every $ i \in I_0$,
$$\Al(i,w)= \Al(i,w').$$
\label{l-equivA3_determines_ALiw}
\end{lemma}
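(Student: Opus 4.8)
The plan is to reduce membership in $\Al(i,w)$ to a pure \emph{non-emptiness} statement about the language recognised by $\mathcal{A}$, and then to feed that statement into the congruence chain provided by the facts (\ref{eq-decrease-level})--(\ref{eq-three-facts}). Throughout I fix $i=(\mathcal{V},T)\in I_0$ and a letter $W=(pS[e]q)\in\mathcal{V}$; since $\Al(i,w)\subseteq\mathcal{V}=\Al(i)$ and likewise for $w'$, it suffices to prove $W\in\Al(i,w)\Leftrightarrow W\in\Al(i,w')$. By the lemma immediately above there is at most one right-hand side of (\ref{derivational_condition}) that also satisfies (\ref{nonemptiness_condition}); hence $W\in\Al(i,w)$ holds if and only if there \emph{exists} at least one such right-hand side.

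The central step is to establish the bridge: for $t\in e$ (the choice being immaterial, as checked below),
\[
W\in\Al(i,w)\quad\Longleftrightarrow\quad \Language(\mathcal{A},pS[T[w]t]q)\neq\emptyset .
\]
The forward direction is immediate from the substitution principle (Lemma \ref{substitution_principle}): substituting $t$ for the undeterminate $\Omega$ in a derivation witnessing (\ref{derivational_condition}) yields $(pS[T[w]t]q)\rightarrow_{\mathcal{A}}^{+}\prod_{j}(p_{i,j}S_{i,j}[t]q_{i,j})$, and (\ref{nonemptiness_condition}) guarantees that every factor derives a terminal word, so the whole product does. For the converse I would start from an accepting computation of $\mathcal{A}$ on $pS[T[w]t]$ and, through the derivation/computation correspondence of \S\ref{par-derive-vs-comput}, isolate the stage at which the part of the level-$2$ stack contributed by $T[w]$ is exhausted and only the continuation $t$ remains; that stage exhibits a decomposition $(pS[T[w]t]q)\rightarrow_{\mathcal{A}}^{+}\prod_{j}(p_{i,j}S_{i,j}[t]q_{i,j})$ in which each factor inherits a non-empty sub-derivation, so (\ref{nonemptiness_condition}) holds. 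Finally, because the passive suffix $t$ cannot influence the rules applied while $T[w]$ is being processed (the same locality principle that underlies Lemma \ref{substitution_principle}), this decomposition remains valid with $\Omega$ in place of $t$, giving the required witness for (\ref{derivational_condition}).

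It then remains to note that the right-hand side of the bridge is independent of the representative $t\in e$ and invariant under $w\equiv_{3}w'$. Independence of $t$ comes from left-regularity of $\equiv_{\mathcal{A}}$: for $t,t'\in e$ one has $t\equiv_{2}t'$, hence $T[w]t\equiv_{2}T[w]t'$, hence $S[T[w]t]\equiv_{1}S[T[w]t']$ by the second of the facts (\ref{eq-decrease-level})--(\ref{eq-three-facts}), and the two languages are then simultaneously empty by (\ref{eq-decrease-level}). Invariance under $w\equiv_{3}w'$ uses the same chain read with (\ref{eq-three-facts}) at the top: $T[w]\equiv_{2}T[w']$, whence $T[w]t\equiv_{2}T[w']t$ by right-regularity, then $S[T[w]t]\equiv_{1}S[T[w']t]$, and finally $\Language(\mathcal{A},pS[T[w]t]q)=\emptyset\Leftrightarrow\Language(\mathcal{A},pS[T[w']t]q)=\emptyset$ by (\ref{eq-decrease-level}). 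Combining the bridge with these two remarks gives $W\in\Al(i,w)\Leftrightarrow W\in\Al(i,w')$; as $W$ ranges over $\mathcal{V}$ this yields $\Al(i,w)=\Al(i,w')$, which is the claim of Lemma \ref{l-equivA3_determines_ALiw}.

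I expect the genuine obstacle to be the converse half of the bridge, namely the faithful reconstruction of the $\Omega$-derivation (\ref{derivational_condition}) from an arbitrary accepting computation over $pS[T[w]t]$: one must argue that the computation really factors at the precise moment the $T[w]$-contribution to the level-$2$ stack is exhausted, and that replacing the passive continuation $t$ by the undeterminate $\Omega$ preserves the applicability of every rule used up to that point. The forward direction and the congruence bookkeeping are routine once Lemmas \ref{substitution_principle} and \ref{THE-fin-congre} are in hand.
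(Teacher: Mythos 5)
Your proposal is correct and follows essentially the same route as the paper: both reduce membership $W \in \Al(i,w)$ to the non-emptiness criterion $\forall u \in e,\; \Language(\mathcal{A},pS[T[w]u]q) \neq \emptyset$ (the paper's observation (\ref{CNS_alpha(i,w)})) and then apply the congruence chain (\ref{eq-three-facts}) together with (\ref{eq-decrease-level}) to transfer it from $w$ to $w'$. The only difference is that the paper states this bridge as a bare observation, whereas you additionally prove its easy direction via Lemma \ref{substitution_principle} and sketch the converse (correctly identifying it as the only nontrivial point), which is more than the paper itself provides.
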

\begin{proof}
Suppose that $i=({\cal V},T)$. Let $W=pS[e]q$ and suppose that
\begin{equation} 
W \in {\cal V}.
\label{W_inV}
\end{equation}
We observe that $pS[e]q \in \Al(i,w)$ iff 
\begin{equation}
\forall u \in e,\;\;\Language(\mathcal{A},pS[T[w]u]q) \neq \emptyset
\label{CNS_alpha(i,w)}
\end{equation}
Let $w,w' \in \Gamma_3^*$ such that $w \equiv_{3} w'$. By the three facts (\ref{eq-three-facts}) of subsection \ref{sub-alphabetW}, for every $u \in e$:
$$T[w]u \equiv_{2} T[w']u$$
hence\\ 
$$S[T[w]u] \equiv{1} S[T[w']u],$$ hence\\
$$\Language(\mathcal{A},pS[T[w]u]q ) = \emptyset \Leftrightarrow
\Language(\mathcal{A},pS[T[w']u]q) = \emptyset.$$
Using observation (\ref{CNS_alpha(i,w)}) we can conclude that, under the hypothesis that 
$(W \in {\cal V}\;\; \& \;\;w \equiv_{3} w')$ it is true that
\begin{equation}
W \in \Al(i,w) \Leftrightarrow W \in \Al(i,w').
\label{eq-same-alphabets}
\end{equation}
Under the hypothesis that $W \notin {\cal V}$,
we obtain that $W \notin  \Al(i,w) \& W \notin  \Al(i,w')$ which implies again that
\begin{equation}
W \in \Al(i,w) \Leftrightarrow W \in \Al(i,w').\
\label{eq-same-alphabets2}
\end{equation}
By (\ref{eq-same-alphabets},\ref{eq-same-alphabets2}) the lemma is proved.
\end{proof}
Owing to Lemma \ref{l-equivA3_determines_ALiw}, for every $ i \in I_0$ and every class $d \in \Gamma_3^*/\equiv_3$,  we denote by $\Al(i,d)$ the set $\Al(i,w)$ for any $w \in d$.

\subsection{Mixed recurrence}
\begin{lemma}
Let $i \in I_0\; (i=({\cal V},T)), d \in 1-\pds(\Gamma)/\equiv_3, a \in \Gamma_3, 
w \in 1-\pds(\Gamma_3)$  and $W \in {\cal W}\; (W=pS[e]q)$. 
There exists an integer $\ell(i,a,d,W) \in [1,2]$ and 
for every $j \in [1,\ell(i,a,d,W)]$ there exist 
indices $\alpha(i,a,d,W,j),\beta(i,a,d,W,j) \in I_0$, 
words $\omega_{i,a,d,W,j} \in \Gamma_3^{\leq 2}$ ,
letters $V_{i,a,d,W,j} \in {\cal W}$ and an alphabetic homomorphism $\Phi_{i,a,d,W,j}
\in \HOM({\cal W}^*,{\cal W}^*)$ such that\\
\begin{equation}
w \in d \Rightarrow H_i^{aw}(W)= \prod_{j=1}^{\ell(i,a,d,W)}
H_{\alpha(i,a,d,W,j)}^{\omega_{i,a,d,W,j} \cdot w} \circ \Phi_{i,a,d,W,j} 
\circ H_{\beta(i,a,d,W,j)}^{\omega_{i,a,d,W,j} \cdot w}(V_{i,a,d,W,j})
\label{mixed_recurrence}
\end{equation}
and one of the following cases occurs:\\
{\bf Case 1.1:}
$\ell(i,a,d,W)=1, \alpha(i,a,d,W,1)=(\emptyset,T),\beta(i,a,d,W,1)=(\emptyset,T)$. 
{\bf Case 1.2:}
$\;\ell(i,a,d,W)=1, \alpha(i,a,d,W,1)=(\{V_{i,a,d,W,1}\},T'),
\Phi_{i,a,d,W,1} = {\rm Id}_{{\cal W}^*},
\beta(i,a,d,W,1)=(\emptyset,T)$ and
$$ W[T[aw]e/e]\rightarrow_{\mathcal{A}}V_{i,a,d,W,1}[T'[\omega_{i,a,d,W,j}w]e/e] \rightarrow^*_\mathcal{A} H_i^{aw}(W).$$
{\bf Case 1.3:}
$\ell(i,a,d,W)=1, \alpha(i,a,d,W,1)=(\{V_{i,a,d,W,1}\},T'),\\
e'=[T''[\omega_{i,a,d,W,j}\cdot w]]_{\equiv{\mathcal{A},2}}\cdot e,$
$$\Phi_{i,a,d,W,1}: p'S'[e'']q' \mapsto p'S'[e'']q'\mbox{ if } e''\neq e',\;\;
p'S'[e']q' \mapsto p'S'[e]q',$$
$\beta(i,a,d,W,1)= (\Al(\Phi_{i,a,d,W,1}(H_{\alpha(i,a,d,W,1)}^{\omega_{i,a,d,W,1} \cdot w} (V_{i,a,d,W,1}))),T'')$,
$V_{i,a,d,W,1}=p'S[e']q$ for some $p' \in Q$, and
$$ W[T[aw]e/e]\rightarrow_{\mathcal{A}}
V_{i,a,d,W,1}[T'[\omega_{i,a,d,W,1}w]T''[\omega_{i,a,d,W,1}w]e/e']\rightarrow^*_\mathcal{A} H_i^{aw}(W)$$
{\bf Case 2:}
$\ell(i,a,d,W)=2, \omega_{i,a,d,W,1}= \omega_{i,a,d,W,2}=a,\alpha(i,a,d,W,j)=(\{V_{i,a,d,W,j}\},T),
\Phi_{i,a,d,W,j}={\rm Id}_{{\cal W}^*},
\beta(i,a,d,W,j)= (\emptyset,T)$ and
$$ W[T[aw]e/e]\rightarrow_{\mathcal{A}}^+
V_{i,a,d,W,1}[T[aw]e/e]\cdot V_{i,a,d,W,2}[T[aw]e/e] \rightarrow^*_\mathcal{A} H_i^{aw}(W).$$
\label{l-mixed_recurrence_exists}
\end{lemma}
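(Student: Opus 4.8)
The plan is to read off the recurrence (\ref{mixed_recurrence}) from a case analysis on the \emph{unique} first transition of $\mathcal{A}$ applied to the level-$3$ term obtained by substituting $T[aw]\Omega$ for $e$ in $W=pS[e]q$, that is, on the transition $\delta(p,\bar b, STa)$ selected by the top symbols $STa$ and the state $p$ (strong determinism makes this unique). Two structural observations drive the argument. First, by normalization (PI) every push is binary, so a single move creates at most two new atoms; this is exactly what bounds $\ell(i,a,d,W)$ by $2$ and separates the single-atom cases (1.1--1.3) from the splitting case (Case 2). Second, the \emph{type} of the first move depends only on $S$, $T$, $a$ and $p$, hence is determined by $i$, $a$ and $W$ and is independent of $w$ and even of $d$; the class $d$ intervenes only through the alphabet- and nonemptiness-conditions (\ref{alphabet_condition}), (\ref{nonemptiness_condition}), whose value on a class is legitimized by Lemma \ref{l-equivA3_determines_ALiw}. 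I would first dispose of the trivial situation: if $W\notin\Al(i)$, or $W\in\Al(i)$ but $W\notin\Al(i,aw)$, then by definition $H_i^{aw}(W)=W$, realized by Case 1.1 with $V:=W$, $\Phi:={\rm Id}$ and $\alpha=\beta=(\emptyset,T)$ (both homomorphisms act as the identity because their alphabet is empty). The move $\pop_1$, which empties the single level-$3$ atom and therefore never exposes $\Omega$, so that (\ref{derivational_condition}) has no right-hand side, falls into this trivial case as well.

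Next I would treat the moves preserving a single level-$3$ atom. If the move is $\pop_2$, the block is removed in one step, $pS[T[aw]\Omega]q\rightarrow_{\mathcal{A}} p'S[\Omega]q$, so $H_i^{aw}(W)$ is the single letter $(p'S[e]q)$; this is Case 1.1 with $\Phi:={\rm Id}$, $V:=(p'S[e]q)$, the nonemptiness being inherited from $W\in\Al(i,aw)$. If the move is $\pop_3$ (resp.\ $\push_3(a_1a_2)$) the block content shrinks to $w$ (resp.\ grows to $a_1a_2w$) while $S$ and $T$ are unchanged, so the action of $T[aw]$ on $W$ coincides with the action of $T[\omega w]$ on $V:=(p'S[e]q)$; this is Case 1.2 with $\omega:=\varepsilon$ (resp.\ $\omega:=a_1a_2$), $T':=T$, $\alpha:=(\{V\},T)$, $\beta:=(\emptyset,T)$, $\Phi:={\rm Id}$. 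The substitution principle (Lemma \ref{substitution_principle}) is what licenses passing from the concrete computation on $T[\omega w]\Omega$ to the homomorphism $H^{\omega w}$, since the symbols below the processed block can be copied or erased but cannot influence the sequence of rules applied.

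For $\push_1(S_1S_2)$ I would use the decomposition rule: the move produces $p'S_1[T[aw]\Omega]S_2[T[aw]\Omega]q$, which splits as $(p'S_1[T[aw]\Omega]r)(rS_2[T[aw]\Omega]q)$ at a junction state $r$ pinned down by the uniqueness of the right-hand side and by the nonemptiness condition. Since both atoms carry the \emph{same} block $T[aw]$, the action of $T[aw]$ on $W$ factors as the concatenation of its actions on $V_1:=(p'S_1[e]r)$ and $V_2:=(rS_2[e]q)$, giving Case 2 with $\ell=2$, $\omega_1=\omega_2=a$, $\alpha(\cdot,j)=(\{V_j\},T)$, $\beta(\cdot,j)=(\emptyset,T)$ and $\Phi={\rm Id}$.

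The genuinely delicate case, and the one I expect to be the main obstacle, is $\push_2(T'T'')$, which yields $pS[T'[aw]T''[aw]\Omega]q$ and thus two level-$2$ atoms to be processed in turn. Here $\mathcal{A}$ first exhausts the $T'$-block while the residue $T''[aw]\cdot e$ sits below, and only afterwards begins processing the $T''$-block; faithfully encoding this staged behaviour into the composed expression $H_\alpha^{\omega w}\circ\Phi\circ H_\beta^{\omega w}(V_1)$ (to be read, with the paper's convention $(f\circ g)(x)=g(f(x))$, as first $H_\alpha$, then $\Phi$, then $H_\beta$) is where the bookkeeping is subtle. The idea I would implement is to record the deferred atom inside the undeterminate, working with $e':=[T''[\omega w]]_{\equiv_2}\cdot e$ and $V_1:=p'S[e']q$ (with $\omega:=a$), so that the first stage is exactly $H_{(\{V_1\},T')}^{\omega w}(V_1)$, whose output letters still carry the class $e'$. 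The homomorphism $\Phi$ then strips the prefix $T''[\omega w]$ off, sending $p'S'[e']q'$ to $p'S'[e]q'$ and fixing every other letter, which resets the undeterminate to $e$ precisely at the moment the $T''$-block becomes the top; the second stage $H_{(\ldots,T'')}^{\omega w}$ re-introduces $T''[\omega w]$ via its own block action, and its restricting alphabet must be taken to be $\Al(\Phi(H_{(\{V_1\},T')}^{\omega w}(V_1)))$ so that $\beta$ acts on exactly the letters produced by the first stage. The correctness of this encoding rests again on Lemma \ref{substitution_principle} together with the congruence facts (\ref{eq-three-facts}), which guarantee that prepending $T''[\omega w]$ to $e$ is compatible with $\equiv_2$, so that $e'$ is well defined independently of the representative $w\in d$. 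To finish I would verify, in each case, that the nonemptiness conditions propagate from $W\in\Al(i,aw)$ to the auxiliary letters $V_j$ and to the alphabets attached to $\beta$, and that all the produced data $(\ell,\alpha,\beta,\omega,V,\Phi)$ depend only on $(i,a,d,W)$ — invoking Lemma \ref{l-equivA3_determines_ALiw} for the alphabets — so that (\ref{mixed_recurrence}) is a legitimate \emph{regular} recurrent relation.
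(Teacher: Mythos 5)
Your proposal is correct and follows essentially the same route as the paper's proof: dispose of $W \notin \Al(i,ad)$ via Case 1.1, then analyse the unique transition $\delta(p,\varepsilon,STa)$ case by case ($\pop_2,\pop_3,\push_1,\push_2,\push_3$) with exactly the paper's assignments, including the delicate $\push_2$ encoding through $e'=[T''[aw]]_{\equiv_2}\cdot e$, the resetting homomorphism $\Phi$, and $\beta=(\Al(\Phi(H_{\alpha}^{aw}(V_1))),T'')$, with the substitution principle and the congruence facts licensing the passage from concrete computations to the $H$-homomorphisms and the independence from the representative $w \in d$. Your choice $\alpha=(\{V\},T)$ in the $\pop_3$ case is the one actually needed to realize Case 1.2 (the paper there writes $\alpha=(\emptyset,T)$ while claiming Case 1.2, evidently a slip copied from the $\pop_2$ case), and your observation that a hypothetical $\pop_1$ first move forces $W\notin\Al(i,aw)$ is a harmless variant of the paper's exclusion of $\pop_1$ via the normalization (RL).
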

Let us sketch a proof of this lemma.
In the case of an operation $push_1$,the value of $\ell(i,a,d,W)$ is 2 while the case $\pop_1$ is impossible and for the four remaining cases ($\pop_2,\pop_3,\push_2,\push_3$)
the value of $\ell(i,a,d,W)$ is 1.
In the case of an operation $push_1$, the two letters
$V_{i,a,d,W,1},V_{i,a,d,W,2}$ depend on an intermediate state which is determined by the congruence class $d$.\\
In the case of an operation $push_2$, the composition of two non-trivial homomorphisms  is really
required in the righthandside. In the other cases we can choose $\Phi_{i,a,d,W,j}:= {\rm Id}_{{\cal W}^*}$ and $\beta(i,a,d,W,j):=(\emptyset,T)$,
so that, for every $u$, $H_{\beta(i,a,d,W,j)}^{u}={\rm Id}_{{\cal W}^*}$.\\
Let us proceed now to the formal proof.
\begin{proof}
For $W \notin \Al(i,ad)$\footnote{ i.e. with full rigor $\Al(i,[a]_{\equiv_3}\cdot d)$} we just choose $\ell(i,a,d,W):=1$, $\omega(i,a,d,W,1):= \varepsilon$
and $\alpha(i,a,d,W,1):= (\emptyset, T), \Phi_{i,a,d,W,1}:= {\rm Id}_{{\cal W}^*},\beta(i,a,d,W,1):= (\emptyset, T),V_{i,a,d,W,1}:= W$.
These choices ensure that
$$
H_{\alpha(i,a,d,W,j)}^{\omega_{i,a,d,W,j} \cdot w} \circ \Phi_{i,a,d,W,1} \circ H_{\beta(i,a,d,W,j)}^{\omega_{i,a,d,W,j} \cdot w}= {\rm Id}_{{\cal W}^*}.
$$
Equation (\ref{mixed_recurrence}) thus holds and 
Case 1.1. is realized.\\
Let us now suppose that $W \in \Al(i,ad)$.
Since $\mathcal{A}$ is strongly deterministic, 
there is at most one  element $\bar{b} \in B \cup \{\varepsilon\}$ such that a transition $\delta(p,\bar{b},STa)$  is defined. The fact $W \in \Al(i,ad)$ implies that $\Language(\mathcal{A}, pS[T[aw]u]q) \neq \emptyset$
for every $u \in e$, hence that there exists exactly one such transition.
At last, the restriction (RL) that is assumed for $\mathcal{A}$ shows that this transition has the form:$$\delta(p,\varepsilon,STa)=(p_1,op) \mbox{ where } op \in PUSH(\Gamma)\cup \{\pop_2,\pop_3\}
$$ 
We distinguish several cases corresponding to these five possible values of $op$.\\
\underline{{\bf $pop_2$}}:$pS[T[aw]e]q \rightarrow p_1S[e]q$.\\
Let us choose: $\ell(i,a,d,W):=1,\alpha(i,a,d,W,j):=(\emptyset,T),\Phi_{i,a,d,W,j}:={\rm Id}_{{\cal W}^*},
\beta(i,a,d,W,j):=(\emptyset,T),\omega_{i,a,d,W,1}:= \varepsilon, V_{i,a,d,W,1}:= p_1S[e]q$.\\ 
These choices ensure that the value of both sides of (\ref{mixed_recurrence}) is $p_1S[e]q$ and that 
Case 1.1 is realized.\\ 
\underline{{\bf $pop_3$}}:$pS[T[aw]e]q \rightarrow p_1S[T[w]e]q$.\\
Let us choose: $\ell(i,a,d,W):=1,\alpha(i,a,d,W,j):=(\emptyset,T),\Phi_{i,a,d,W,j}:={\rm Id}_{{\cal W}^*}, 
\beta(i,a,d,W,j):=(\emptyset,T), \omega_{i,a,d,W,1}:= \varepsilon, V_{i,a,d,W,1}:= p_1S[e]q.$\\ 
These choices ensure that equation (\ref{mixed_recurrence}) holds and that Case 1.2 is realized.\\ 
\underline{{\bf $push_1(S_1S_2)$}}:\\
Since $\Language(\mathcal{A}, pS[T[aw]u]q) \neq \emptyset$ (for a given $u \in e$) and 
$pS[T[aw]u]q \rightarrow_\mathcal{A} p_1S_1[T[aw]e]S_2[T[aw]e]q$, there exists a unique $r \in Q$
such that $\Language(\mathcal{A}, p_1S_1[T[aw]u]r)~\neq~\emptyset$ and
$\Language(\mathcal{A}, rS_2[T[aw]u]q)~\neq~\emptyset$.
This state $r$ depends on $d,e$ only and not really on $w,u$ (by definition of the equivalences
$\equiv{j}$ for $j \in [1,3]$ and by the property that they are congruences).
We also have
$$pS[T[aw]u]q \rightarrow_\mathcal{A}^2 (p_1S_1[T[aw]e]r)(rS_2[T[aw]e]q).$$
Let us choose $\ell(i,a,d,W):=2,V_{i,a,d,W,1}:= p_1S_1[e]r, V_{i,a,d,W,2}:= rS_2[e]q,
\alpha(i,a,d,W,j):=(\{V_{i,a,d,W,j}\},T),\Phi_{i,a,d,W,j}:={\rm Id}_{{\cal W}^*}, 
\beta(i,a,d,W,j):=(\emptyset,T), \omega_{i,a,d,W,1}:= \omega_{i,a,d,W,2}:=a.$
Equation (\ref{mixed_recurrence}) holds and Case 2 is realized.\\
\underline{{\bf $push_2(T'T'')$}}:In this case
\begin{equation}pS[T[aw]e]q \rightarrow p_1S[T'[aw]T''[aw]e]q.
\label{e-firststep_push2}
\end{equation}
Let us choose: $\ell(i,a,d,W):=1, e':= [T[aw]]_{\equiv{2}}\cdot e,V_{i,a,d,W,1}:= p_1S[e']q,
\alpha(i,a,d,W,1):=(\{p_1S[e']q\},T'), \Phi_{i,a,d,W,1}:p'S'[e'']q' \mapsto p'S'[e'']q'\mbox{ if } e''\neq e',\;\;
p'S'[e']q' \mapsto p'S'[e]q',\;\;
\beta(i,a,d,W,1):=(\Al(\Phi_{i,a,d,W,1}(H_{\alpha(i,a,d,W,1)}^{aw}(p_1S[e']q))),T''),\omega_{i,a,d,W,1}:= a$.\\
Nota: the action of $\Phi_{i,a,d,W,1}$ on letter $p'S[e'']q'$ with $e''\neq e'$ has no influence on equation (\ref{mixed_recurrence}).\\
In order to lighten the notation we abbreviate $\alpha(i,a,d,W,1)$ as $\alpha$, $\beta(i,a,d,W,1)$
as $\beta$ and $\Phi_{i,a,d,W,1}$ as $\Phi$ in the rest of this proof.
Let us show that equation (\ref{mixed_recurrence}) holds.
By definition of the mappings $H_i$ we have:
$$p_1S[T'[aw]e']q \rightarrow^*_\mathcal{A} H_\alpha^{aw}(p_1S[e']q).$$
Applying the substitution $[T''[aw]e/e']$ on both sides of this derivation we obtain
\begin{equation}
p_1S[T'[aw]T''[aw]e]q \rightarrow^*_\mathcal{A} (H_\alpha^{aw}(p_1S[e']q))[T''[aw]e/e'].
\label{e-secondstep_push2}
\end{equation}
Let us consider an arbitrary letter $V$ occuring in the word $H_\alpha^{aw}(p_1S[e']q)$:
it has the form $V=p'S'[e']q'$ for some $p',q' \in Q, S' \in \Gamma_1$.
By definition of the mappings $H_i$ we have:
$$p'S'[T''[aw]e]q' \rightarrow^*_\mathcal{A} H_\beta^{aw}(p'S'[e]q').$$
This derivation can also be written as:
$$V[T''[aw]e/e'] \rightarrow^*_\mathcal{A} H_\beta^{aw}(V[e/e'])=
H_\beta^{aw}(\Phi(V)).$$
Since the above derivation is valid for every letter $V$ occuring in $H_\alpha^{aw}(p_1S[e']q)$, we conclude that
\begin{equation}
H_\alpha^{aw}(p_1S[e']q)[T''[aw]e/e'] \rightarrow^*_\mathcal{A} 
H_\beta^{aw}(\Phi(H_\alpha^{aw}(p_1S[e']q))).
\label{e-thirdstep_push2}
\end{equation}
The product of the three derivations (\ref{e-firststep_push2}),(\ref{e-secondstep_push2}) and (\ref{e-thirdstep_push2}) gives a derivation
\begin{equation}
pS[T[aw]e]q   \rightarrow^*_\mathcal{A} 
H_\beta^{aw}(\Phi(H_\alpha^{aw}(p_1S[e']q))).
\label{e-allsteps_push2}
\end{equation}
Let us check now that the extremity of derivation (\ref{e-allsteps_push2}) is nothing else than
$H_i^{aw}(pS[e]q)$. We introduce, for every congruence class 
$f \in 2-\pds(\Gamma)/{\equiv{2}}$ the subalphabets:
$${\cal W}_f:= \{ p'S'[f]q' \mid p',q' \in Q, S' \in  \Gamma_1 \},\;\;$$
$${\cal W}^0_f:= \{ p'S'[f]q' \mid p',q' \in Q, S' \in  \Gamma_1,\forall u \in f,
\Language(\mathcal{A}, p'S'[u]q') \neq \emptyset  \}.$$
Since $pS[e]q \in \Al(i,aw)$, we know that
$$ \forall u \in e, \Language(\mathcal{A}, pS[T[aw]u]q) \neq \emptyset.$$
Every derivation (modulo $\rightarrow_\mathcal{A}$) starting from $pS[T[aw]u]q$
has, as its first step, derivation (\ref{e-firststep_push2}). Hence
$$ \forall u \in e, \Language(\mathcal{A}, p_1S[T'[aw]T''[aw]u]q) \neq \emptyset.$$
By definition of $e'$, $\forall u \in e, T''[aw]u \in e'$. This ensures that,
$$ \forall u' \in e', \Language(\mathcal{A}, p_1S[T'[aw]u']q) \neq \emptyset.$$
Since $\alpha=(\{p_1S[e']q\},T')$, the above non-emptiness assertion shows that
$$p_1S[e']q \in \Al(\alpha,aw).$$
By definition of $H_\alpha^{aw}$ we have
\begin{equation}
H_\alpha^{aw}(p_1S[e']q ) \in ({\cal W}_{e'}^0)^*.
\label{e-letters_Ha}
\end{equation}
Let $V\in {\cal W}$ be  factor of $H_\alpha^{aw}(p_1S[e']q)$.
By (\ref{e-letters_Ha}) $V \in {\cal W}_{e'}^0$, in particular it has the form
$V = p'S'[e']q'$.
Since $\forall u \in e, T''[aw]u \in e'$, we then have
$$ \forall u \in e, \Language(\mathcal{A}, p'S'[T''[aw]u]q') \neq \emptyset,$$
hence $p'S'[e]q' \in \Al(\beta, aw)$, or, equivalently,
$\Phi(V) \in \Al(\beta, aw)$.
It follows that
$$\Phi(H_\alpha^{aw}(p_1S[e']q ) \in \Al(\beta, aw)^*$$
hence
\begin{equation}
H_\beta^{aw}(\Phi(H_\alpha^{aw}(p_1S[e']q )))) \in ({\cal W}_{e}^0)^*.
\label{e-letters_HbPha}
\end{equation}
By (\ref{e-allsteps_push2}) and (\ref{e-letters_HbPha}), 
$$H_i^{aw}(pS[e]q)= H_\beta^{aw}(\Phi(H_\alpha^{aw}(p_1S[e']q ))))$$i.e. equation (\ref{mixed_recurrence}) is valid.
The conditions of Case 1.3 are also valid.\\
\underline{{\bf $push_3(\omega)$}}:$pS[T[aw]e]q \rightarrow p_1S[T[\omega\cdot w]e]q$.\\
Let us choose: $\ell(i,a,d,W):=1, \;V_{i,a,d,W,1}:= p_1S[e]q,\; \omega(i,a,d,W,1):= \omega,
\;\alpha(i,a,d,W,1):= (\{p_1S[e]q\}, T),\; \Phi_{i,a,d,W,1}:= {\rm Id}_{{\cal W}^*},
\;\beta(i,a,d,W,1):= (\emptyset, T).$
These choices ensure that equation (\ref{mixed_recurrence}) holds and 
Case 1.2 is realized.

\end{proof}
Note that the above recurrence relations (\ref{mixed_recurrence}) involve both the product operation
(over ${\cal W}^*$) and the composition operation (over $\HOM({\cal W}^*,{\cal W}^*)$).
Therefore we call them  the {\em mixed} recurrence relations. We shall transform these mixed recurrence relations into ``purely'' compositional recurrence relations in next paragraph.
\subsection{Compositional recurrence}
Let us define a new alphabet ${\hat{\cal W}}$ extending ${\cal W}$ and a family of homomorphims ${\hat{\cal W}}^* \rightarrow {\hat{\cal W}}^*$ that satisfy a system of compositional recurrence relations.
Let
$${\hat{\cal W}}:= {\cal W} \cup ({\cal W} \times {\cal W} \times \{1,2\}) \cup \{X\},$$
where $X$ is a new letter not in ${\cal W} \cup ({\cal W} \times {\cal W} \times \{1,2\})$.
For every $i,a,d$ we introduce the following homomorphisms ${\hat{\cal W}}^* \rightarrow {\hat{\cal W}}^*$:\\
$\psi_{i,a,d}$:
\begin{eqnarray}
X \mapsto & X & \nonumber\\
W \mapsto &(W,W,1) & \mbox{ if } W \in {\cal W} \setminus \Al(i,ad)\nonumber\\
W \mapsto &(V_{i,a,d,W,1},W,1) & \mbox{ if } W \in \Al(i,ad) \mbox{ and } \ell(i,a,d,W)=1\nonumber\\
W \mapsto &(V_{i,a,d,W,1},W,1)\cdot (V_{i,a,d,W,2},W,2) 
& \mbox{ if } W \in \Al(i,ad)\mbox{ and } \ell(i,a,d,W)=2\nonumber\\
(V,W,j) \mapsto & X & \mbox{ if } V,W \in {\cal W}, j \in \{1,2\}
\end{eqnarray}
$\hat{\psi}_{i,a,d}$:
\begin{eqnarray}
X \mapsto & X & \nonumber\\
V \mapsto & X & \mbox{ if } V \in {\cal W}\nonumber\\
(V,W,j) \mapsto & V & \mbox{ if } V,W \in {\cal W}, j \in \{1,2\}
\end{eqnarray}
For every $i,a,d,W,j$ we introduce the following homomorphisms ${\hat{\cal W}}^* \rightarrow {\hat{\cal W}}^*$:\\
$\theta_{i,a,d,W,j}$:
\begin{eqnarray}
X \mapsto & X & \nonumber\\
V \mapsto & X& \mbox{ if } V \in {\cal W} \nonumber\\
(V,W,j) \mapsto & V & \mbox{ if } V \in {\cal W} \nonumber\\
(V,W',j') \mapsto &(V,W',j') & \mbox{ if } V, W' \in {\cal W}, j' \in \{1,2\}, 
W'\neq W \mbox{ or } j' \neq j.
\end{eqnarray}
$\hat{\theta}_{i,a,d,W,j}$:
\begin{eqnarray}
X \mapsto & X & \nonumber\\
V \mapsto & (V,W,j)& \mbox{ if } V \in {\cal W} \nonumber\\
(V,V',j') \mapsto & (V,V',j') & \mbox{ if } V,V' \in {\cal W}, j' \in \{1,2\}
\end{eqnarray}

We extend the homomorphisms $H_i^w$ to $\hat{{\cal W}}^*$ by setting:\\
\begin{eqnarray}
X & \mapsto  & X \nonumber\\
(V,V',j') \mapsto & X & \mbox{ if } V,V' \in {\cal W}, j' \in \{1,2\}
\end{eqnarray}
As well, we extend the homomorphisms $\Phi_{i,a,d,W,j}$ to $\hat{{\cal W}}^*$ by setting:\\
\begin{eqnarray}
X & \mapsto  & X \nonumber\\
(V,V',j') \mapsto & X & \mbox{ if } V,V' \in {\cal W}, j' \in \{1,2\}
\end{eqnarray}
\begin{lemma}
For every $i \in I_0, d \in 1-\pds(\Gamma)/\equiv{3}, a \in \Gamma, 
w \in 1-\pds(\Gamma)$ and every letter $W \in \Al(i,ad)$, 
there exists an integer $\ell(i,a,d,W) \in [1,2]$ and 
for every $j \in [1,\ell(i,a,d,W)]$ there exist 
indices $\alpha(i,a,d,W,j),\beta(i,a,d,W,j) \in I_0$ and 
words $\omega_{i,a,d,W,j} \in \Gamma_3^{\leq 2}$ 
such that\\
\begin{eqnarray}
w \in d &\Rightarrow &\nonumber\\
H_i^{aw}&=& \psi_{i,a,d} \circ \prod_{W \in \Al(i,ad)} 
(\prod_{j=1}^{\ell(i,a,d,W)}
\theta_{i,a,d,W,j}\circ 
H_{\alpha(i,a,d,W,j)}^{\omega_{i,a,d,W,j} \cdot w} \circ \Phi_{i,a,d,W,j}
\circ H_{\beta(i,a,d,W,j)}^{\omega_{i,a,d,W,j} \cdot w}
\circ \hat{\theta}_{i,a,d,W,j})\nonumber\\
&&\circ \hat{\psi}_{i,a,d}.
\label{compositionnal_recurrence1}
\end{eqnarray}
\end{lemma}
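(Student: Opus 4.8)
The plan is to prove the identity (\ref{compositionnal_recurrence1}) by evaluating both sides, which are homomorphisms of the free monoid $\hat{\cal W}^*$, on each letter of $\hat{\cal W}$. Equality of two homomorphisms of a free monoid amounts to their agreement on the generating alphabet, so it suffices to treat the three kinds of letters separately: the sink letter $X$, the address-tagged triples $(V,V',j')$, and the genuine letters $W \in {\cal W}$. For $X$ the verification is immediate, since every map occurring on the right-hand side (the $\psi$'s, the $\theta$'s, and the extended $H$'s and $\Phi$'s) fixes $X$. For a triple $(V,V',j')$ the left-hand side $H_i^{aw}$ sends it to $X$ by the chosen extension, while on the right-hand side $\psi_{i,a,d}$ already sends every triple to $X$; hence both sides agree. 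This reduces the whole lemma to the single claim that, for every $W \in {\cal W}$, the right-hand side applied to $W$ equals $H_i^{aw}(W)$, whose value is furnished by the mixed recurrence (\ref{mixed_recurrence}) of Lemma~\ref{l-mixed_recurrence_exists}.

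For this central case I would read the auxiliary homomorphisms as implementing an \emph{addressing} discipline on $\hat{\cal W}^*$, in which a triple $(V,W,j)$ is understood as ``the letter $V$, currently stored at the address $(W,j)$ reserved for the $j$-th factor of the image of $W$''. With this reading, $\psi_{i,a,d}$ initialises $W$ to its seed encoding: a letter $W \notin \Al(i,ad)$ becomes the inert triple $(W,W,1)$, while $W \in \Al(i,ad)$ becomes the product $\prod_{j=1}^{\ell(i,a,d,W)} (V_{i,a,d,W,j},W,j)$ of its seed letters, each stored at its own address. A single factor $\theta_{i,a,d,W,j} \circ H_{\alpha(i,a,d,W,j)}^{\omega_{i,a,d,W,j}\cdot w} \circ \Phi_{i,a,d,W,j} \circ H_{\beta(i,a,d,W,j)}^{\omega_{i,a,d,W,j}\cdot w} \circ \hat{\theta}_{i,a,d,W,j}$ of the product then processes exactly the address $(W,j)$: $\theta_{i,a,d,W,j}$ exposes the seed stored there as a bare ${\cal W}$-letter, the block $H_{\alpha}\circ\Phi\circ H_{\beta}$ rewrites it into $H_{\alpha(i,a,d,W,j)}^{\omega_{i,a,d,W,j}\cdot w}\circ\Phi_{i,a,d,W,j}\circ H_{\beta(i,a,d,W,j)}^{\omega_{i,a,d,W,j}\cdot w}(V_{i,a,d,W,j})$, and $\hat{\theta}_{i,a,d,W,j}$ re-tags the result back to the address $(W,j)$. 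Finally $\hat{\psi}_{i,a,d}$ discards all tags, returning the first component of each surviving triple and sending any stray bare letter to $X$.

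The heart of the argument, and the step I expect to be the main obstacle, is the \emph{non-interference} property: each factor of the product acts as the identity on every symbol not addressed to it, so that factors belonging to distinct addresses commute and never overwrite one another's work. Granting this, I would trace the right-hand side on a fixed letter $W$: the word $\psi_{i,a,d}(W)$ carries seeds only at the addresses $(W,1),\dots,(W,\ell(i,a,d,W))$, so every factor of the product except those indexed by some $(W,j)$ leaves the word untouched, while the factor indexed by $(W,j)$ replaces the seed in place by its rewritten block; since the replacements are local, the blocks stay in the left-to-right order $j=1,\dots,\ell(i,a,d,W)$, and $\hat{\psi}_{i,a,d}$ then reads off $\prod_{j=1}^{\ell(i,a,d,W)} H_{\alpha(i,a,d,W,j)}^{\omega_{i,a,d,W,j}\cdot w}\circ\Phi_{i,a,d,W,j}\circ H_{\beta(i,a,d,W,j)}^{\omega_{i,a,d,W,j}\cdot w}(V_{i,a,d,W,j})$, which is exactly the right-hand side of (\ref{mixed_recurrence}); for $W \notin \Al(i,ad)$ the inert triple $(W,W,1)$ survives untouched and $\hat{\psi}_{i,a,d}$ returns $W = H_i^{aw}(W)$. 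Establishing non-interference rigorously requires a careful case inspection of the defining equations of $\theta_{i,a,d,W,j}$, $\hat{\theta}_{i,a,d,W,j}$ and of the extensions of $H$ and $\Phi$ to the tagged triples, checking that a symbol carrying an address $(W',j') \neq (W,j)$ is preserved verbatim by the $(W,j)$-factor; once this bookkeeping is dispatched, matching the outcome with Lemma~\ref{l-mixed_recurrence_exists} closes the proof.
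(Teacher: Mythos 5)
Your proposal is correct and takes essentially the same route as the paper: the paper's own proof likewise reduces (\ref{compositionnal_recurrence1}) to a letter-by-letter verification on the generators of $\hat{\cal W}^*$, disposing of the letters of $\hat{\cal W}\setminus{\cal W}$ immediately and invoking the mixed recurrence (\ref{mixed_recurrence}) together with the definitions of $\psi_{*},\hat{\psi}_{*},\theta_{*},\hat{\theta}_{*}$ for the letters of ${\cal W}$ --- your addressing/non-interference bookkeeping is exactly the ``one can check'' that the paper leaves implicit. One caveat worth recording: your non-interference inspection goes through only if the extensions of $H_i^w$ and $\Phi_{i,a,d,W,j}$ to $\hat{\cal W}^*$ act as the identity on the tagged triples, whereas as printed the paper extends them by $(V,V',j') \mapsto X$, under which the $(W,j)$-factor would annihilate every triple at a foreign address (and even the output of an earlier block) and the claimed identity would fail on ${\cal W}$; this is evidently a slip in the defining equations rather than in your argument, and your careful case inspection is precisely the place where it surfaces and gets corrected.
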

In the above lemma the symbols $\prod$ stand for the extension of the composition law to an arbitrary finite number of arguments.
This law is non-commutative, so that the ordering of the five homomorphisms on the righthand-side is significant;
however, for the particular operands of the symbols $\prod$ , the result does {\em not} depend on their ordering).
\begin{proof}
For a letter $Y \in \hat{\cal W} \setminus {\cal W}$ , both sides of equation (\ref{compositionnal_recurrence1}) map $Y$ on $X$.\\
For a letter $Y \in {\cal W}$, one can check that both sides of equation (\ref{compositionnal_recurrence1}) map $Y$ on the same word, by using equations (\ref{mixed_recurrence}) and the definitions of $\psi_*,\hat{\psi}_*,\theta_*,\hat{\theta}_*$.
\end{proof}
We define a larger family of mappings by setting:
$$I := I_0 \cup \{(i,a,d,s) \mid s \in \{-1,+1\}\}
\cup \{((i,a,d,W,j,s) \mid s \in \{-1,0,+1\}\}.$$
The homomorphism $H_i^w$ is already defined for $w \in \Gamma^*, i \in I_0$.
We now define:
\begin{equation}
H^w_{i,a,d,1}= \psi_{i,a,d},\;\;
H^w_{i,a,d,-1}= \hat{\psi}_{i,a,d},\;\;
\label{eq-technical-homo1}
\end{equation}
\begin{equation}
H^w_{i,a,d,W,j,1}= \theta_{i,a,d,W,j},\;\;
H^w_{i,a,d,W,j,0}= \Phi_{i,a,d,W,j},\;\;
H^w_{i,a,d,W,j,-1}= \hat{\theta}_{i,a,d,W,j}.
\label{eq-technical-homo2}
\end{equation}
and we extend the map $\Al(*,*)$ which is already defined for arguments $(i,w) \in I_0 \times \Gamma_3^*$ by setting
$$ \Al(i,w):= \emptyset\mbox{ for } i \in I \setminus I_0, w \in \Gamma_3^*.$$
With these new notations, relations (\ref{compositionnal_recurrence1}) can be rewritten as:
for every $w \in d \;\;\& \;\;i\in I_0$
\begin{eqnarray}
H_i^{aw}&=& H^w_{i,a,d,1} \nonumber\\
&&\circ \prod_{W \in \Al(i,ad)} 
(\prod_{j=1}^{\ell(i,a,d,W)}
H^w_{i,a,d,W,j,1}\circ 
H_{\alpha(i,a,d,W,j)}^{\omega_{i,a,d,W,j} \cdot w} 
\circ H^w_{i,a,d,W,j,0}
\circ H_{\beta(i,a,d,W,j)}^{\omega_{i,a,d,W,j} \cdot w}
\circ H^w_{i,a,d,W,j,-1})\nonumber\\
&&\circ H^w_{i,a,d,-1}.
\label{compositionnal_recurrence21}
\end{eqnarray}
while it is obviously true that: for every $w \in d \;\;\& \;\;i\in I \setminus I_0$
\begin{equation}
H_i^{aw} = H_i^{w}. 
\label{compositionnal_recurrence22}
\end{equation}
Thus the family $(H_i)_{i \in I}$ fulfills a system of regular recurrent relations 
in the monoid $\HOM({\cal W}^*,{\cal W}^*)$. 
The aim of next paragraph is to prove that this system is {\em noetherian}.
\subsection{Termination}
\label{subsub-termination}
Let us denote by ${\cal C}_1$ the rewriting system consisting of all 
rules of the form 
\begin{equation}
H_i^{aw} \rightarrow \prod_{j=1}^J H_{i_j}^{w_j}
\label{e-generic_rule}
\end{equation}
which belong to the set of equations (\ref{compositionnal_recurrence21}) and $\bigcup_{j=1}^J \Al(i_j,w_j) \neq \emptyset$.\\
Let us note by ${\cal C}_2$ the rewriting system consisting of all 
rules of the form (\ref{e-generic_rule})
which belong to the set of equations (\ref{compositionnal_recurrence21}) and $\bigcup_{j=1}^J \Al(i_j,w_j) =\emptyset$.\\
Similarly ${\cal C}_3$ corresponds to the set of equations (\ref{compositionnal_recurrence22})
oriented from left to right and finally
$${\cal C} := {\cal C}_1 \cup {\cal C}_2 \cup {\cal C}_3.$$

This subsection proves termination of the system ${\cal C}$. Our proof-strategy consists in  relating derivations (modulo $\rightarrow_\mathcal{C}$)
with derivations (modulo $\rightarrow_\mathcal{A}$). We shall essentially show that, given a starting term $H_i^w$,
and some starting derivation $D$ of the form 
$\WD \rightarrow_\mathcal{A}^* H_i^w(\WD)$,
every derivation $H_i^w\rightarrow^n_\mathcal{C}\prod_{\lambda=1}^\Lambda H_{i_\lambda}^{w_\lambda}$ 
leads to a non-trivial {\em refinement} of size $J \geq n$ for the fixed derivation $D$.
This shows that $n$ cannot exceed the length of $D$.
We introduce below a notion of {\em derivation-path} and notions of {\em types} for derivations and derivation-paths,
which will allow a smooth inductive proof of the above refinement property (see Lemma \ref{l-fromC_toA}).

\begin{figure}
\begin{center}
\begin{tikzpicture}[scale=0.40]
\node[left] at (0,12){$H_i^w$};
\draw [->] (0,12) -- (8,24);
\draw [->] (0,12) -- (8,2);
\node at (4,12){$\longrightarrow^n_{{\cal C}}$};
\node at (10,2){$H_{i_J}^{w_J}$};
\node at (10,10){$H_{i_{j+1}}^{w_{j+1}}$};
\node at (10,14){$H_{i_{j}}^{w_{j}}$};
\node at (10,20){$H_{i_{2}}^{w_{2}}$};
\node at (10,24){$H_{i_{1}}^{w_{1}}$};
\node at (10,24){$H_{i_{1}}^{w_{1}}$};
\node at (10,4){$\circ$};
\node at (10,6){$\vdots$};
\node at (10,8){$\circ$};
\node at (10,12){$\circ$};
\node at (10,16){$\vdots$};
\node at (10,18){$\circ$};
\node at (10,22){$\circ$};
\node at (20,2){$D_J\;\;\;\;{\tiny +}\downarrow{\cal A}$};
\node at (20,6){$\vdots$};
\node at (20,10){$D_{j+1}\;\;\;\;{\tiny +}\downarrow{\cal A}$};
\node at (20,14){$D_{j}\;\;\;\;{\tiny +}\downarrow{\cal A}$};
\node at (20,16){$\vdots$};
\node at (20,20){$D_2\;\;\;\;{\tiny +}\downarrow{\cal A}$};
\node at (20,24){$D_1\;\;\;\;{\tiny +}\downarrow{\cal A}$};
\draw [-] (15,0) -- (25,0);
\draw [-] (15,4) -- (25,4);
\draw [-] (15,8) -- (25,8);
\draw [-] (15,12) -- (25,12);
\draw [-] (15,18) -- (25,18);
\draw [-] (15,22) -- (25,22);
\draw [-] (15,25) -- (25,25);
\node[right] at (26,14) {$\varepsilon_j(T_j,w_j,W_j) + \bar{\varepsilon}_j\bar{\tau}$};
\node[right] at (25,10) {$\varepsilon_{j+1}(T_{j+1},w_{j+1},W_{j+1}) + \bar{\varepsilon}_{j+1}\bar{\tau}$};
\node at (30,6){$\vdots$};
\node[right] at (26,2) {$\varepsilon_J(T_J,w_J,W_J) + \bar{\varepsilon}_J\bar{\tau}$};
\node at (30,16){$\vdots$};
\node[right] at (26,20) {$\varepsilon_2(T_2,w_2,W_2) + \bar{\varepsilon}_2\bar{\tau}$};
\node[right] at (26,24) {$\varepsilon_1(T_1,w_1,W_1) + \bar{\varepsilon}_1\bar{\tau}$};
\node at (0,26){Derivation (mod ${\cal C}$)};
\node at (10,26){Seq of morphisms};
\node at (20,26){Deriv-path (mod ${\cal A}$)};
\node at (30,26){Type};
\end{tikzpicture}
\caption{Derivations (modulo  ${\cal C}$) versus derivation-paths (modulo ${\cal A}$).}
\end{center}
\label{fig-fromC-toA}
\end{figure}
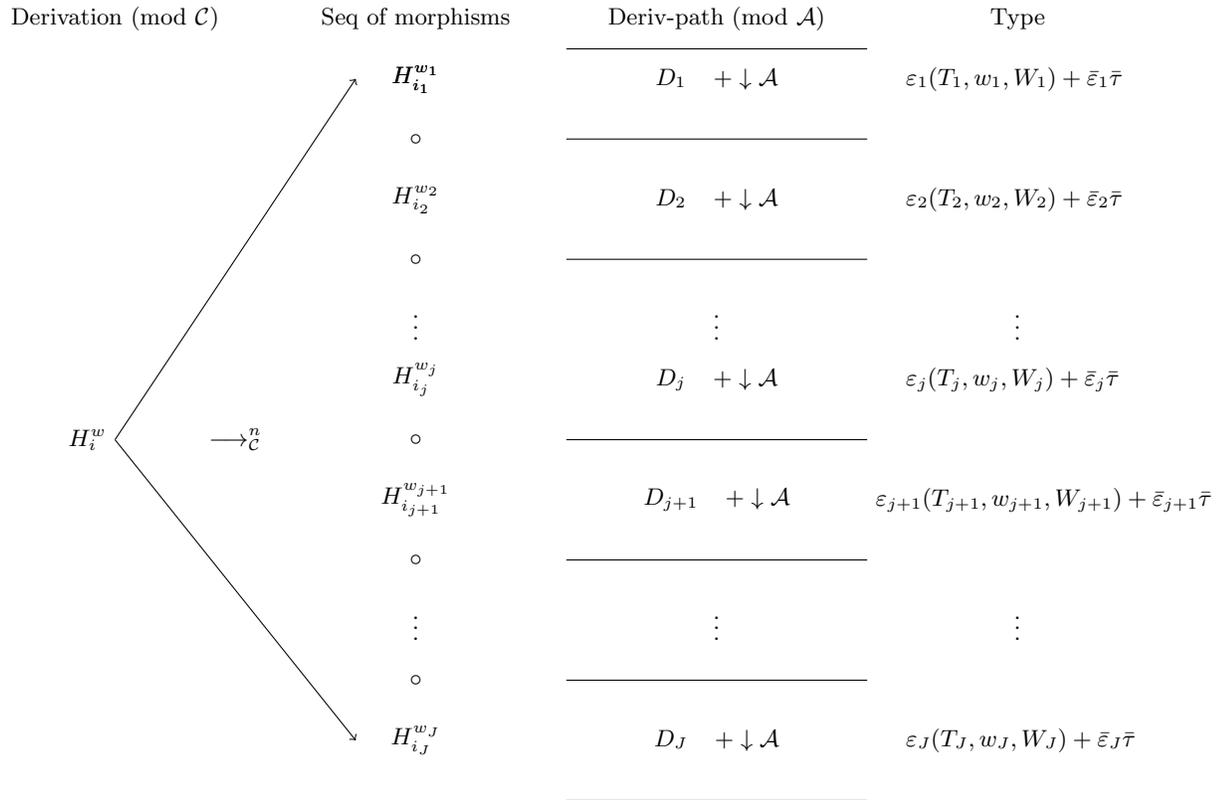
Figure \ref{fig-fromC-toA} depicts the relationship between some ${\cal C}$-derivation $\longrightarrow^n_{{\cal C}}$
starting from $H_i^w$ and some refinement of length $J \geq n$ of the derivation $\WD \rightarrow_{{\cal A}}^*H_i^w(\WD)$
(for some fixed well-choosen variable-term-word of order 3 $\WD$); the numbers $\varepsilon_j,\bar{\varepsilon}_j$ take their value
in $\{0,1\}$ and $\sum_{j=1}^J\bar{\varepsilon}_j\geq n_1$ (where $n_1$ is the number of rules from ${\cal C}_1$ used in the ${\cal C}$-derivation).\\

\paragraph{Types and undeterminates}
We define a finite set of types ${\cal T}$ by
$${\cal T} := \{ (T,w,W) \mid T \in \Gamma_2,w \in \Gamma_3^*,W \in {\cal W}\}\cup 
\{\varepsilon, \bar{\tau}\}$$
where $\bar{\tau}$ is just a new symbol. 

We manipulate here
terms (and variable-terms) of various levels in $[1,3]$, with pushdown symbols in $\Gamma$ and undeterminates in $E$.
Since $E \subseteq \hat{\Gamma}^*/\equiv_2$ the self-extension $\hat{\equiv_2}$ is defined over $\hat{\Gamma}^**E^*$,
which includes all the sets $\ell-\term(\Gamma,E)$ for $1 \leq \ell \leq 3$.
In the following we keep the same notation $\equiv_2$ for what should be denoted by $\hat{\equiv_2}$
(see Definition (\ref{eq-extension-congruence}) in \S \ref{par-prelimin-congruences}).
\begin{definition}
Let $D$ be a derivation (modulo $\rightarrow_\mathcal{A}$) over $(3-\vterm(\Gamma,E))^*$. Its type $\tau(D)$ is the following 
element of ${\cal T}$:\\
$\tau(D):= (T,w,W)$ if $D$ has the form $L\cdot W[T[w]u/e] \cdot R \rightarrow^*_\mathcal{A} 
L \cdot H_i^w(W)[u/e] \cdot R$
where $W=pS[e]q$,$W \in \Al(i,w)$, $u \in 2-\term(\Gamma , E), u \in e$ and $L,R \in (3-\vterm(\Gamma, E))^*$\\
$\tau(D):= \bar{\tau}$ if $D$ has not the above form and $\ell(D) \geq 1$\\
$\tau(D):= \varepsilon$ if $\ell(D)=0.$
\label{def-type-of-derivation}
\end{definition}
\begin{lemma}
The binary relation $\tau$ introduced by Definition \ref{def-type-of-derivation} is a map:
every derivation has exactly one type.
\label{lem-just-one-type}
\end{lemma}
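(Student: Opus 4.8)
The plan is to check that the relation $\tau$ of Definition \ref{def-type-of-derivation} is both total and single-valued. Totality is immediate from a trichotomy on the length $\ell(D)$: if $\ell(D)=0$ the third clause assigns $\varepsilon$; if $\ell(D)\geq 1$ then either $D$ has the shape required by the first clause, which assigns some triple $(T,w,W)$, or it does not, and the second clause assigns $\bar{\tau}$. Hence every derivation receives at least one type, and it remains to prove single-valuedness.

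First I would separate the three clauses. The $\varepsilon$-clause applies exactly when $\ell(D)=0$, whereas the other two require $\ell(D)\geq 1$: the second explicitly, and the first because $W\in\Al(i,w)$ forces, through (\ref{derivational_condition}), a derivation $pS[T[w]\Omega]q\rightarrow^{+}_{\mathcal{A}}\prod_{j}(p_{i,j}S_{i,j}[\Omega]q_{i,j})$ of positive length in which the active factor, of level-$2$ content $T[w]u$, is turned into a word of factors of level-$2$ content $u$ (or is deleted when $\ell(i,W)=0$). Since $T[w]u\neq u$ and a variable-term is a nonempty letter, the endpoints $\mu,\nu$ of $D$ differ, so $\ell(D)\geq 1$. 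Thus the $\varepsilon$-clause is disjoint from the other two, and the first and second clauses are mutually exclusive by the very wording of the second. It therefore suffices to show that, when $D$ has the first shape, the triple $(T,w,W)$ is uniquely determined.

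Now $D$ fixes its initial word $\mu=L\cdot pS[T[w]u]q\cdot R$ and its final word $\nu=L\cdot H_i^w(W)[u/e]\cdot R$, both read in the free monoid $(3-\vterm(\Gamma,E))^{*}$, where factorisation into variable-terms is unambiguous. Once the position of the active factor in $\mu$ is fixed, the triple is recovered with no freedom: the states $p,q$ and the unique outermost $\Gamma_1$-letter $S$ of $pS[T[w]u]q$ are read off directly; the level-$2$ content $T[w]u$ factors uniquely as its first level-$2$ term $T[w]$ followed by the remainder $u$, which yields $T$, the level-$3$ word $w\in\Gamma_3^{*}$, and $e=[u]_{\equiv_2}$, hence $W=pS[e]q$. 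So uniqueness of $(T,w,W)$ reduces to showing that the active factor, as a variable-term value, is determined by $(\mu,\nu)$.

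This last point is the main obstacle: excluding that two distinct factorisations of $(\mu,\nu)$ produce different active factors. I would exploit the structural contrast that an active factor $pS[T[w]u]q$ has level-$2$ content beginning with a genuine level-$2$ symbol and of length strictly greater than $|u|$, whereas every factor of the replacement word $H_i^w(W)[u/e]$ has level-$2$ content exactly $u$. If two active positions $k_1<k_2$ both witnessed the form, then reading $\nu$ from the two induced factorisations would place at position $k_1$ of $\nu$ both the untouched factor (of content $T[w]u$, from the factorisation based at $k_2$, whose prefix through $k_1$ is left intact) and the first factor of the replacement based at $k_1$ (of content $u$), a length contradiction, unless that replacement word is empty. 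In the remaining degenerate case a length count forces both replacements to be empty, so each factorisation merely deletes a factor; equality of the two results then forces a run of identical factors between $k_1$ and $k_2$, whence $\mu_{k_1}=\mu_{k_2}$ and the two active factors coincide as variable-terms. In every case the active factor is determined, so $(T,w,W)$ is unique and $\tau$ is a map.
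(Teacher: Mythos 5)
Your proof is correct and follows essentially the same route as the paper's: you reduce the statement to uniqueness of the witnessing triple $(T,w,W)$, handle coinciding positions of the active factor exactly as the paper's Case 1 ($|L|=|L'|$, where the common letter determines $p,q,S,T,w,u$ and hence $e$ and $W$), and for distinct positions you derive the same clash the paper derives in Case 2, comparing at the smaller position the untouched letter of level-2 content $T[w]u$ with a letter of the replacement word, whose level-2 content is $u$. The only differences are refinements of detail: you spell out why the $(T,w,W)$-form forces $\ell(D)\geq 1$ (which the paper dismisses as clear), and you explicitly treat the degenerate case of an empty replacement word, which the paper's Case 2 silently excludes by invoking the letter $p_{i,1}S_{i,1}[u]q_{i,1}$.
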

\begin{proof}
Suppose that $A,B \in  (3-\vterm(\Gamma,E))^*$ are the extreme points of a derivation $D: A \rightarrow^*_\mathcal{A} B$
that admits both types $(T,w,W)$ and $(T',w',W')$. Let us show that these types are equal.
This means
$$A= L\cdot W[T[w]u/e] \cdot R \rightarrow^*_\mathcal{A} 
L \cdot H_i^w(W)[u/e] \cdot R=B\;\;$$
and
$$ A= L'\cdot W'[T'[w']u'/e'] \cdot R' \rightarrow^*_\mathcal{A} 
  L' \cdot H_{i'}^w(W')[u'/e'] \cdot R=B .$$
Let us assume that $W=pS[e]q, W'=p'S'[e']q'$\\
{\bf Case 1}: $|L| = |L'|$.\\
The two lefthandsides of rules of $\rightarrow_{\mathcal A}$ occur at the same position of $A$,
hence $W[T[w]u/e]=W'[T'[w']u'/e'] $. It follows that $pS[T[w]u]q=p'S'[T'[w']u']q'$
hence that $p=p', q=q', S=S', T=T', w=w'$. Since $e=[T[w]u]_{\equiv_2}$ and $e'=[T'[w']u']_{\equiv_2}$,
we also have $e=e'$, implying that $W=W'$.\\
{\bf Case 2}: $|L|< |L'|$.\\
By equidivisibility of the monoid $(3-\vterm(\Gamma,E))^*$, there exists some $M \in (3-\vterm(\Gamma,E))^*$, such that:
$$A=L\cdot  W[T[w]u/e] \cdot M \cdot W'[T'[w']u'/e'] \cdot R'$$
and
$$L'= L\cdot  W[T[w]u/e] \cdot M,\;\;R= M \cdot W'[T'[w']u'/e'] \cdot R'.$$
Let us consider the derivation (\ref{derivational_condition}) used for the definition of $H_i^w(pS[e]q)$:
$$(p S[T[w] \Omega]q) \rightarrow_{\mathcal{A}}^+ \prod_{j=1}^{\ell(i,W)} (p_{i,j}S_{i,j}[\Omega] q_{i,j}).$$
The word $B$ has thus a prefix of the form:
$$P:=L\cdot p_{i,1}S_{i,1}[u] q_{i,1}.$$
Since $B$ is also obtained from $A$ by rewriting the suffix $M \cdot W'[T'[w']u'/e'] \cdot R'$, and this derivation
leaves invariant the prefix of length $|L|+1$ of $A$, $B$ has a prefix of the form
$$P'= L\cdot pS[T[w]u]q.$$
Note that the atomic pushdowns  of order 3 $S_{i,1}[u],S[T[w]u]$ are different (because their level 2 are different).
The two prefixes $P,P'$  have same length and have a different last letter : this is impossible.\\
By symmetry $|L|> |L'|$ is also impossible.\\
Clearly the two other possible types ($\bar{\tau}$ and $\varepsilon$ ) are incompatible with a type $(T,w,W)$,
and also mutually incompatible. Hence, every derivation has at most one type.\\
It is clear from the definition that it has at least one type.
\end{proof}
\paragraph{Derivations and derivation-paths}
The {\em partial product} of two derivations is defined as usual and denoted by $\otimes$.
We denote by a dot-symbol $\cdot$ the left-operation of $(3-\vterm(\Gamma,E))^*$ on derivations: it consists in concatenating the left operand to the left, at every step of the derivation. 
The right-operation is also denoted by $\cdot$.
We call {\em derivation path} every sequence $(D_1,\ldots,D_n,\ldots,D_\ell)$ of derivations
such that the product $D_1\otimes  \cdots \otimes D_n \otimes \cdots \otimes D_\ell$ is defined 
(equivalently, such that the product $D_i \otimes D_{i+1}$ is defined for 
every $i \in [1,\ell-1]$).
The {\em partial product} of  two derivation-paths, denoted by $\odot$, is just the concatenation
of the two derivation-paths, when this concatenation is itself a derivation-path, and it is  otherwise undefined.\\
We define the ordering $\preceq$ over derivation-paths as the least ordering, which is compatible 
with the (partial) product $\odot$ and which fulfils that, for every pair of derivations $D,D'$, if  the product
$D \otimes D'$ is defined, then 
$$(D \otimes D') \preceq (D,D').$$

The reader shoud notice the analogy between words and words of level 2 (on one hand), derivations and derivation-paths (on the other hand).
A derivation-path can be interpreted as some {\em decomposition}  of a derivation.
An inequality $\vec{D} \preceq \vec{D'}$ means that both derivation-paths $\vec{D}, \vec{D'}$ are decompositions of the {\em same} derivation
(modulo $\rightarrow_\mathcal{A}$) and the later {\em refines} the former.
\paragraph{Typing the derivation-paths}
We extend the map $\tau$ into a map from the set of derivation-paths to the free commutative monoid
generated by ${\cal T}$,
$\bbbN\langle {\cal T}\rangle$, in such a way that $\tau$ is an homomorphism (of partial monoids).
Namely: for every derivation-path $(D_1,\ldots,D_n,\ldots,D_\ell)$:
$$\tau( D_1,\ldots,D_n,\ldots,D_\ell):= 
\sum_{n=1}^\ell \tau(D_n).$$
Let us use the notation:\\
$$t(i,w):= \sum_{W \in \Al(i,w)} (T,w,W)$$ for every $i\in I_0$ of the form 
$({\cal V},T)$ and every $w \in \Gamma^*$,
$$t(i,w)=0$$
 for every $i\in I \setminus I_0$ and every $w \in \Gamma^*$.
We also note $\al(i,w) := \card(\Al(i,w))$.
We denote by $\leq$ the product ordering over $\bbbN \langle {\cal T}\rangle$:
$$(\sum_{t\in {\cal T}} n_t \cdot t \leq \sum_{t\in {\cal T}} m_t \cdot t)
\Leftrightarrow
(\forall t \in {\cal T}, n_t \leq m_t).
$$
Let us recall that, by Lemma \ref{substitution_principle}, if
$D$ is a derivation (modulo  $\rightarrow^*_\mathcal{A}$), $e\in E$ and $u \in 2-\pds(\Gamma,E)$, 
then $D[u/e]$ is also a derivation (modulo $\rightarrow^*_\mathcal{A}$).
\begin{lemma}
\label{lem-type-is-morphic}
Let us consider a derivation $D$ (modulo  $\rightarrow^*_\mathcal{A}$), some variable-term-words of order 3 $L,R$ (over ($\Gamma,E$),
an undeterminate $e \in E$, a word $u \in \Gamma_3^*$ and two derivation-paths $\vec{D},\vec{D}'$. Then\\
1- $\tau(D[u/e]) = \tau(D)$\\
2- $\tau(L \cdot D \cdot R) = \tau(D)$\\
3- $\tau( \vec{D} \odot \vec{D}')=\tau( \vec{D}) + \tau(\vec{D}')$. 
\end{lemma}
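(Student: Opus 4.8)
The plan is to treat the three assertions separately and to reduce the single-derivation statements (1 and~2) to Lemma~\ref{lem-just-one-type}: once I exhibit, for the transformed derivation, a valid \emph{form} in the sense of Definition~\ref{def-type-of-derivation}, uniqueness of the type does the rest. Assertion~3 is immediate. By definition $\tau$ is the sum $\sum_n\tau(D_n)$ over the components of a derivation-path, and $\odot$ is the concatenation of paths whenever it is defined; hence $\tau(\vec D\odot\vec D')=\tau(\vec D)+\tau(\vec D')$ in $\bbbN\langle{\cal T}\rangle$, with no further work. For assertions~1 and~2 I would argue by cases on the shape of $\tau(D)$, which is either $\varepsilon$, some $(T,w,W)$, or $\bar\tau$.

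The case $\tau(D)=\varepsilon$ is disposed of by length preservation: by Lemma~\ref{substitution_principle} a substitution leaves the number of steps unchanged, so $\ell(D[u/e])=\ell(D)$, and the context operation concatenates fixed words at each step, so $\ell(L\cdot D\cdot R)=\ell(D)$; thus $\ell=0$ holds on one side iff it holds on the other. For $\tau(D)=(T,w,W)$ I would first treat the constructive direction. If $D$ is the special form $L_0\cdot pS[T[w]u_0]q\cdot R_0\rightarrow^*_\mathcal{A} L_0\cdot H_i^w(W)[u_0/e]\cdot R_0$ with $W=pS[e]q$, then $L\cdot D\cdot R$ has the identical shape with contexts $LL_0$ and $R_0R$, giving type $(T,w,W)$ at once; and $D[u/e]$ is obtained by a substitution that commutes with concatenation and with the bracket structure, leaves the level-$2$ symbol $T$ and the level-$3$ word $w$ untouched, and respects the congruence $\equiv_2$, so the marked atom stays in its $\equiv_2$-class and the image is again of type $(T,w,W)$. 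In either case Lemma~\ref{lem-just-one-type} yields the claimed equality.

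The hard part is the \emph{converse}: excluding that the operation manufactures a special form out of a derivation of type $\bar\tau$. My strategy is to track rewriting positions. A step of $\rightarrow_\mathcal{A}$ rewrites exactly one atom; the substitution sends each atom $(p,\omega,q)$ to $(p,\omega[u/e],q)$, i.e. atoms to atoms at unchanged positions, so by the substitution principle $D[u/e]$ has the same length as $D$ and rewrites the atom at the \emph{same} position at every step, the undeterminate $e$ being copied or erased but never influencing the rule chosen. Likewise $L\cdot D\cdot R$ rewrites, at each step, only atoms of the $D$-region, those of $L$ and $R$ being held fixed by construction. Now assume the image derivation has a special form with marked atom $\bar W$ at position $k$. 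As in Lemma~\ref{lem-just-one-type}, Case~2, $\bar W$ is genuinely rewritten, since its level-$2$ content $T[w]u_0$ differs in length from that of the first atom of $H_i^w(W)[u_0/e]$, so $\bar W$ cannot lie in a never-rewritten context; hence $k$ falls inside the image of the $S_0$-region. The decomposition therefore pulls back to the source: the $k$-th atom of the starting word of $D$ is the marked atom, flanked by fixed left and right sub-contexts, so $D$ itself is of type $(T,w,W)$ — contradicting $\tau(D)=\bar\tau$. Together with the forward direction and the $\varepsilon$-case, this exhausts the possibilities and establishes the invariance asserted in~1 and~2.
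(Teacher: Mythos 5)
Your proposal is correct, but it is a genuinely worked-out proof of something the paper dispatches in one sentence: the paper's entire argument is ``this lemma is an immediate consequence of the definitions of $\tau$''. What you add, and what the paper leaves completely implicit, is the recognition that only one direction is definitional: assertion~3 and the $\varepsilon$-case follow from the definitions, and for a derivation of type $(T,w,W)$ the special form is visibly preserved by context-concatenation and by substitution, after which Lemma~\ref{lem-just-one-type} closes the argument. The real content is your converse step --- that neither operation can \emph{manufacture} a special form out of a derivation of type $\bar{\tau}$ --- and your position-tracking argument for it is sound: substitution and context-concatenation preserve the length, the positions rewritten, and the rules applied at each step, so a special form of the image pulls back to a special form of the source. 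Two remarks on what your pullback silently relies on. First, when you conclude that ``the $k$-th atom of the starting word of $D$ is the marked atom'', the source atom is only the \emph{preimage} of the marked atom, and one must rule out that the head $T[w]$ of its level-$2$ content, or part of the trailing word $u_0$ in the endpoint, is supplied by the substituted term $u$ rather than present in the source; this is rescued by the fact (underlying Lemma~\ref{substitution_principle}) that no rule of $\rightarrow_\mathcal{A}$ applies to an atom whose topsyms involve an undeterminate --- $\delta$ is only defined on $\TOPSYMS(k,\Gamma)$ and a single atom admits no decomposition --- so a derivation can never read, nor pop past, an occurrence of $e$, which forces the source atom and the source endpoint to carry the special shape literally. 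You should state this explicitly, since it is the one place where the pullback could fail. Second, assertion~1 is only true under the hypothesis $u \in e$ (substituting a representative of the class for the class-undeterminate, so that $\equiv_2$-classes, hence the letter $W=pS[e'']q$ and the set $\Al(i,w)$, are preserved); you assume this implicitly (``the marked atom stays in its $\equiv_2$-class''), which matches the paper's actual usage in Lemma~\ref{l-mixedrule_increases_the_type} even though the statement's ``$u \in \Gamma_3^*$'' is garbled. Finally, the ``$S_0$-region'' in your last paragraph is an undefined notation --- presumably the $D$-region introduced earlier. None of this is a genuine gap; your proof supplies exactly the details the paper's one-liner omits.
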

This lemma is an immediate consequence of the definitions of $\tau$.
\begin{lemma}
Let $H_i^{aw}(W) = \prod_{j=1}^{\ell(i,a,d,W)} 
H_{\alpha(i,a,d,W,j)}^{\omega_{i,a,d,W,j} \cdot w} \circ \Phi_{i,a,d,W,j} 
\circ H_{\beta(i,a,d,W,j)}^{\omega_{i,a,d,W,j} \cdot w}(V_{i,a,d,W,j})$
be a ``mixed rule'' with $i=({\cal V},T),W \in \Al(i,w)$.
Let $D$ be a derivation (modulo $\rightarrow_\mathcal{A}$) of type $(T,aw,W)$.\\
If the rule realizes Case 1.1, then
there exists a derivation-path $\vec{D}' \succeq (D)$ such that:
$$\tau(\vec{D}') \geq \sum_{j=1}^{\ell(i,a,d,W)} t(\alpha(i,a,d,W,j),\omega_{i,a,d,W,j} \cdot w)
+ t(\beta(i,a,d,W,j),\omega_{i,a,d,W,j} \cdot w).$$
If the rule realizes Case 1.2 or 1.3 or 2,
then
there exists a derivation-path $\vec{D}' \succeq (D)$ such that:
$$\tau(\vec{D}') \geq (\sum_{j=1}^{\ell(i,a,d,W)} t(\alpha(i,a,d,W,j),\omega_{i,a,d,W,j} \cdot w)
+ t(\beta(i,a,d,W,j),\omega_{i,a,d,W,j} \cdot w))+ \bar{\tau}.$$
\label{l-mixedrule_increases_the_type}
\end{lemma}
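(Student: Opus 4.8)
I want to show that a single ``mixed rule'' (an instance of equation (\ref{mixed_recurrence}) with $W \in \Al(i,w)$) can be realized at the level of derivation-paths in a way that refines the given derivation $D$ of type $(T,aw,W)$, and that the type of the refinement dominates the ``expected'' type-sum on the right, with an extra $\bar{\tau}$ in all cases except Case 1.1. The central idea is that each factor $H_{\alpha(\cdots)}^{\omega \cdot w}$ and $H_{\beta(\cdots)}^{\omega \cdot w}$ appearing in the righthand-side corresponds, via the defining derivations (\ref{derivational_condition}), to an actual sub-derivation modulo $\rightarrow_\mathcal{A}$; so the refinement is built by \emph{splicing in} those $\mathcal{A}$-derivations wherever a morphism $H_j^{w_j}$ is applied. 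The bookkeeping of the resulting type is then handled by Lemma \ref{lem-type-is-morphic}, which tells me that $\tau$ is additive over the partial product $\odot$ and invariant under the left/right context operations and under substitution $[u/e]$.

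\textbf{First I would treat each case separately, following the case analysis already fixed in Lemma \ref{l-mixed_recurrence_exists}.} In Case 1.1 (coming from $\pop_2$, or from a letter $W \notin \Al(i,ad)$, or from the first step of $\pop_3$) the morphisms $\alpha$ and $\beta$ are of the form $(\emptyset,T)$, so $\Al(\alpha,\cdot)=\Al(\beta,\cdot)=\emptyset$ and the target type-sum is $0$; here I simply take $\vec{D}' := (D)$ and invoke $\tau(D)=(T,aw,W)\geq 0$. In Cases 1.2, 1.3 and 2, the derivation $W[T[aw]e/e] \rightarrow_\mathcal{A}^+ \cdots \rightarrow_\mathcal{A}^* H_i^{aw}(W)$ exhibited in the proof of Lemma \ref{l-mixed_recurrence_exists} has a genuine first step consuming $T[aw]$, after which the remaining computation factors through the component derivations defining $H_\alpha^{\omega\cdot w}$ and $H_\beta^{\omega\cdot w}$ on each letter $V_{i,a,d,W,j}$ (respectively on each letter occurring in $H_\alpha^{\omega\cdot w}(V)$, in the $push_2$ case). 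I would decompose $D$ as $\vec{D}'$, the concatenation (via $\odot$) of: the derivation realizing this first step, whose type is $\bar\tau$ since it consumes $T[aw]$ but does not itself have the canonical form of Definition \ref{def-type-of-derivation}; and then, for each $\alpha$- and $\beta$-factor and each letter in $\Al$ of the relevant alphabet, the corresponding sub-derivation, each of which has type exactly $(T',\omega\cdot w,V)$ by definition of $\tau$ on a canonical derivation.

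\textbf{The type count then follows by additivity.} By Lemma \ref{lem-type-is-morphic}(3), $\tau(\vec{D}')$ is the sum of the types of its factors; the sub-derivation attached to $H_{\alpha(i,a,d,W,j)}^{\omega_{i,a,d,W,j}\cdot w}$ contributes, over all its letters, exactly $t(\alpha(i,a,d,W,j),\omega_{i,a,d,W,j}\cdot w)$ — this is the definition of $t(i,w)$ as $\sum_{V\in\Al(i,w)}(T',w,V)$ — and likewise the $\beta$-factor contributes $t(\beta(i,a,d,W,j),\omega_{i,a,d,W,j}\cdot w)$. The extra first step contributes one summand $\bar\tau$, giving the stated inequality in Cases 1.2, 1.3, 2. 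The inequality (rather than equality) absorbs any additional $\bar\tau$-typed or $\varepsilon$-typed micro-steps produced when the morphisms are applied letter-by-letter, as permitted by the context- and substitution-invariance of $\tau$ in parts (1),(2).

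\textbf{The main obstacle} I expect is Case 1.3 ($push_2$), because there the composition $H_\beta^{aw}\circ\Phi\circ H_\alpha^{aw}$ genuinely nests two non-trivial morphisms: the $\beta$-factor is applied not to the original letter $W$ but to every letter $V=p'S'[e']q'$ occurring in $H_\alpha^{aw}(p_1S[e']q)$, and I must verify that each such $V$ lies in $\Al(\alpha,aw)$ and that $\Phi(V)$ lies in $\Al(\beta,aw)$ — which is precisely the non-emptiness bookkeeping already established in equations (\ref{e-letters_Ha})–(\ref{e-letters_HbPha}) of the proof of Lemma \ref{l-mixed_recurrence_exists}. Reusing those facts, each inner application is a canonical derivation of a well-defined type, so the $\odot$-decomposition is legitimate and the $t(\beta,\cdots)$ summand is fully accounted for. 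Threading the substitution $[T''[aw]e/e']$ through these sub-derivations, and checking via Lemma \ref{lem-type-is-morphic}(1) that it leaves all types unchanged, is the delicate but routine part of the argument.
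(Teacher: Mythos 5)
Your proposal is correct and follows essentially the same route as the paper's own proof: the same case split inherited from Lemma \ref{l-mixed_recurrence_exists}, taking $\vec{D}':=(D)$ in Case 1.1, decomposing the canonical derivation $D_0$ in Cases 1.2/1.3/2 into a first step of type $\bar{\tau}$ followed by the canonical sub-derivations realizing the $\alpha$- and $\beta$-factors (reusing, in the $\push_2$ case, the non-emptiness facts (\ref{e-letters_Ha})--(\ref{e-letters_HbPha}) to see that each letter of $H_\alpha^{w'}(V)$ yields a $\beta$-typed sub-derivation), summing types by Lemma \ref{lem-type-is-morphic}(3), and reducing a general $D$ of type $(T,aw,W)$ to $D_0$ via the substitution- and context-invariance of $\tau$. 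One harmless slip: in your parenthetical for Case 1.1, the $\pop_3$ transition in fact realizes Case 1.2 (only $\pop_2$ and the letters $W\notin\Al(i,ad)$ give Case 1.1), but since your argument is keyed to the Cases rather than to the pda operations, this does not affect the proof.
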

\begin{proof}
Let us consider a rule and a derivation $D$ fulfiling the hypothesis of the lemma.
By Lemma \ref{l-mixed_recurrence_exists} we know that one of cases 1.1,1.2,1.3,2 must occur.\\
{\bf Case 1.1}:\\
Then $\sum_{j=1}^{\ell(i,a,d,W)} t(\alpha(i,a,d,W,j),\omega_{i,a,d,W,j} \cdot w)
+ t(\beta(i,a,d,W,j),\omega_{i,a,d,W,j} \cdot w)=0$.
Taking $\vec{D}' := (D)$, the conclusion of the lemma is true.\\
{\bf Case 1.2}:\\
Let
$D_0: W[T[aw]e/e] \rightarrow_\mathcal{A} V_{i,a,d,W,1}[T'[\omega_{i,a,d,W,j}w]e/e] \rightarrow^*_\mathcal{A} H_i^{aw}(W)[e/e].$
Let us decompose $D$ as $D_1 \otimes D_2$ with:
$$D_1:W[T[aw]e/e]\rightarrow_\mathcal{A} V_{i,a,d,W,1}[T'[\omega_{i,a,d,W,j}w]e/e] $$
$$D_2: V_{i,a,d,W,1}[T'[\omega_{i,a,d,W,1}w]e/e] \rightarrow^*_\mathcal{A} H_i^{aw}(W)[e/e]
= H_{\alpha(i,a,d,W,1)}^{\omega_{i,a,d,W,1} \cdot w} (V_{i,a,d,W,1})[e/e].$$
Since the extremity of $D_1$ is not of the form $H_i^{aw}(W)[e/e] $,
$\tau(D_1)=\bar{\tau}$.
We also see that: $\tau(D_2)=(T',\omega_{i,a,d,W,j}w,V_{i,a,d,W,1})$.
Let $\vec{D}':= (D_1,D_2)$.
This derivation-path fulfils:
$$ \vec{D}' \succeq (D)$$
and
\begin{eqnarray*}
  \tau(\vec{D}')& = & \tau(D_1) + \tau(D_2) \\
  & =& \bar{\tau} + (T',\omega_{i,a,d,W,j}w,V_{i,a,d,W,1})\\
  & =& (T',\omega_{i,a,d,W,j}w,V_{i,a,d,W,1}) + \bar{\tau}.
\end{eqnarray*}
But the value of $t(\beta(i,a,d,W,1),\omega_{i,a,d,W,1} \cdot w)$ is $0$,
hence $\vec{D}'$ fulfils the conclusion of the lemma.\\
A general derivation $D$ of type $(T,aw,W)$, fulfilling this Case, must be of the form
$D = L \cdot D_0[u/e] \cdot R\;\;$ for some $L,R \in (3-\vterm(\Gamma, E))^*, u \in 2-\term(\Gamma, E)$
and $ u \equiv_{2}e$.
Since the substitution $[u/e]$ and the product by $L$, on the left, and by $R$, on the right,
preserve the operation $\otimes$ and preserve the type, we obtain the required conclusion for $D$.\\
{\bf Case 1.3}:\\
Let  
$D_0: W[T[aw]e/e] \rightarrow_{\mathcal{A}}
V[T'[w']T''[w']e/e'] \rightarrow^*_\mathcal{A} H_i^{aw}(W),$\\
where, for ease of notation, we have used the abbreviation:
$w' := \omega_{i,a,d,W,j}w,\;V:= V_{i,a,d,W,1}$.\\
Let us also abbreviate 
$\alpha := \alpha(i,a,d,W,1);\;\Phi:= \Phi_{i,a,d,W,1}\;\;\beta:= \beta(i,a,d,W,1)$.
We define two derivations $C_1,C_2$ by:
$$
C_1: W[T[aw]e/e] \rightarrow_{\mathcal{A}}
V[T'[w']T''[w']e/e']$$
$$C_2: V[T'[w']T''[w']e/e'] \rightarrow^*_\mathcal{A} H_{\alpha}^{w'}(V) [T''[w']e/e']$$
The word $H_{\alpha}^{w'}(V)$ decomposes into letters as:
$$H_{\alpha}^{w'}(V) := W_1 \cdots W_\lambda \cdots W_\Lambda$$ where each $W_\lambda$ has the form
$W_\lambda=p_\lambda S_\lambda[e']q_\lambda$.
Let us define
$$W_\lambda':= W_\lambda[e/e'].$$
We then have the derivations $D_\lambda$ defined by:
\begin{equation}
D_\lambda: \;\;\;W_\lambda[T''[w']e/e']= W'_\lambda[T''[w']e/e]\rightarrow^*_\mathcal{A} H_\beta^{w'}(W_\lambda')
\label{e-derivation_Dlambda}
\end{equation} 
By definition of Case 1.3 in Lemma \ref{l-mixed_recurrence_exists}, $\Al(\beta)=\{\Phi(W_\lambda) \mid 1 \leq \lambda \leq \Lambda\}$
hence $\Al(\beta)=\{W'_\lambda \mid 1 \leq \lambda \leq \Lambda\}$.
Let us denote by $D_\lambda$ the derivation (\ref{e-derivation_Dlambda}). We then have
\begin{equation}
D_0= C_1 \otimes C_2 \otimes \bigotimes_{\lambda=1}^\Lambda D_\lambda
\label{e-D0_decomposition1.3}
\end{equation}
and
\begin{equation}
\tau(C_1)=\bar{\tau},\;\tau(C_2)=(T',w',V),\;\tau(D_\lambda)=(T'',w',W'_\lambda) 
\end{equation}
Let us define 
$$\vec{D}' := (C_1,C_2,D_1,\ldots,D_\lambda,\ldots,D_\Lambda)$$
By (\ref{e-D0_decomposition1.3}), $\vec{D}' \succeq (D_0)$.
Since $\Al(\alpha,w')=\{V\}$ and $\Al(\beta,w')=\{W'_\lambda \mid 1 \leq \lambda \leq \Lambda\}$,
we can write: 
\begin{eqnarray}
  \tau(\vec{D}') & = & \tau(C_1) + \tau(C_2) + \sum_{\lambda=1}^\Lambda \tau(D_\lambda)\nonumber\\
                 & = & \bar{\tau} + t(\alpha,w') + t(\beta,w')\nonumber\\
                & \geq & t(\alpha,w')  + t(\beta,w') + \bar{\tau} 
\end{eqnarray}
as required.
We reduce to $D_0$ the  treatment of a general derivation $D$ of type $(T,aw,W)$, fulfilling Case 1.3, as we did for Case 1.2.\\ 
{\bf Case 2}:\\
Let  
$D_0: W[T[aw]e/e] \rightarrow_{\mathcal{A}}
V_1[T[w']e/e] V_2[T[w']e/e] \rightarrow^*_\mathcal{A} H_i^{aw}(W)$
where, for ease of notation, we have used the abbreviation:
$w' := \omega_{i,a,d,W,j}\cdot w,\;V_1:= V_{i,a,d,W,1},\;V_2:= V_{i,a,d,W,2}$.\\
Let us also abbreviate 
$\alpha_1 := \alpha(i,a,d,W,1);\;\alpha_2 := \alpha(i,a,d,W,2),
\;\;\beta_1:= \beta(i,a,d,W,1),\;\;\beta_2:= \beta(i,a,d,W,2)$.
We define three derivations $C_1,D_1,D_2$ by:
$$C_1: W[T[aw]e/e] \rightarrow_{\mathcal{A}}^+
V_1[T[w']e/e] V_2[T[w']e/e]
$$
$$
D_1:  V_1[T[w']e/e] \rightarrow^* _{\mathcal{A}} H_{\alpha_1}^{w'}(V_1).$$
$$
D_2:  V_2[T[w']e/e] \rightarrow^* _{\mathcal{A}} H_{\alpha_2}^{w'}(V_2).$$
We then have
\begin{equation}
D_0= C_1 \otimes D_1 \otimes D_2
\label{e-D0_decomposition2}
\end{equation}
and
\begin{equation}
\tau(C_1)=\bar{\tau},\;\tau(D_1)=(T,w',V_1),\;\tau(D_2)=(T,w',V_2).
\end{equation}
Let us define 
$$\vec{D}' := (C_1,D_1,D_2)$$
Since $\Al(\alpha_1,w')=\{V_1\}$,$\Al(\alpha_2,w')=\{V_2\}$  and $\Al(\beta_1,w')= \Al(\beta_2,w')=\emptyset$,
we can write: 
\begin{eqnarray}
\tau(\vec{D}') & = & \tau(C_1) + \tau(D_1) + \tau(D_2) \nonumber\\
          & = & \bar{\tau} + t(\alpha_1,w')  + t(\alpha_2,w') \nonumber\\
          & = & \biggl( \sum_{j=1}^2 (t(\alpha_j,w')+ t(\beta_j,w'))\biggr) +\bar{\tau}
\end{eqnarray}
as required.
We reduce to $D_0$ the  treatment of a general derivation $D$ of type $(T,aw,W)$, fulfilling Case 2, as we did for Case 1.2.\\ 
\end{proof}
We treat in next lemma the case of a rule of ${\cal C}_1$.
\begin{lemma}
Let us consider the compositional rule 
$H_i^{aw} \rightarrow \prod_{j=1}^J H_{i_j}^{w_j}$ , given in (\ref{compositionnal_recurrence21}), 
corresponding to a class $d$ and an index $i \in I_0$
of the form $i=({\cal V},T)$ and suppose that $\bigcup_{j=1}^J \Al(i_j,w_j) \neq \emptyset$.
Let $\vec{D}$ be a derivation-path (modulo $\rightarrow_\mathcal{A}$) and let $\tau_0 \in 
\bbbN\langle {\cal T} \rangle$ such that
$$\tau(\vec{D})= t(i,aw)+ \tau_0.$$
Then, there exists a derivation-path $\vec{D}' \succeq \vec{D}$ such that
$$\tau(\vec{D}') \geq \sum_{j=1}^{J} t(i_j,w_j)+ \bar{\tau}+\tau_0.$$
\label{l-compositionalrule1_increases_the_type}
\end{lemma}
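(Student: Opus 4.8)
The plan is to read off, from inside $\vec{D}$, one sub-derivation for each type $(T,aw,W)$ with $W\in\Al(i,aw)$, to refine each of them by the mixed-rule Lemma~\ref{l-mixedrule_increases_the_type}, and finally to reassemble these refinements into a single derivation-path refining $\vec{D}$. The first preparatory step is to record how the right-hand side $\prod_{j=1}^J H_{i_j}^{w_j}$ of the compositional rule (\ref{compositionnal_recurrence21}) is built: its factors are either \emph{technical} homomorphisms $H^w_{i,a,d,\ast}$ (whose index lies in $I\setminus I_0$, so that $t(\cdot,\cdot)=0$), or the \emph{genuine} factors $H_{\alpha(i,a,d,W,j)}^{\omega_{i,a,d,W,j}\cdot w}$ and $H_{\beta(i,a,d,W,j)}^{\omega_{i,a,d,W,j}\cdot w}$ for $W\in\Al(i,ad)$ and $j\in[1,\ell(i,a,d,W)]$. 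Since the technical factors contribute nothing to $t$, this yields the identity
$$\sum_{j=1}^J t(i_j,w_j)=\sum_{W\in\Al(i,ad)}\sum_{j=1}^{\ell(i,a,d,W)}\bigl(t(\alpha(i,a,d,W,j),\omega_{i,a,d,W,j}\cdot w)+t(\beta(i,a,d,W,j),\omega_{i,a,d,W,j}\cdot w)\bigr),$$
and, as $w\in d$, Lemma~\ref{l-equivA3_determines_ALiw} gives $\Al(i,aw)=\Al(i,ad)$, so the outer index set matches the types appearing in $t(i,aw)$.

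Next I would exploit the hypothesis $\tau(\vec{D})=t(i,aw)+\tau_0=\sum_{W\in\Al(i,aw)}(T,aw,W)+\tau_0$. Because the types $(T,aw,W)$ are pairwise distinct for distinct $W$, the coefficient of each is at least $1$, so I can select for every $W\in\Al(i,aw)$ a component $D_W$ of $\vec{D}$ with $\tau(D_W)=(T,aw,W)$, these being pairwise-distinct components; the remaining components of $\vec{D}$ then carry total type exactly $\tau_0$. Applying Lemma~\ref{l-mixedrule_increases_the_type} to each $D_W$ produces a refinement $\vec{D}'_W\succeq(D_W)$ whose type dominates $\sum_{j=1}^{\ell(i,a,d,W)}\bigl(t(\alpha(i,a,d,W,j),\cdot)+t(\beta(i,a,d,W,j),\cdot)\bigr)$, with an extra summand $\bar{\tau}$ precisely when $W$ realizes Case~1.2, 1.3 or~2.

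The decisive point, and where membership in ${\cal C}_1$ is used, is to secure that at least one of these refinements supplies the $\bar{\tau}$ required by the statement. Since the rule lies in ${\cal C}_1$ we have $\bigcup_{j=1}^J\Al(i_j,w_j)\neq\emptyset$, and as the technical factors have empty alphabet, some genuine factor $\alpha(i,a,d,W_0,j_0)$ or $\beta(i,a,d,W_0,j_0)$ has nonempty alphabet; because Case~1.1 forces $\alpha=\beta=(\emptyset,T)$ with empty alphabet, this $W_0$ cannot realize Case~1.1, so $\vec{D}'_{W_0}$ does contribute $\bar{\tau}$. I then form $\vec{D}'$ by substituting inside $\vec{D}$ each selected $D_W$ by its refinement $\vec{D}'_W$ and leaving the other components untouched; compatibility of $\preceq$ with $\odot$ gives $\vec{D}'\succeq\vec{D}$, and the additivity of $\tau$ (Lemma~\ref{lem-type-is-morphic}) gives
$$\tau(\vec{D}')=\sum_{W\in\Al(i,aw)}\tau(\vec{D}'_W)+\tau_0\geq\sum_{j=1}^J t(i_j,w_j)+\bar{\tau}+\tau_0,$$
combining the identity of the first step with the $\bar{\tau}$ furnished by $W_0$. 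I expect the main obstacle to be the bookkeeping of the middle step: justifying that one may extract pairwise-distinct components realizing each type $(T,aw,W)$ while leaving exactly $\tau_0$ behind, and isolating the single index $W_0$ whose case analysis guarantees the one $\bar{\tau}$ that the lemma demands.
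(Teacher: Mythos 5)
Your proposal is correct and follows essentially the same route as the paper's proof: you decompose $\vec{D}$ into one component of type $(T,aw,W)$ per $W\in\Al(i,aw)$ plus a remainder of type $\tau_0$, refine each component via Lemma~\ref{l-mixedrule_increases_the_type}, match $\sum_{j=1}^J t(i_j,w_j)$ against the double sum over $(W,j)$ of the $\alpha$- and $\beta$-contributions (the technical factors being type-neutral), and extract the extra $\bar{\tau}$ from the fact that $\bigcup_j\Al(i_j,w_j)\neq\emptyset$ forces at least one $W_0$ outside Case~1.1. Your justification of that last step (technical factors and Case-1.1 indices all have empty alphabet) is in fact spelled out more explicitly than in the paper, which simply asserts it.
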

\begin{proof}
Let $\vec{D}$ fulfil the hypothesis of the lemma. It has the form
$$\vec{D}=\vec{E}_0 \odot \biggl(\bigodot_{\lambda=1}^{\al(i,w)} (D_\lambda) \odot \vec{E}_\lambda \biggr)$$
with $\Al(i,w)=\{W_\lambda \mid 1 \leq \lambda \leq \al(i,w)\}$,
$$\sum_{\lambda=0}^{\al(i,w)} \tau(E_\lambda)=\tau_0;\;\;
\tau(D_\lambda)= (T,aw,W_\lambda) \mbox{ for } 1 \leq \lambda \leq \al(i,w).$$
By Lemma \ref{l-mixedrule_increases_the_type}, for every $\lambda \in [1,\al(i,w)]$, 
there exists a derivation-path $\vec{D}'_\lambda \succeq (D_\lambda)$ such that:
\begin{equation}
\tau(\vec{D}'_\lambda) \geq \sum_{j=1}^{\ell(i,a,d,W_\lambda)} t(\alpha(i,a,d,W_\lambda,j),\omega_{i,a,d,W_\lambda,j} \cdot w)
+ t(\beta(i,a,d,W_\lambda,j),\omega_{i,a,d,W_\lambda,j} \cdot w).
\label{e-local_lb}
\end{equation}
Since the righthanside of the given rule 
$$H_i^{aw} \rightarrow \prod_{j=1}^J H_{i_j}^{w_j}$$
contains, as disjoint factors, all the products 
$H_{\alpha(i,a,d,W_\lambda,j)}^{\omega_{i,a,d,W_\lambda,j} \cdot w} 
\circ H^w_{i,a,d,W,j,0}
\circ H_{\beta(i,a,d,W_\lambda,j)}^{\omega_{i,a,d,W_\lambda,j}}$,
and no other factor with index $i'$ and exponent $w'$ such that $t(i',w') \neq 0$,
we get that
\begin{equation}
\sum_{j=1}^{J} t(i_j,w_j)= \sum _{\lambda=1}^{\al(i,w)} (\sum_{j=1}^{\ell(i,a,d,W_\lambda)} t(\alpha(i,a,d,W_\lambda,j),\omega_{i,a,d,W_\lambda,j} \cdot w)
+ t(\beta(i,a,d,W_\lambda,j),\omega_{i,a,d,W_\lambda,j} \cdot w)).
\label{e-rewritetypes_vs_rewritehoms}
\end{equation}
Let us define
$$\vec{D}':=\vec{E}_0 \odot \biggl(\bigodot_{\lambda=1}^{\al(i,w)} \vec{D}'_\lambda \odot \vec{E}_\lambda \biggr).$$
This derivation-path fulfils $\vec{D}' \succeq \vec{D}$ and, by (\ref{e-local_lb},\ref{e-rewritetypes_vs_rewritehoms})
\begin{equation}
\tau(\vec{D}') \geq \biggl( \sum_{j=1}^{J} t(i_j,w_j) \biggr) + \tau_0.
\label{e-global_lb1}
\end{equation}
Assuming that $\bigcup_{j=1}^J \Al(i_j,w_j) \neq \emptyset$, at least one index $j$ cooresponds to some case other than case 1.1 in
Lemma \ref{l-compositionalrule1_increases_the_type}. Hence,  at least one of the minorations
(\ref{e-local_lb}) can be improved by adding $\bar{\tau}$ to its righhand-side.
Hence
$$
\tau(\vec{D}') \geq \biggl(  \sum_{j=1}^{J} t(i_j,w_j) \biggr) + \tau_0 + \bar{\tau}.
\label{e-global_lb2}
$$
\end{proof}
Let us treat now the case of a rule of ${\cal C}_2 \cup{\cal C}_3$.
\begin{lemma}
Let us consider the compositional rule 
$H_i^{aw} \rightarrow \prod_{j=1}^J H_{i_j}^{w_j}$, 
either given in (\ref{compositionnal_recurrence21}), 
by a class $d$ and an index $i \in I_0$
of the form $i=({\cal V},T)$ such that $\bigcup_{j=1}^J \Al(i_j,w_j)=\emptyset$
or given in (\ref{compositionnal_recurrence22}).
Let $\vec{D}$ be a derivation-path (modulo $\rightarrow_\mathcal{A}$) and some $\tau_0 \in 
\bbbN\langle {\cal T} \rangle$ such that
$$\tau(\vec{D})= t(i,aw)+ \tau_0.$$
Then, there exists a derivation-path $\vec{D}' \succeq \vec{D}$ such that
$$\tau(\vec{D}') \geq \sum_{j=1}^{J} t(i_j,w_j)+ \tau_0.$$
\label{l-compositionalrule2_increases_the_type}
\end{lemma}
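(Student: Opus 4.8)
The plan is to exploit the fact that, in both sub-cases covered by the statement, the righthand side of the rule carries no \emph{type-weight}; that is, $\sum_{j=1}^{J} t(i_j,w_j) = 0$ in $\bbbN\langle {\cal T}\rangle$. Once this is established, the \emph{trivial} refinement $\vec{D}' := \vec{D}$ (using reflexivity of $\succeq$) already yields the conclusion, and no genuine refinement — and in particular no extra $\bar{\tau}$, in contrast with Lemma \ref{l-compositionalrule1_increases_the_type} — is needed. This is precisely where the two lemmas diverge: rules of ${\cal C}_1$ are charged a $\bar{\tau}$, whereas rules of ${\cal C}_2\cup{\cal C}_3$ are not.

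First I would verify the vanishing $\sum_{j=1}^{J} t(i_j,w_j) = 0$ by treating the two origins of the rule separately. In the ${\cal C}_2$ sub-case the hypothesis $\bigcup_{j=1}^{J}\Al(i_j,w_j)=\emptyset$ forces $\Al(i_j,w_j)=\emptyset$ for every $j$; since $t(i_j,w_j)=\sum_{W\in\Al(i_j,w_j)}(T_j,w_j,W)$ when $i_j=({\cal V}_j,T_j)\in I_0$ and $t(i_j,w_j)=0$ when $i_j\in I\setminus I_0$, each summand vanishes. In the ${\cal C}_3$ sub-case the rule is $H_i^{aw}\rightarrow H_i^{w}$ coming from (\ref{compositionnal_recurrence22}) with $i\in I\setminus I_0$, so $J=1$, $i_1=i$, $w_1=w$, and $t(i_1,w_1)=t(i,w)=0$ holds directly by the defining clause of $t$ on $I\setminus I_0$.

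Then I would set $\vec{D}':=\vec{D}$. By reflexivity of $\preceq$ we have $\vec{D}'\succeq\vec{D}$, and using the hypothesis $\tau(\vec{D})=t(i,aw)+\tau_0$ together with $t(i,aw)\geq 0$ (every coefficient of $t(i,aw)$ lies in $\bbbN$), the product ordering on $\bbbN\langle{\cal T}\rangle$ gives
\[
\tau(\vec{D}')=t(i,aw)+\tau_0 \;\geq\; \tau_0 \;=\; \sum_{j=1}^{J}t(i_j,w_j)+\tau_0,
\]
which is exactly the required inequality.

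I do not expect any genuine obstacle here: this lemma is the benign counterpart of Lemma \ref{l-compositionalrule1_increases_the_type}, and its whole content is that rules of ${\cal C}_2\cup{\cal C}_3$ cannot \emph{lower} the accumulated type and hence contribute nothing that would compromise the termination bound. The only points that require care are the bookkeeping showing that both defining clauses of $t$ (for $i_j\in I_0$ versus $i_j\in I\setminus I_0$) return $0$ under the stated hypotheses, and the deliberate absence of $\bar{\tau}$ from the conclusion — reflecting that such rules need not be charged against the derivation length in the global termination argument assembled in the sequel.
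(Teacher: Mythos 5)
Your proposal is correct and matches the paper's own proof, which likewise takes the trivial refinement $\vec{D}':=\vec{D}$ and observes that $\bigcup_{j=1}^J \Al(i_j,w_j)=\emptyset$ (directly in the ${\cal C}_2$ case, and via the convention $\Al(i,w)=\emptyset$ for $i \in I\setminus I_0$ in the ${\cal C}_3$ case) forces $\sum_{j=1}^{J} t(i_j,w_j)=0$. Your additional bookkeeping, checking both defining clauses of $t$ and spelling out the inequality $\tau(\vec{D})=t(i,aw)+\tau_0 \geq \tau_0$ in the product ordering on $\bbbN\langle {\cal T}\rangle$, is just a more explicit rendering of the paper's one-line argument.
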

\begin{proof}
Let us choose $\vec{D}':= \vec{D}$. Since $\bigcup_{j=1}^J \Al(i_j,w_j)=\emptyset$,
in fact $\sum_{j=1}^{J} t(i_j,w_j)=0$ and the lemma follows.
\end{proof}
\begin{lemma}
Suppose that $H_i^w \rightarrow^*_\mathcal{C}\prod_{\lambda=1}^\Lambda H_{i_\lambda}^{w_\lambda}$ 
and that this derivation (modulo $\rightarrow_\mathcal{C}$) has $n$ steps in 
$\rightarrow_{\mathcal{C}_1}$.
Let $\vec{D}$ be a derivation-path (modulo $\rightarrow_\mathcal{A}$) such that
$$\tau(\vec{D})= t(i,w).$$
Then, there exists a derivation-path $\vec{D}' \succeq \vec{D}$ such that
$$\tau(\vec{D}') \geq \sum_{\lambda=1}^{\Lambda} t(i_\lambda,w_\lambda)+ n \cdot \bar{\tau}.$$
\label{l-fromC_toA}
\end{lemma}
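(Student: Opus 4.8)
The plan is to induct on the length $m$ of the $\mathcal{C}$-derivation $H_i^w \rightarrow^m_\mathcal{C}\prod_{\lambda=1}^\Lambda H_{i_\lambda}^{w_\lambda}$, maintaining a running lower bound on the type of a refinement of $\vec{D}$ as the derivation is traversed step by step. Writing the derivation as a chain of intermediate configurations
$$H_i^w = P_0 \rightarrow_\mathcal{C} P_1 \rightarrow_\mathcal{C} \cdots \rightarrow_\mathcal{C} P_m = \prod_{\lambda=1}^\Lambda H_{i_\lambda}^{w_\lambda},$$
each $P_k$ being a product of symbols $H_j^v$, I would attach to every such product its total type $\sigma(P_k) := \sum_{\text{factors } H_j^v \text{ of } P_k} t(j,v)$ (so that $\sigma(P_0) = t(i,w)$ and $\sigma(P_m) = \sum_\lambda t(i_\lambda,w_\lambda)$), and let $n_k$ denote the number of $\mathcal{C}_1$-steps among the first $k$ steps. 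The claim to establish, by induction on $k\in[0,m]$, is the existence of a chain $\vec{D} = \vec{D}^{(0)} \preceq \vec{D}^{(1)} \preceq \cdots \preceq \vec{D}^{(m)}$ of derivation-paths with $\tau(\vec{D}^{(k)}) \geq \sigma(P_k) + n_k\cdot\bar{\tau}$; taking $\vec{D}':=\vec{D}^{(m)}$ then yields the lemma, since $n_m = n$.

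The base case $k=0$ is exactly the hypothesis $\tau(\vec{D}) = t(i,w) = \sigma(P_0)$ together with $n_0 = 0$, so I would set $\vec{D}^{(0)}:=\vec{D}$. For the step from $k$ to $k+1$, I would use that $\mathcal{C}$ is monadic: the step $P_k \rightarrow_\mathcal{C} P_{k+1}$ rewrites a single factor $H_j^{av'}$ (the \emph{redex}) of $P_k$, replacing it by a word $\prod_{j'} H_{i_{j'}}^{w_{j'}}$ of total type $s := \sum_{j'} t(i_{j'},w_{j'})$, whence $\sigma(P_{k+1}) = \sigma(P_k) - t(j,av') + s$. Since the redex type $t(j,av')$ is one of the summands of $\sigma(P_k)$, the inductive hypothesis gives $\tau(\vec{D}^{(k)}) \geq \sigma(P_k) \geq t(j,av')$ componentwise in $\bbbN\langle{\cal T}\rangle$, so that $\tau_0 := \tau(\vec{D}^{(k)}) - t(j,av')$ is a well-defined element of $\bbbN\langle{\cal T}\rangle$ and $\tau(\vec{D}^{(k)}) = t(j,av') + \tau_0$, which is precisely the form required by the two single-rule lemmas. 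Moreover $\tau_0 \geq (\sigma(P_{k+1}) - s) + n_k\bar{\tau}$.

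It then remains to split on the nature of the rule applied. If the rule belongs to $\mathcal{C}_1$, Lemma \ref{l-compositionalrule1_increases_the_type} (applied with this $\tau_0$) yields $\vec{D}^{(k+1)} \succeq \vec{D}^{(k)}$ with $\tau(\vec{D}^{(k+1)}) \geq s + \bar{\tau} + \tau_0 \geq \sigma(P_{k+1}) + (n_k+1)\bar{\tau} = \sigma(P_{k+1}) + n_{k+1}\bar{\tau}$, using $n_{k+1} = n_k+1$. If the rule belongs to $\mathcal{C}_2 \cup \mathcal{C}_3$, Lemma \ref{l-compositionalrule2_increases_the_type} yields $\vec{D}^{(k+1)} \succeq \vec{D}^{(k)}$ with $\tau(\vec{D}^{(k+1)}) \geq s + \tau_0 \geq \sigma(P_{k+1}) + n_k\bar{\tau} = \sigma(P_{k+1}) + n_{k+1}\bar{\tau}$, since now $n_{k+1} = n_k$. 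In either case $\vec{D}^{(k+1)} \succeq \vec{D}^{(k)} \succeq \cdots \succeq \vec{D}$ by transitivity of $\preceq$, closing the induction.

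The conceptual content is already carried by the two single-rule lemmas and by the morphicity of $\tau$ (Lemma \ref{lem-type-is-morphic}), so the genuine work here is bookkeeping, and I expect the only points needing care to be the well-definedness of the context type $\tau_0$ as a nonnegative combination in $\bbbN\langle{\cal T}\rangle$ and the correct propagation of the count, so that exactly one $\bar{\tau}$ is gained per $\mathcal{C}_1$-step and none per $\mathcal{C}_2\cup\mathcal{C}_3$-step. The one subtlety worth stressing is that the single-rule lemmas are location-agnostic — they consult only the multiset $\tau(\vec{D}^{(k)})$ — so applying them to a redex sitting in the interior of the product $P_k$ is legitimate precisely because $\tau_0$ absorbs the type contributed by all the other, untouched factors, the refinement order $\preceq$ being compatible with the partial product $\odot$.
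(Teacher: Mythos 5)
Your proof is correct and takes essentially the same route as the paper's: an induction on the length of the ${\cal C}$-derivation in which each single rewriting step (legitimate to isolate since ${\cal C}$ is monadic) is dispatched to Lemma \ref{l-compositionalrule1_increases_the_type} or Lemma \ref{l-compositionalrule2_increases_the_type}, with $\tau_0$ absorbing the type contributed by the untouched factors and one $\bar{\tau}$ gained per ${\cal C}_1$-step. The only cosmetic difference is that you unroll the induction forward with an explicit invariant $\tau(\vec{D}^{(k)}) \geq \sigma(P_k) + n_k\cdot\bar{\tau}$, whereas the paper peels off the last step and applies the induction hypothesis to the prefix; the content is identical.
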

\begin{proof}
We prove this lemma by induction on the length $m$ of the given derivation ( modulo $\rightarrow_{\cal C}$).
If we suppose that $m=0$, then $\Lambda=1,i_\lambda=i, w_\lambda=w,n=0$, so that the conclusion of the lemma does hold.\\
Let us consider a derivation of length $m+1$ with $n$ steps in $\rightarrow_{{\cal C}_1}$.
 It can be decomposed as
$$H_i^w \rightarrow^*_\mathcal{C}\prod_{\lambda=1}^\Lambda H_{i_\lambda}^{w_\lambda}
\rightarrow_\mathcal{C}\prod_{\lambda=1}^{\ell-1} H_{i_\lambda}^{w_\lambda}\circ
\prod_{j=1}^J H_{\iota_j}^{v_j} \circ \prod_{\lambda=\ell+1}^\Lambda H_{i_\lambda}^{w_\lambda}$$
where
\begin{equation}
H_{i_{\ell}}^{w_{\ell}} \rightarrow_\mathcal{C}
\prod_{j=1}^J H_{\iota_j}^{v_j}.
\label{e-lemma10_last_step}
\end{equation} 
Let $\vec{D}$ fulfil the hypothesis of the lemma.\\ 
{\bf Case 1}: The last step (\ref{e-lemma10_last_step}) uses $\rightarrow_{{\cal C}_1}$.\\
By induction hypothesis,
there exists a derivation-path $\vec{D}'' \succeq \vec{D}$ such that
$$\tau(\vec{D}'') \geq \sum_{\lambda=1}^{\Lambda} t(i_\lambda,w_\lambda)+ (n-1) \cdot \bar{\tau}.$$
Using  Lemma \ref{l-compositionalrule1_increases_the_type} with 
$\tau_0=\tau(\vec{D}'')-t(i_\ell,w_\ell)$, we obtain a derivation-path $\vec{D}' \succeq \vec{D}''$ such that
\begin{eqnarray*}
\tau(\vec{D}') & \geq & \sum_{j=1}^J t(\iota_j,v_j)+ \tau_0 +\bar{\tau}\\
& = &
\sum_{\lambda=1}^{\ell-1} t(i_\lambda,w_\lambda) +
\sum_{j=1}^J t(\iota_j,v_j) + \sum_{\lambda=\ell+1}^\Lambda t(i_\lambda,w_\lambda)+ n \cdot \bar{\tau}\end{eqnarray*}
as required.\\
{\bf Case 2}: The last step (\ref{e-lemma10_last_step}) uses $\rightarrow_{{\cal C}_2 \cup {\cal C}_3 }$.\\
By induction hypothesis,
there exists a derivation-path $\vec{D}'' \succeq \vec{D}$ such that
$$\tau(\vec{D}'') \geq \sum_{\lambda=1}^{\Lambda} t(i_\lambda,w_\lambda)+  n \cdot \bar{\tau}.$$
Using Lemma \ref{l-compositionalrule2_increases_the_type} we can conclude as in Case 1.
\end{proof}
For every $ i \in I_0$ and $w \in \Gamma_3^*$, we choose 
an enumeration, without repetition,
of the set $\Al(i,w)$:
$$ \Al(i,w)= \{W_1, W_2, \cdots, W_{\al(i,w)}\},$$
and define the word
$$ \WD(i,w)= W_1 W_2 \cdots W_{\al(i,w)}$$ 
\begin{lemma}
The relation $\rightarrow_{{\cal C}_2 \cup {\cal C}_3}$ is noetherian.
\label{lem-C2C3_isnotherian}
\end{lemma}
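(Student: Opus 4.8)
The plan is to exhibit a well-founded numerical measure $\mu$ on the terms $H_i^w$ such that every rule of ${\cal C}_2 \cup {\cal C}_3$ strictly decreases $\mu$ from its left-hand side to each factor of its right-hand side, and then to conclude by the multiset extension of $(\NN,<)$. Since each rule of ${\cal C}$ rewrites a single letter $H_i^{aw}$ into a product of letters, a ${\cal C}_2 \cup {\cal C}_3$-derivation starting from an arbitrary product is an interleaving of independent derivations of its factors; hence it suffices that a single rewrite step strictly decrease the finite multiset of the measures of the factors. By the Dershowitz--Manna theorem the multiset extension of the well-founded order $(\NN,<)$ is itself well-founded, so the absence of an infinite multiset-decreasing sequence yields that $\rightarrow_{{\cal C}_2 \cup {\cal C}_3}$ is noetherian.

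First I would record the precise shape of the right-hand sides. A rule of ${\cal C}_3$ reads $H_i^{aw} \rightarrow H_i^w$ with $i \in I \setminus I_0$: it keeps the index and shortens the exponent by one. A rule of ${\cal C}_2$ has $i \in I_0$ and, \emph{by definition of} ${\cal C}_2$, satisfies $\bigcup_j \Al(i_j,w_j) = \emptyset$; consequently every factor $H_{i_j}^{w_j}$ of its right-hand side is either one of $\psi_{i,a,d},\hat{\psi}_{i,a,d},\theta_{i,a,d,W,j},\Phi_{i,a,d,W,j},\hat{\theta}_{i,a,d,W,j}$, which lie in $I \setminus I_0$ and all carry the exponent $w$, or one of the factors $H_{\alpha(i,a,d,W,j)}^{\omega_{i,a,d,W,j}\cdot w}$, $H_{\beta(i,a,d,W,j)}^{\omega_{i,a,d,W,j}\cdot w}$, whose index lies in $I_0$ but whose alphabet (for its own exponent) is \emph{empty}, precisely because of the ${\cal C}_2$ condition. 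Moreover $\omega_{i,a,d,W,j} \in \Gamma_3^{\leq 2}$ by Lemma \ref{l-mixed_recurrence_exists}, so the exponent of such a factor has length at most $|w|+2 = |aw|+1$; and the $I_0$-factors occur at all only when $\Al(i,aw)\neq\emptyset$, since when $\Al(i,aw)=\emptyset$ the inner product is empty and the right-hand side collapses to $\psi_{i,a,d}\circ\hat{\psi}_{i,a,d}$, both in $I\setminus I_0$ with exponent $w$.

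Guided by this, I would set $\mu(H_i^w) := 2|w| + s(i,w)$, where the status offset $s(i,w)$ equals $4$ if $i\in I_0$ and $\Al(i,w)\neq\emptyset$, equals $1$ if $i\in I_0$ and $\Al(i,w)=\emptyset$, and equals $0$ if $i\in I\setminus I_0$. A routine verification then shows $\mu$ strictly drops across every rule. For ${\cal C}_3$: $2|aw| > 2|w|$. For a ${\cal C}_2$ rule the left-hand side has measure at least $2|aw|+1$; its $I\setminus I_0$-factors have measure $2|w| = 2|aw|-2$, which is strictly smaller; and its $I_0$-factors $\alpha,\beta$, having empty alphabet, have measure $2|\omega_{i,a,d,W,j}\cdot w|+1 \le 2(|aw|+1)+1 = 2|aw|+3$, strictly below the value $2|aw|+4$ carried by the left-hand side in the only situation in which such factors occur (namely $\Al(i,aw)\neq\emptyset$).

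The main obstacle I anticipate is exactly this last inequality. The exponent can \emph{grow} by one when a non-empty-alphabet $I_0$-term rewrites to empty-alphabet $I_0$-terms, because in the $\push_3$ case $|\omega_{i,a,d,W,j}|$ may equal $2$; a measure based on exponent length alone therefore fails. The point of the status offsets $4>1>0$, combined with the coefficient $2$ on the length, is precisely to make the drop in status outweigh this unit increase in length. Confirming that the constants $2,4,1,0$ leave a strictly positive margin in each of the five operation cases ($\pop_2,\pop_3,\push_1,\push_2,\push_3$)---which differ only in the length of $\omega_{i,a,d,W,j}$ and in whether $\alpha,\beta$ carry an empty alphabet---is the single delicate point, and it reduces to the finite computation sketched above.
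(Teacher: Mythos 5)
Your proof is correct, and it supplies content the paper does not: the paper's entire proof of this lemma is the single sentence ``This is straightforward,'' so there is no argument to compare against, and your measure-based termination proof is a legitimate way to discharge the claim. Your classification of the right-hand sides is faithful to the source: in a ${\cal C}_3$-step $H_i^{aw}\to H_i^w$ the exponent shortens; in a ${\cal C}_2$-step the factors $H^w_{i,a,d,\pm 1}$ and $H^w_{i,a,d,W,j,s}$ have indices in $I\setminus I_0$ and exponent $w$, the middle product over $W\in\Al(i,ad)$ is empty exactly when $\Al(i,aw)=\emptyset$ (using Lemma \ref{l-equivA3_determines_ALiw} to identify $\Al(i,ad)$ with $\Al(i,aw)$), and the only $I_0$-indexed factors $H_{\alpha}^{\omega\cdot w},H_{\beta}^{\omega\cdot w}$ have $\Al(\alpha,\omega\cdot w)=\Al(\beta,\omega\cdot w)=\emptyset$ by the very definition of ${\cal C}_2$ (the extension $\Al(i,w):=\emptyset$ for $i\in I\setminus I_0$ makes the condition $\bigcup_j\Al(i_j,w_j)=\emptyset$ bear precisely on these factors). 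You also correctly identify the one genuinely delicate point, which a careless reader of ``straightforward'' might miss: because of normalization (PI), the $\push_3$ case allows $\omega_{i,a,d,W,j}\in\Gamma^2$, so the exponent length can grow from $|aw|$ to $|aw|+1$ and induction on exponent length alone fails. Your offsets make the arithmetic close in every case: for ${\cal C}_3$, $2|aw|>2|w|$; for ${\cal C}_2$ with $\Al(i,aw)=\emptyset$, the left side has measure $2|aw|+1$ against factors of measure $2|w|=2|aw|-2$; for ${\cal C}_2$ with $\Al(i,aw)\neq\emptyset$, the left side has measure $2|aw|+4$ against $2|w|$ for the $I\setminus I_0$ factors and at most $2(|aw|+1)+1=2|aw|+3$ for the empty-alphabet $I_0$ factors. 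Since the system is monadic (each step replaces one letter of a word over the infinite alphabet $\{H_i^w\}$ by finitely many letters, each of strictly smaller measure), the Dershowitz--Manna multiset argument applies and yields noetherianity. What your route buys is an explicit, quantitative certificate (indeed a linear bound on the measure gives bounded-length derivations); what it does not need---and rightly avoids---is any of the machinery of types and derivation-paths that the paper reserves for the genuinely hard part, namely termination of ${\cal C}_1$ in Lemma \ref{compositionnal_sys2_isnotherian}.
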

This is straightforward.
\begin{lemma}
The system of all relations (\ref{compositionnal_recurrence21}-\ref{compositionnal_recurrence22})
is a {\em noetherian} system of regular recurrent relations of order 1 in $\HOM(\hat{{\cal W}}^*,
\hat{{\cal W}}^*)$. 
\label{compositionnal_sys2_isnotherian}
\end{lemma}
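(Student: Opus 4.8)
The plan is to establish that the rewriting system ${\cal C}={\cal C}_1\cup{\cal C}_2\cup{\cal C}_3$ is noetherian; the fact that relations (\ref{compositionnal_recurrence21}-\ref{compositionnal_recurrence22}) form a system of regular recurrent relations of order $1$ in $\HOM(\hat{\cal W}^*,\hat{\cal W}^*)$ already follows from their very construction, so only termination genuinely remains to be shown. Since every rule of ${\cal C}$ carries a single term $H_i^{aw}$ on its left-hand side, ${\cal C}$ is a monadic semi-Thue system over the free monoid generated by the terms $\{H_i^w\}$; consequently it will suffice to prove that ${\cal C}$ admits no infinite derivation issued from a \emph{single} term $H_i^w$, the case of an arbitrary product then following by K\"onig's lemma. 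For $i\in I\setminus I_0$ only ${\cal C}_3$-rules apply and they strictly shorten the exponent, so the interesting case is $i\in I_0$, say $i=({\cal V},T)$.

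The key quantity I would control is the number of ${\cal C}_1$-steps, and to do so I would first attach to each relevant pair $(i,w)$ a fixed witness derivation-path. Using the enumeration $\WD(i,w)=W_1\cdots W_{\al(i,w)}$, for each $\lambda$ I pick a representative $u_\lambda\in e_\lambda$ (where $W_\lambda=p_\lambda S_\lambda[e_\lambda]q_\lambda$) and consider the $\rightarrow_\mathcal{A}$-derivation $D_\lambda:\;W_\lambda[T[w]u_\lambda/e_\lambda]\rightarrow^*_\mathcal{A} H_i^w(W_\lambda)[u_\lambda/e_\lambda]$ furnished by the defining equation (\ref{derivational_condition}) of $H_i^w$; by Definition \ref{def-type-of-derivation} this $D_\lambda$ has type $(T,w,W_\lambda)$. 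Concatenating the $D_\lambda$ in context, I obtain a derivation-path $\vec{D}_0$ which is a decomposition of one single derivation $D_0$ of some finite length $N(i,w):=\ell(D_0)$ and which, by parts 2 and 3 of Lemma \ref{lem-type-is-morphic}, satisfies $\tau(\vec{D}_0)=\sum_{\lambda}(T,w,W_\lambda)=t(i,w)$.

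Next I would feed this witness into Lemma \ref{l-fromC_toA}. Given any finite derivation $H_i^w\rightarrow^*_\mathcal{C}\prod_{\lambda}H_{i_\lambda}^{w_\lambda}$ using $n$ steps of ${\cal C}_1$, the lemma produces a refinement $\vec{D}'\succeq\vec{D}_0$ with $\tau(\vec{D}')\geq\sum_{\lambda}t(i_\lambda,w_\lambda)+n\cdot\bar{\tau}$. The decisive observation is that $\vec{D}'$ is merely a finer decomposition of the \emph{fixed} derivation $D_0$ into consecutive blocks; hence the coefficient of $\bar{\tau}$ in $\tau(\vec{D}')$, which counts the blocks of $\vec{D}'$ of type $\bar{\tau}$ and each such block has length at least $1$, is bounded by $\ell(D_0)=N(i,w)$. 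Since none of the terms $t(i_\lambda,w_\lambda)$ contains $\bar{\tau}$, comparing the $\bar{\tau}$-coefficients on both sides yields $n\leq N(i,w)$. Applying this to every finite prefix of a derivation issued from $H_i^w$, the total number of ${\cal C}_1$-steps along \emph{any} such derivation is bounded by the constant $N(i,w)$.

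Finally I would conclude by contradiction. An infinite ${\cal C}$-derivation starting from $H_i^w$ would contain only finitely many ${\cal C}_1$-steps, hence an infinite suffix consisting exclusively of ${\cal C}_2\cup{\cal C}_3$-steps, contradicting Lemma \ref{lem-C2C3_isnotherian}. Thus no single term admits an infinite derivation, and by the monadic reduction the whole system ${\cal C}$ is noetherian, which is precisely the assertion of the lemma. I expect the genuine obstacle to have already been cleared by Lemma \ref{l-fromC_toA} (the transfer from ${\cal C}$-derivations to refinements of $\mathcal{A}$-derivation-paths, charging one extra $\bar{\tau}$ per ${\cal C}_1$-step); the only remaining delicate points are the uniform bound ``$\bar{\tau}$-coefficient $\leq\ell(D_0)$'', which rests on $\vec{D}_0$ and $D_0$ being fixed once and for all before any ${\cal C}$-derivation is considered, and the passage from single-term termination to termination on arbitrary products.
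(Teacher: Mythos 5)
Your proof is correct and it shares the paper's skeleton: reduction via monadicity of ${\cal C}$ to derivations issued from a single term $H_i^w$; a fixed witness derivation-path of type $t(i,w)$ built from the defining derivations (\ref{derivational_condition}) (the paper keeps the undeterminates $e$ in its witness $D_i$, while you substitute representatives $u_\lambda \in e_\lambda$ --- an inessential variant); Lemma \ref{l-fromC_toA} to charge one $\bar{\tau}$ per ${\cal C}_1$-step; and Lemma \ref{lem-C2C3_isnotherian} to dispose of ${\cal C}_2 \cup {\cal C}_3$. Where you genuinely diverge is the endgame. You observe that any $\vec{D}' \succeq \vec{D}_0$ is, by definition of $\preceq$, a decomposition of the \emph{same} fixed derivation $D_0$, and that every $\bar{\tau}$-typed block has length at least $1$, whence the number of ${\cal C}_1$-steps in any ${\cal C}$-derivation from $H_i^w$ is bounded by the constant $\ell(D_0)$ --- a purely combinatorial and effective bound that uses no further property of $\mathcal{A}$. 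The paper instead reads the conclusion of Lemma \ref{l-fromC_toA} existentially (for every $n$ there is an $\mathcal{A}$-derivation of length $\geq n$ between the fixed extremities), then pigeonholes to a single letter $W_j$, substitutes a representative $u_j \in e_j$, appends a terminal derivation to some $v_j \in B^*$ (available because $W_j \in \Al(i,w)$ gives non-emptiness), and derives the contradiction from \emph{strong determinism} of $\mathcal{A}$: the unique maximal computation from $p_jS_j[T[w]u_j]$ is finite, so only finitely many derivations lead from $W_j[T[w]u_j/e_j]$ to terminal words, contradicting (\ref{e-infinitelymany_terminal_derivations}). Your route is shorter and quantitative (it yields the explicit bound $n \leq \ell(D_0)$ rather than bare termination), at the price of leaning on the structural remark, stated in the paper just after the definition of $\preceq$, that comparable derivation-paths decompose one and the same derivation; the paper's more roundabout finish re-uses the determinism of $\mathcal{A}$ instead of that structural property. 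Both arguments are sound proofs of the lemma.
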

\begin{proof}
Let us suppose that
\begin{equation}
\rightarrow_{{\cal C}} \mbox{ admits an infinite derivation }
\label{e-hypo_absurdum}
\end{equation}
The system ${\cal C}$ is monadic, hence there exists such an infinite derivation starting on
a letter $H_i^w$ (for some $i \in I_0, w \in \Gamma_3^*$):
\begin{equation}
H_i^w \rightarrow_{{\cal C}}^\infty
\label{e-inty_derC}
\end{equation}
By Lemma \ref{lem-C2C3_isnotherian}, this derivation (\ref{e-inty_derC}) must use infinitely many steps in $\rightarrow_{{\cal C}_1}$.
Let us consider a derivation 
$$D_i: \WD(i,w)[T[w] e/e; e \in 2-\pds(\Gamma)/\equiv_{2}] \rightarrow_{\mathcal{A}}^* H_i^w(\WD(i,w))$$
Since $\tau(D_i)=t(i,w)$, 
by Lemma \ref{l-fromC_toA}, for every $n \geq 0$, 
$$ \WD(i,w))[T[w] e/e; e \in 2-\pds(\Gamma)/\equiv_{2}]\rightarrow^{\geq n}_{\mathcal{A}} H_i^w(\WD(i,w)).$$
Since each step of a derivation modulo $\rightarrow_{\mathcal{A}}$ applies on only one letter,
this would imply that there exists a single letter $W_j$ such that, for every $n \geq 0$,
$$W_j [T[w] e_j /e_j] \rightarrow^{\geq n}_{\mathcal{A}}H_i^w(W_j).$$
Let $u _j \in 3-\pds(\Gamma)$ such that $u_j \in e_j$ and let us substitute $u_j$ to $e_j$ in the above derivation.
For every $n \geq 0$,
\begin{equation}
W_j [T[w] u_j /e_j] \rightarrow^{\geq n}_{\mathcal{A}}H_i^w(W_j)[u_j/e_j].
\label{e-Derivation_L11_1}
\end{equation}
By definition of $H_i^w(W_j)$, since $W_j \in \Al(i,w)$ ,
$$\Language(\mathcal{A}, H_i^w(W_j)[u_j/e_j]) \neq \emptyset.$$
Hence there exists a fixed word $v_j \in B^*$ and a fixed integer $m \geq 0$ such that
\begin{equation}
H_i^w(W_j)[u_j/e_j]\rightarrow^{m}_{\mathcal{A}}v_j.
\label{e-Derivation_L11_2}
\end{equation}
Finally, combining derivation (\ref{e-Derivation_L11_1}) with derivation (\ref{e-Derivation_L11_2}):
\begin{equation}
\forall n \geq m, W_j [T[w] u_j /e_j] \rightarrow^{\geq n}_{\mathcal{A}} v_j \in B^*.
\label{e-infinitelymany_terminal_derivations}
\end{equation}
Since $\mathcal{A}$ is strongly deterministic, there is only one maximal $\mathcal{A}$-computation
starting from $p_jS_j[T[w]u_j]$ and this computation is finite (it ends in the configuration with state $q_j$ and empty $\pds$). This entails that there are only finitely many derivations
modulo $\rightarrow_\mathcal{A}$ starting from  $p_jS_j[T[w]u_j]q_j$.
This contradicts assertion (\ref{e-infinitelymany_terminal_derivations}). We have thus proved that
hypothesis (\ref{e-hypo_absurdum}) is impossible.
\end{proof}
Let us complete the proof of $(1) \Rightarrow (2)$ in Theorem \ref{techcharacterisation_level3} by a suitable choice of
alphabets $A,C$, initial letter $c \in C$, main homomorphism sequence $w \mapsto H_{i_1}^w: A^*  \rightarrow \HOM(C^*,C^*)$ and final homomorphism $h\in \HOM(C^*,B^*)$ (while the alphabet $B$ is the terminal alphabet of ${\cal A}$).
We choose:
$$A:= \Gamma_3,\;\;C:= {\cal W}$$
$$e_0:= [\varepsilon]_{\equiv_2},\;\;c:=q_0\gamma_1[e_0]q_0,\;\;i_1:=({\cal W},\gamma_2)$$
and, forall $p,q \in Q, S \in \Gamma_1, e \in 2-\pds(\Gamma)/\equiv_{2}$
\begin{eqnarray}
h(pS[e]q)  & := & s \mbox{ if } (e=[\varepsilon]_{\equiv_2} \mbox{ and } pS[\varepsilon]q \rightarrow_{\mathcal{A}}^* s),\label{coacc-variables}\\
h(pS[e]q) & := & \varepsilon \mbox{ otherwise }
\label{noncoacc-variables}
\end{eqnarray}
We claim that, for every $w \in A^*$
\begin{equation*}
f(w) = h(H_{i_1}^w(c)).
\label{eq-f-from_Hiw}
\end{equation*}
(Note that the right-hand side of (\ref{noncoacc-variables}) can be choosen arbitrarily, without affecting this crucial claim).\\
\noindent Let us prove that $(2) \Rightarrow (3)\Rightarrow (4)\Rightarrow (5)$:\\
It suffices to use the implications $(2) \Rightarrow (3) \Rightarrow (4) \Rightarrow (5)$ 
from Theorem \ref{characterisation_level2}.\\

\noindent Let us sketch a proof that $(5)\Rightarrow (1)$:\\
Proposition 70 of \cite{Fra-Sen06} asserts that, if $f \in \mathbb{S}_k(A^* ,B^*),g  \in \mathbb{S}_\ell(B^* ,C^*)$ then 
$f\circ g \in \mathbb{S}_{k+\ell-1}(A^*,C^*)$, in the particular case where $|A| = |B| =|C|=1$ (i.e., for integer sequences).
By a suitable adaptation of the proof of this proposition, one can show that the same property holds for arbitrary finite alphabets $A,B,C$.
The case where $k=\ell=2$ gives the required implication $(5)\Rightarrow (1)$.

\section{Applications}
\begin{definition}[Polynomial recurrent relations]
Given a finite  index set $I=[1,n]$ and a family of mappings 
$f_i: A^* \rightarrow \bbbn$ (for $i \in I$),
we call {\em system of polynomial recurrent relations} a system of the form
$$f_i(aw)=P_{i,a}(f_1(w),f_2(w),\ldots,f_n(w)) \mbox{ for all } i\in I,a \in A, w \in A^*$$
where $P_i \in \bbbn[X_1,X_2,\ldots,X_n]$.
\label{polynomial_recurrence}
\end{definition}
A similar definition can be given for mappings $f_i: A^* \rightarrow \bbbz$ (for $i \in I$)
and polynomials $P_i \in \bbbz[X_1,X_2,\ldots,X_n]$.

Theorem \ref{characterisation_level3} specializes as follows in the particular case where $B$ is reduced to one letter i.e.
when the mapping $f$ is a formal power series.
\begin{corollary}
Let us consider a mapping $f:A^* \rightarrow \bbbn$.
The following properties are equivalent:\\
1- $f \in \mathbb{S}_3(A^*,\bbbn)$\\
2- There exists a finite family $(H_i)_{i\in [1,n]}$ of mappings $A^* \rightarrow \HOM(C^*,C^*)$ 
which fulfils a system of recurrent relations in $(\HOM(C^*,C^*),\circ,{\rm Id})$ ,
an element $h \in \HOM(C^*,\bbbn)$ and a letter $c \in C$ such that, for every $w \in A^*$:
$$f(w)=h(H_1(w)(c)).$$
3- $f$ is composition of a DT0L sequence $g: A^* \rightarrow C^*$ by a rational series $h: C^* \rightarrow \bbbn$.\\
4- There exists a finite family $(f_i)_{i\in [1,n]}$ of mappings $A^* \rightarrow \bbbn$ 
fulfilling a system of polynomial recurrent relations and such that $f=f_1$.
\label{characterisation_series_level3}
\end{corollary}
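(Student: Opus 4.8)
The plan is to derive this corollary from Theorem \ref{characterisation_level3} by specialising the target alphabet to a single letter, and then to fold the polynomial‑recurrence condition (4) into the equivalence class by proving $(2)\Leftrightarrow(4)$. Fix a one‑letter alphabet $B=\{b\}$ and identify $B^{*}$ with the additive monoid $\langle \NN,+,0\rangle$ through $b^{n}\mapsto n$. Under this identification $\mathbb{S}_3(A^{*},B^{*})=\mathbb{S}_3(A^{*},\NN)$, the monoid $\HOM(C^{*},B^{*})$ becomes $\HOM(C^{*},\NN)$, and conditions (1),(2),(3) of the present corollary are conditions (1),(2),(3) of Theorem \ref{characterisation_level3}, provided a HDT0L sequence with unary target coincides with an $\NN$‑rational series. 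I would isolate this last point as a lemma: for $h\colon C^{*}\to B^{*}$ with $\card(B)=1$, writing $h(u)=h'(H^{u}(d))$ as in Definition \ref{hdt0l}, the value $|h(u)|$ equals $\iota^{\,t}M(u)\varphi$, where $M$ is the multiplicative map sending a word to the product of the growth (Parikh) matrices of the letter‑morphisms of $H$, $\iota$ is the indicator of $d$, and $\varphi_{e}=|h'(e)|$; this is exactly a recognisable, i.e.\ $\NN$‑rational, series, and conversely any linear representation over $\NN$ is realised by taking each of its nonnegative integer matrices to be the growth matrix of a morphism (cf.\ \cite{Kar-Roz-Sal97}). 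With this lemma, $(1)\Leftrightarrow(2)\Leftrightarrow(3)$ is immediate and it only remains to treat (4).

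For $(2)\Rightarrow(4)$ I would pass to growth matrices. For each index $i$ and letters $c,d\in C$ set $f_{i,c,d}(w):=$ the number of occurrences of $d$ in $H_{i}^{w}(c)$. Because the growth matrix of a composite morphism is the product of the growth matrices, the recurrence of Definition \ref{monoidal_recurrence}, namely $H_{i}(aw)=\prod_{j=1}^{\ell(i,a)}H_{\alpha(i,a,j)}(w)$, yields, with $\ell=\ell(i,a)$,
$$f_{i,c,d}(aw)=\sum_{e_{1},\dots,e_{\ell-1}\in C}\ \prod_{j=1}^{\ell}f_{\alpha(i,a,j),\,e_{j-1},\,e_{j}}(w)\qquad(e_{0}=c,\ e_{\ell}=d),$$
a polynomial with coefficients in $\NN$ in the family $(f_{k,c',d'})$. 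Since $h\in\HOM(C^{*},\NN)$ is additive, $f(w)=h(H_{1}^{w}(c))=\sum_{d\in C}h(d)\,f_{1,c,d}(w)$; adjoining $f$ to the family with the resulting (again polynomial) recurrence for $f(aw)$ exhibits $f=f_{1}$ as the first component of a system of polynomial recurrent relations in the sense of Definition \ref{polynomial_recurrence}.

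The substantial direction is $(4)\Rightarrow(2)$, which I expect to be the main obstacle. Given $f_{i}(aw)=P_{i,a}(f_{1}(w),\dots,f_{n}(w))$ with $P_{i,a}\in\NN[X_{1},\dots,X_{n}]$, I would introduce a constant function $f_{0}\equiv 1$ and, for each rule, represent the right‑hand side as a matrix product $P_{i,a}(\vec f(w))=\lambda_{i,a}\,R_{i,a,1}(w)\cdots R_{i,a,D}(w)\,\gamma_{i,a}$, where $D$ bounds the degrees, the vectors $\lambda_{i,a},\gamma_{i,a}$ and the layer matrices come from a linear representation over $\NN$ of the finite ``language of monomials'' of $P_{i,a}$ (padded with $X_{0}$ to uniform length), and each $R_{i,a,t}(w)=\sum_{j}f_{j}(w)\,\nu_{i,a,t,j}$ is an $\NN$‑linear combination of the $f_{j}(w)$. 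The key step is then to realise this purely within $\langle\HOM(C^{*},C^{*}),\circ,\Id\rangle$: every nonnegative integer matrix is the growth matrix of a morphism; each weighted sum $R_{i,a,t}(w)$ is obtained from the family morphisms carrying the entries $f_{j}(w)$ by composing a disjoint union of those morphisms with fixed selection/copy morphisms; and the matrix product becomes a $D$‑fold composition. This turns each rule into a recurrent relation $H_{i}(aw)=\prod_{j}H_{\alpha(i,a,j)}(w)$ whose factors are either family morphisms evaluated at $w$ or constant ``gadget'' morphisms satisfying the trivial recurrence $H(aw)=H(w)$, the initial values $H_{i}(\varepsilon)$ being chosen to match $f_{i}(\varepsilon)$; reading off the designated entry by a suitable $h\in\HOM(C^{*},\NN)$ and start letter $c$ gives $f=f_{1}(w)=h(H_{1}^{w}(c))$, i.e.\ condition (2). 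The delicate points I would have to verify carefully are that this assembly of weighted sums and products as compositions stays inside a finite family, and that the occurrence‑counts of the resulting morphisms indeed reproduce the $f_{i}$; this manipulation of mixed products and compositions is exactly of the kind carried out in Lemma \ref{l-mixed_recurrence_exists}. Together with Theorem \ref{characterisation_level3}, the chain $(4)\Rightarrow(2)\Rightarrow(1)$ and $(1)\Rightarrow(2)\Rightarrow(4)$ closes the equivalence of all four conditions.
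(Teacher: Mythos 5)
Your proposal is correct in substance but closes the equivalence cycle by a genuinely different route than the paper. On $(1)\Leftrightarrow(2)\Leftrightarrow(3)$ you agree with the paper: it merely says Theorem \ref{characterisation_level3} is ``reformulated'' for unary $B$, and your explicit lemma identifying unary-target HDT0L sequences with $\NN$-rational series via growth (Parikh) matrices is exactly what that reformulation tacitly uses. Your $(2)\Rightarrow(4)$ via the growth matrices of the $H_i^w$ is essentially the same computation as the paper's $(3)\Rightarrow(4)$, which applies a linear representation $h(w)=L_0\cdot H(w)\cdot C_0$ to the catenative recurrence of the DT0L factor and observes that matrix-product entries are polynomials of degree $2$; both are sound, yours simply starts one condition earlier. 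The real divergence is the closing implication: the paper proves $(4)\Rightarrow(1)$ by citing Proposition 53 of \cite{Fra-Sen06} (a direct construction of a level-3 pushdown automaton from a polynomial recurrence, extended from unary to arbitrary $A$), whereas you prove $(4)\Rightarrow(2)$ directly by morphism gadgets and then invoke $(2)\Rightarrow(1)$ of Theorem \ref{characterisation_level3}. Your sketched gadgetry can indeed be completed along standard lines (unitriangular chains $I+f_j(w)E_{t,t+1}$ realise monomials, constant permutation and extraction matrices route them, and sums accumulate as products of matrices $I+mE_{1,2}$), but one subtlety deserves to be made explicit: Definition \ref{monoidal_recurrence} demands equality \emph{in} $\HOM(C^*,C^*)$, while your construction only controls growth matrices, and the canonical morphism attached to a given $\NN$-matrix does not compose multiplicatively on the nose (only up to Parikh equivalence). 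The fix is to define the family $H_i$ inductively \emph{by} the recurrence itself, with $H_i(\varepsilon)$ chosen to match the initial values, so that the compositional relations hold by construction, and then to verify $f_i(w)=h(H_i(w)(c))$ through the growth-matrix homomorphism --- legitimate because $h\in\HOM(C^*,\NN)$ factors through Parikh images. In the trade-off, the paper's route is shorter because it outsources the hard step to prior work, while yours keeps the whole corollary self-contained at the level of morphism and matrix algebra, at the price of a finite-family bookkeeping argument whose invariants must be (and can be) checked.
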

\begin{sketch}
{\bf $(1) \Rightarrow (2) \Rightarrow(3)$}:\\
this is exactly the sequence of implications $(1) \Rightarrow (2) \Rightarrow(3)$ of Theorem \ref{characterisation_level3},
reformulated in the case where $B$ is a unary alphabet.\\
{\bf $(3) \Rightarrow(4)$}:\\
Suppose that $f$ fulfills condition (3): $g=g_1$ for some family $(g_i)_{i \in I}$ of maps $A^* \rightarrow C^*$
which fulfills the system of catenative recurrent relations
\begin{equation*}
\forall i \in I, \forall a \in A,\forall w \in A^*,\;\;g_i(aw)= \prod_{j=1}^{\ell(i,a)} g_{\alpha(i,a,j)}(w) 
\label{eq-catenative-system}
\end{equation*}
while the $\bbbn$-rational series $h$ fulfills:
$$\forall w \in C^*,\;\;h(w)= L_0 \cdot H(w) \cdot C_0$$
for some $L_0 \in \bbbn^{1 \times d},H \in \HOM(C^*,\bbbn^{d \times d}), C_0 \in \bbbn^{d \times 1}$.\\
Let us denote by $u_{i,k,\ell}(w)$ the entry with indices $(k,\ell)$ of the $(d,d)$ matrix $H(g_i(w))$.
Since we have
\begin{equation*}
\forall i \in I,\forall a \in A,\forall w \in A^*,\;\;H(g_i(aw))= \prod_{j=1}^{\ell(i,a)} H(g_{\alpha(i,a,j)}(w))
\label{eq-matricial-system}
\end{equation*}
and the formula expressing a matricial product are polynomials (of degree 2), we get some polynomials $P_{i,k,\ell}$ with undeterminates $X_{i',j',k'}$
($i'\in I, j', k'\in [1,d]$) such that
\begin{equation}
\forall i \in I, \forall k,\ell \in [1,d],\forall a \in A,\forall i \in I,\;\;
u_{i,k,\ell}(aw))= P_{i,k,\ell}(\vec{u}(w))
\label{eq-polynomial-system}
\end{equation}
where $\vec{u}(w)$ is the $|I|\cdot d^2$-tuple of integers $(u_{*,*,*}(w))$.
Finally, if we note $M(u_i)$ the $(d,d)$ matrix with coefficients $u_{i,*,*}$, and we note $K$ the linear form
$\vec{u} \mapsto L_0 \cdot M(u_i)\cdot C_0$ we get that:
\begin{equation}
\forall i \in I, \forall w \in A^*,\;\;f_i(w)= K(\vec{u}(w)).
\label{eq-linearcomb-sequ}
\end{equation}
By (\ref{eq-polynomial-system}) the sequences $u_{i,k,\ell}$ are polynomially recurrent and by  (\ref{eq-linearcomb-sequ}), the sequences
$f_i$ are polynomially recurrent too.\\
{\bf $(4) \Rightarrow(1)$}:\\
Proposition 53 p. 384 of \cite{Fra-Sen06} asserts that, for  every family $(f_i)_{i\in [1,n]}$ of mappings $A^* \rightarrow \bbbn$ 
fulfilling a system of polynomial recurrent relations, $f_1$ belongs to $\mathbb{S}_3(A^*,\bbbn)$, in the particular case
of a {\em unary} alphabet $A$.
The extension of the proof, hence of the result, to an arbitrary finite alphabet $A$ is easy.\\
\end{sketch}
\begin{definition}
Let $\mathbb{S}$ be a set of mappings $A^* \rightarrow \bbbN$.
We denote by $\DIFF(\mathbb{S})$ the set of mappings
of the form:
$$ f(w) = g(w) - h(w)\;\;\mbox{ for all } w \in A^*,$$
for some mappings $g,h \in \mathbb{S}$.
We denote by $\FRAC(\mathbb{S})$ the set of mappings 
of the form:
$$ f(w)= \frac{g(w) - h(w)}{f'(w) - g'(w)} \;\;\mbox{ for all } w \in A^*,$$
for some mappings $f,g,f',g' \in \mathbb{S}$.
\end{definition}
Using point 4 of corollary \ref{characterisation_series_level3} we can prove the following
\begin{theorem}
The equality problem is decidable for formal power series in $\FRAC(\mathbb{S}_3(A^*,\bbbn))$.
\label{equivalence_level3}
\end{theorem}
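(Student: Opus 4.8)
The plan is to reduce the equality of two members of $\FRAC(\mathbb{S}_3(A^*,\bbbN))$ to the \emph{zeroness} of a single integer-valued sequence obeying polynomial recurrences, and then to settle that zeroness by the reduction to polynomial ideal theory of \cite{Sen07}. First I would write the two given series as $f = N/D$ and $\tilde f = \tilde N/\tilde D$, where $N,D,\tilde N,\tilde D \in \DIFF(\mathbb{S}_3(A^*,\bbbN))$ are exactly the numerators and denominators allowed in the definition of $\FRAC$ (so each is a difference of two maps in $\mathbb{S}_3(A^*,\bbbN)$, hence $\bbbZ$-valued). Since a series of $\FRAC(\mathbb{S}_3(A^*,\bbbN))$ is defined for every $w\in A^*$, its denominator never vanishes; consequently $f=\tilde f$ holds iff the $\bbbZ$-valued sequence $\Delta := N\cdot\tilde D - \tilde N\cdot D$ is identically $0$. (If one prefers to relax the non-vanishing hypothesis, the zero sets of $D$ and $\tilde D$ are themselves decidable by the same machinery invoked below, so the common domain can be described effectively.)

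The next step is to verify that $\Delta$ satisfies a system of polynomial recurrences over $\bbbZ$. By point~4 of Corollary~\ref{characterisation_series_level3}, each of $N,D,\tilde N,\tilde D$ is the first component of a finite family of maps $A^*\rightarrow\bbbZ$ fulfilling polynomial recurrent relations. I would then record the elementary closure property that such families are stable under sum, difference and product: given families $(\phi_i)$, $(\psi_j)$ with $\phi_i(aw)=P_{i,a}(\vec\phi(w))$ and $\psi_j(aw)=Q_{j,a}(\vec\psi(w))$, the enlarged family $(\phi_i)\cup(\psi_j)\cup(\phi_i\psi_j)$ satisfies
\[
(\phi_i\psi_j)(aw)=P_{i,a}(\vec\phi(w))\cdot Q_{j,a}(\vec\psi(w)),
\]
whose right-hand side is again a polynomial in the members of the enlarged family, and sums and differences are handled identically. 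Iterating this closure over the construction of $\Delta$, I obtain that $\Delta$ is the first component of a family $(\theta_k)_{k\in[1,n]}$ of maps $A^*\rightarrow\bbbZ$ with $\theta_k(aw)=R_{k,a}(\vec\theta(w))$ for polynomials $R_{k,a}\in\bbbZ[X_1,\ldots,X_n]$, together with known initial values $\vec\theta(\varepsilon)=v_0\in\bbbZ^n$.

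It then remains to decide whether $\theta_1\equiv 0$. Setting $\vec\theta:A^*\rightarrow\bbbZ^n$ and $P_a=(R_{k,a})_k:\bbbZ^n\rightarrow\bbbZ^n$, the recurrences read $\vec\theta(aw)=P_a(\vec\theta(w))$, so the image $O:=\{\vec\theta(w)\mid w\in A^*\}$ is the orbit of $v_0$ under the monoid of polynomial maps generated by the $P_a$. Let $\mathcal I\subseteq\bbbQ[X_1,\ldots,X_n]$ be the vanishing ideal of $O$; since evaluating $X_1$ at $\vec\theta(w)$ returns $\theta_1(w)$, we have $\theta_1\equiv 0$ iff $X_1\in\mathcal I$. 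The ideal $\mathcal I$ is characterised as the largest ideal $J$ all of whose elements vanish at $v_0$ and which is stable under each substitution $Q\mapsto Q(P_{1,a},\ldots,P_{n,a})$; this characterisation is what makes it effectively computable, via the Gröbner-basis fixpoint underlying Theorem~5 of \cite{Sen07}. Once a basis of $\mathcal I$ is in hand, membership $X_1\in\mathcal I$ is decided by a normal-form computation, and this decides $\Delta\equiv 0$, hence $f=\tilde f$.

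The hard part will be the termination and correctness of the ideal computation, i.e.\ that the vanishing ideal of the orbit $O$ of a point under a finitely generated monoid of polynomial maps is computable at all — this is the genuine technical core and is precisely the reduction to polynomial ideal theory supplied by \cite{Sen07}, whose termination rests on the ascending chain condition in $\bbbQ[X_1,\ldots,X_n]$. By contrast, the closure step of the second paragraph and the cross-multiplication of the first are routine; the only subtlety there is the bookkeeping of initial values and the handling of the denominators' (non)vanishing, both of which reduce to instances of the same zeroness test.
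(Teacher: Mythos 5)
Your proposal follows essentially the same route as the paper, whose own proof is only a two-sentence sketch: invoke point~4 of Corollary~\ref{characterisation_series_level3} and then reduce, ``in a way similar to \cite{Sen99} or \cite{Hon00}'', to deciding whether some polynomial belongs to the ideal generated by a finite set of other polynomials (Theorem~5 of \cite{Sen07}). Your cross-multiplication of $f=N/D$ and $\tilde f=\tilde N/\tilde D$ (legitimate, since membership in $\FRAC(\mathbb{S}_3(A^*,\bbbn))$ presupposes an everywhere-nonvanishing denominator), the closure of polynomially recurrent families under sum, difference and product (over $\bbbZ$, as the paper allows just after Definition~\ref{polynomial_recurrence}), and the reformulation of zeroness of $\Delta$ as a question about the orbit of $v_0$ under the polynomial maps $P_a$ are all exactly the intended reduction, spelled out in more detail than the paper gives.

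One step would fail as literally written, though it is repairable within your own framework. You propose to \emph{compute a basis} of $\mathcal{I}$, the vanishing ideal of the orbit, from its characterization as the \emph{largest} substitution-stable ideal vanishing at $v_0$, and then test $X_1\in\mathcal{I}$ by normal form. The characterization itself is correct, but it is a greatest fixpoint, and the ascending chain condition you invoke gives no termination guarantee for approximating an ideal from above; effectively computing $\mathcal{I}$ amounts to computing the Zariski closure of the orbit of a point under a finitely generated monoid of polynomial maps, which is not known to be possible in general (it is known for \emph{affine} maps). What is decidable --- and what the paper's phrase ``whether some polynomial belongs to the ideal generated by a finite set of other polynomials'' points to --- is the dual, least-fixpoint computation: let $I_m$ be the ideal generated by the pullbacks $X_1\circ P_{a_1}\circ\cdots\circ P_{a_m'}$ for $m'\leq m$ (the polynomial $X_1(P_{a_1}(\cdots P_{a_{m'}}(\vec X)\cdots))$). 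The chain $(I_m)_m$ is ascending, hence stabilizes by Hilbert's basis theorem; stabilization $I_m=I_{m+1}$ is detected by Gr\"obner-basis membership tests and propagates to all larger indices, because appending a letter acts by the ring endomorphism $Q\mapsto Q(P_{1,a},\ldots,P_{n,a})$; and then $\theta_1\equiv 0$ iff every generator of the stable ideal vanishes at $v_0$. In other words, $X_1\in\mathcal{I}$ iff the \emph{smallest} substitution-stable ideal containing $X_1$ vanishes at $v_0$; a basis of $\mathcal{I}$ itself is never needed. With that substitution your argument is complete and coincides with the paper's intended proof.
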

The method consists, in a way similar to \cite{Sen99} or \cite{Hon00}, in reducing 
such an equality problem to deciding whether some polynomial belongs to the ideal
generated by a finite set of other polynomials.
Let us give a useful algebraic property of sequences of 
$\FRAC(\mathbb{S}_3(\bbbn,\bbbn))$.
\begin{theorem}[Th.11 of \cite{Cad-Maz-Pap-Pil-Sen20}]
Let $(u_n)_{n \in \bbbN}\in \FRAC(\mathbb{S}_3(\bbbN,\bbbN))$. Then there exists some $k \in \bbbN$ such that the sequences $u_n,u_{n+1},\ldots,u_{n+k-1}$ are algebraically dependent over $\bbbQ$.
\label{alg_dependency_level3}
\end{theorem}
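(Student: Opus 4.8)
The plan is to reduce the statement to a purely dynamical-algebraic one by invoking Corollary~\ref{characterisation_series_level3}. Write $u_n = (g(n)-h(n))/(f'(n)-g'(n))$ with $g,h,f',g' \in \mathbb{S}_3(\bbbN,\bbbN)$. By point~4 of Corollary~\ref{characterisation_series_level3}, each of these four series is the first component of a finite family of $\bbbN$-valued sequences obeying a system of polynomial recurrent relations (Definition~\ref{polynomial_recurrence}). Taking the disjoint union of the four families, and identifying $A^* = \{a\}^*$ with $\bbbN$ so that reading the single letter $a$ becomes the shift $n \mapsto n+1$, I obtain one vector sequence $\vec{v}(n) = (v_1(n),\dots,v_N(n)) \in \bbbN^N$ together with a polynomial map $P \colon \mathbb{A}^N \to \mathbb{A}^N$ (with $\bbbZ$-coefficients) such that $\vec{v}(n+1) = P(\vec{v}(n))$ for all $n$, and such that $g,h,f',g'$ occur among the coordinates of $\vec{v}$. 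Hence there are polynomials $A,B \in \bbbZ[X_1,\dots,X_N]$ (each a difference of two coordinates) with $u_n = A(\vec{v}(n))/B(\vec{v}(n))$.

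Next I would express every shift of $u$ as a rational function of the single state $\vec{v}(n)$. Writing $P^{j}$ for the $j$-fold iterate of $P$ and setting $A_j := A \circ P^{j}$, $B_j := B \circ P^{j}$, one has $u_{n+j} = A(\vec{v}(n+j))/B(\vec{v}(n+j)) = A_j(\vec{v}(n))/B_j(\vec{v}(n))$ for every $n,j \geq 0$. The crucial observation is that, since $(u_n)$ is by hypothesis a genuine sequence of rationals, its denominator $B(\vec{v}(m)) = f'(m)-g'(m)$ is nonzero for every $m$; consequently $B_j(\vec{v}(n)) = B(\vec{v}(n+j)) \neq 0$ for all $n,j$, so each rational function $\rho_j := A_j/B_j$ is defined at every point $\vec{v}(n)$ of the orbit.

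I would then close the argument with a dimension count. Let $W \subseteq \mathbb{A}^N_{\bbbQ}$ be the Zariski closure over $\bbbQ$ of the orbit $\{\vec{v}(n) : n \geq 0\}$ and set $d := \dim W \leq N$. On the open subset $U := W \cap \{ B_0 \cdots B_d \neq 0 \}$, which contains the whole orbit by the previous paragraph, the $d+1$ rational functions $\rho_0,\dots,\rho_d$ assemble into a morphism $\rho = (\rho_0,\dots,\rho_d)\colon U \to \mathbb{A}^{d+1}$. Since $\dim \overline{\rho(U)} \leq \dim U \leq d < d+1$, the closed image is a proper subvariety of $\mathbb{A}^{d+1}_{\bbbQ}$, so its ideal contains a nonzero polynomial $Q \in \bbbQ[Z_0,\dots,Z_d]$. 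Evaluating at the orbit points gives $Q(u_n,u_{n+1},\dots,u_{n+d}) = Q(\rho_0(\vec{v}(n)),\dots,\rho_d(\vec{v}(n))) = 0$ for every $n$, so $k := d+1 \leq N+1$ witnesses the claimed algebraic dependence over $\bbbQ$.

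The main obstacle is not the final dimension count, which is routine algebraic geometry, but the bookkeeping that makes it applicable: I must (i) assemble the four separate polynomial systems coming from Corollary~\ref{characterisation_series_level3} into one autonomous polynomial dynamical system $\vec{v}(n+1) = P(\vec{v}(n))$ whose coordinates display numerators and denominators simultaneously, and (ii) guarantee that the rational functions $\rho_j$ are everywhere defined along the orbit. Point~(ii) is exactly where the hypothesis $(u_n) \in \FRAC(\mathbb{S}_3(\bbbN,\bbbN))$ — rather than a mere formal quotient — is used, through the non-vanishing of the denominator $f'(m)-g'(m)$; without it the pulled-back relation $Q(u_n,\dots,u_{n+d})=0$ could fail precisely at those orbit points lying on the divisor $\{B=0\}$.
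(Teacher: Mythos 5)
Your argument is correct, but there is no in-paper proof to compare it against: Theorem~\ref{alg_dependency_level3} is imported with a citation, as Th.~11 of \cite{Cad-Maz-Pap-Pil-Sen20}, and the present paper reproves nothing. What you have written is essentially a sound reconstruction of that reference's argument: pass through point~4 of Corollary~\ref{characterisation_series_level3} to replace the four level-3 series by one autonomous polynomial system $\vec{v}(n+1)=P(\vec{v}(n))$ on $N$ coordinates (over a unary alphabet the paper's polynomial recurrences are exactly such a system, and stacking the four blocks is harmless since $P$ acts blockwise), write each shift as $u_{n+j}=\rho_j(\vec{v}(n))$ with $\rho_j=(A\circ P^j)/(B\circ P^j)$, use the everywhere-nonvanishing denominator that the $\FRAC$ hypothesis provides (this is indeed the only place the hypothesis ``$u$ is a genuine $\bbbQ$-valued mapping'' enters), and conclude that too many rational functions of too few variables must be algebraically dependent. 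Two remarks. First, there is no circularity in invoking the Corollary: the implication $(1)\Rightarrow(4)$ you need is obtained there via $(1)\Rightarrow(2)\Rightarrow(3)\Rightarrow(4)$, none of which uses Theorem~\ref{alg_dependency_level3}. Second, your dimension count differs cosmetically from the reference's: the standard route observes that the $N+1$ elements $\rho_0,\dots,\rho_N$ of the field $\bbbQ(X_1,\dots,X_N)$, which has transcendence degree $N$, satisfy a nonzero relation $Q$, and then evaluates along the orbit where all $B_j$ are nonzero, giving $k=N+1$ outright; your variant through the Zariski closure $W$ of the orbit buys a mildly sharper bound $k=\dim W+1\leq N+1$, at the price of the bookkeeping you correctly carry out, namely that the whole orbit lies in the open set $U=W\cap\{B_0\cdots B_d\neq 0\}$ so that the pulled-back relation holds at every $n$ and not merely off the divisor $\{B=0\}$.
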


\section{Examples and counter-examples}
\label{examples6}
We examine here four examples of mappings $A^* \rightarrow \bbbn$ and 
locate them in the classes $\mathbb{S}_k(A^*,\NN)$ or some related classes.\\
\begin{example}

The Fibonacci sequence $F_n$ defined by
$$F_0=1,F_1=1,F_{n+2}=F_{n+1}+F_n \mbox{ for all }n \geq 0
$$
is clearly in $\mathbb{S}_2$ since it fulfills a linear recurrence relation with 
coefficients in $\bbbn$ (i.e. a catenative recurrence where the alphabet has size $1$).
\end{example}
\begin{example}
Let $G:\{0,1\}^* \rightarrow \bbbn$ be defined by
$$G(w)=F_{\nu(w)}$$
where $\nu(w)$ is the natural number expressed by $w$ in base 2.
Since $\sum_{n=0}^\infty \nu(w) w$ is a rational series, $G$ fulfils point 3 of our characterisation of $\mathbb{S}_3(\{0,1\}^*,\bbbn)$.
\end{example}
\begin{example}
  \label{example:fact}
Factorial sequence:$FC(n) := (n+1)!$.\\
i.e. $FC(n)$ denotes ``factorial $(n+1)$''.\\
It has level $3$.\\
\end{example}
\begin{proof}
Let us define the auxiliary sequence $L(n) := n+2.$
These sequences fulfill the polynomial recurrence relations:
$$L(0)=2\;\;FC(0)= 1\;\;L(n+1) = L(n) +1\;\;FC(n+1)= L(n) \cdot FC(n).$$
Hence $FC$ is a sequence in $\mathbb{S}_3$.\
\end{proof}
By Theorem \ref{characterisation_level3}, $FC$ is the composition of two word-sequences of order $2$.\\
An explicit such decomposition is the following:\\
\begin{eqnarray*}
u(0)=b\;\;v(0)=ab\;\;& u(n+1)=u(n) \cdot v(n)&
\;\;v(n+1)=a \cdot v(n)\\
U(\varepsilon)=1\;\;V(\varepsilon)=0\;\;&&\\ 
U(a\cdot w)=U(w) + V(w)\;\;& V(a\cdot w)=V(w)\;\;&
U(b\cdot w)=U(w)\;\; V(b\cdot w)=V(w)\;\;
\end{eqnarray*}
We get:
$$u(n) = babaab \cdots ba^nb\;\;\;\;v(n) = a^{n+1}b$$
$$U(u(n))= (n+1)!$$
We thus have decomposed $FC$ into two recurrences of level 2 (going
``through'' words):
$$\forall n \in \NN,\;\;FC(n) = U(u(n)) .$$
Nevertheless, $FC$ is {\em not} the composition of two integer sequences of order $2$:
this can be deduced from the rate of growth of $FC$, which is strictly larger than any exponential function, but strictly smaller than
any double-exponential function. The technically subtle point consists in showing that no composition of
integer sequences in $\mathbb{S}_2$ can posess such a growth rate (work in preparation).

\begin{example}
\label{n_pui_n}
Let us consider the sequence $(D_n)_{n \in \NN}$ such that
$$D_0=0 \mbox{ and } \forall n\geq 1,\;\;D_n=n^n.$$
This sequence belongs to $\mathbb{S}_4$ and does not belong to $\FRAC(\mathbb{S}_3)$.
\end{example}
{\bf Proof for level 4}\\
The maps $f(n):= n$ and $g(n):=n$ have level $1$, hence $f \in \mathbb{S}_4$ and $g \in \mathbb{S}_3$.
By Proposition 66 of \cite{Fra-Sen06}, if $f \in \mathbb{S}_{k+1}$ and $g \in \mathbb{S}_k$ with $k \geq 3$, then
$n \mapsto f(n)^{g(n)} \in \mathbb{S}_{k+1}$. It follows that, for the above maps $f,g$ , $D=f^g \in \mathbb{S}_4$.  

\noindent{\bf Decomposition}\\
We take below the convention that $0^0=0$.\\
Let us exhibit two mappings $f \in \mathbb{S}_2(\NN,\{ a,b,c \}^*), g \in \mathbb{S}_3(\{ a,b,c \}^*,\NN)$
such that
\begin{equation}
\forall n \in \NN, D_n = g(f(n)).
\label{eq-decomp-nton}
\end{equation}
We define these mappings by two systems of recurrent relations.
Let $f,A,C: \NN \rightarrow \{ a,b,c \}^*$ such that
\begin{eqnarray}
\label{eq-system-for-f}
f(0) = b, & A(0) = a, & C(0) = c,\nonumber \\
f(n+1) = A(n)\cdot f(n) \cdot B(n),&  A(n+1) = A(n), & C(n+1) = C(n). 
\end{eqnarray}
Let us introduce a finite alphabet $X=\{x,y\}$ and define mappings $H,K,K',P: \{ a,b,c \}^* \rightarrow \HOM(X^*,X^*)$
by the recurrence relations:
\begin{eqnarray}
\label{eq-system-for-H}
H(\varepsilon)= \Id, & K(\varepsilon) = [x,xy], & K'(\varepsilon) = [x,\varepsilon],\;\;P(\varepsilon) = [y,x],\nonumber \\
H(au) = H(u) \circ H(u),& H(bu) = P(u) \circ H(u) \circ K'(u), & H(cu) = H(u) \circ K(u), \nonumber\\
K(au) = K(u),& K(bu) = K(u), & K(cu) = K(u), \nonumber\\
K'(au) = K'(u),& K'(bu) = K'(u), & K'(cu) = K'(u),\nonumber\\
P(au) = P(u),& P(bu) = P(u), & P(cu) = P(u).
\end{eqnarray}
where the bracketed notation $[w,w']$ designates the homorphism $X^* \to X^*$ that maps $x$ to $w$ and $y$ to $w'$. 
We finally define $g$ by:
$$\forall u \in \{ a,b,c \}^*,\;\; g(u) = H(u)(x).$$
One can check that, for every integers $p,q,n\in \NN$: 
$$H(c^q) = [x,x^qy],\;\;H(bc^q) = [x^q,x],\;\;H(a^pbc^q) = [x^{q^p} ,x]$$
$$f(n) = a^n b c^n.$$
$$g(a^n b c^n) = H(a^n b c^n)(x) = x^{n^n}$$
$$g(f(n)) = x^{n^n} \equiv n^n$$
The system (\ref{eq-system-for-f}) shows that $f$ has level 2. The system (\ref{eq-system-for-H}) is a system of recurrent relations in $\langle \HOM(\{a,b,c\}^*,\{a,b,c\}^*),\circ,{\rm Id}\rangle$, which implies, by point (2) of Theorem \ref{characterisation_level3}, that $g$ has level 3.\\
This decomposition also proves that $D$ has level $4$: by an adaptation of Proposition 70 of \cite{Fra-Sen06} to arbitrary alphabets),
the composition $D=f \circ g$ of a map of level $2$ by a map of level $3$, has level $4$.\\
{\bf Proof for not level 3:}\\
The proof of theorem 16 of \cite{Cad-Maz-Pap-Pil-Sen20} shows that, for every $k \geq 1$, the sequences $D_n,D_{n+1},\ldots,D_{n+k-1}$ are algebraically independent over $\bbbQ$. Theorem \ref{alg_dependency_level3} then shows that it cannot have level 3, in the strong sense that: $D \notin \FRAC(\mathbb{S}_3(\bbbn,\bbbn))$.

\section{Perspectives}
\label{perspectives7}
\subsection{Related works}
In \cite{Eng-Vog86}[theorem 8.12 p.366] it is proved that, for every $k \geq 1$
$$\D_t\CFT(\TR)^k = \D_t\RT(\Pil^k(\TR)).$$
The lefthand-side denotes the compositions of $k$ total deterministic macro-transductions (from trees to trees).\\
The righthand-side denotes the 
the total deterministic transductions(from trees to trees) of level $k+1$ (i.e. with indexes which are $k$-pushdowns of trees).\\
Thus, our theorem \ref{characterisation_level3}, is an analogue of this previous result of \cite{Eng-Vog86}, but for string-to-string transductions and for the level $k=2$.
The analogue theorem for string-to-string transductions of an arbitrary level $k \geq 1$ is stated by Theorem 6 of the extended abstract \cite{Sen07}. The full proof is in preparation and should appear soon.\\

Note that our theorem \ref{characterisation_level3} does not follow immediately from \cite{Eng-Vog86}[theorem 8.12 p.366] for two reasons:\\
R1- the two factors of the decomposition provided by this previous statement are
{\em string-to-tree} and {\em tree-to-string}, while we provide two factors that are {\em string-to-string} transductions.\\
R2- The fact that every total map $f \in \mathbb{S}_{3}(A^*,B^*)$ also belongs to $\D_t\RT(\Pil^2(\TR))$ (with domain and range included in unary trees) is true; but the only proof that we know
consists in applying $(1) \Rightarrow (4)$ of theorem \ref{characterisation_level3} and then \cite{Fra-Sen06}[Proposition 53 p. 384], while the proof of this proposition constructs a
level 3 pushdown-automaton {\em without any } $\push_3$ move (i.e. $\push$ move into the innermost stack); thus this pda has a memory-structure which is
$\Pd(\Gamma,\Pd(\Gamma,\Pd(\Gamma,\TR)))$ where $\Gamma$ is a finite alphabet 
and  $\TR$ is the structure of trees over a graded alphabet where the arities of the symbols are $0$ or $1$, see definition of \cite{Eng-Vog86}[p. 253, lines 39-40].\\

We think that the analogous result for {\em integer-to-integer} transductions is false, since the factorial map has level 3 (see example \ref{example:fact}),
as a string-to-string transduction, but we believe it 
to be non-decomposable into two {\em integer-to-integer} factors i.e. into two linear integer recurrent sequences (work in preparation).\\

The notion of Two-Way finite State Transduction with $k$ nested Pebbles ($\DPT_k$ for short) is defined in \cite{Eng-Man02}.
It turns out that $\D_t\PebT_k(A^*,B^*)$ (i.e. the total maps in the class $\DPT_k$ ) is included in $\mathbb{S}_{k+2}(A^*,B^*)$ (see definition \ref{def_Sk}).
This can be shown by the following arguments\footnote{Personnal communication of L{\^{e}} Th{\`{a}}nh Dung Nguy{\^{e}}n}:\\
-  by \cite{Eng-Man03}[theorem 35, p. 677], every transduction of $\D_t\PebT_k(A^*,B^*)$ belongs to $\D_t\CFT(\TR)^{k+1}$ (the compositions of $k+1$ macro-transductions)\\
-  by \cite{Eng-Vog86}[theorem 8.12 p.366], every transduction of  $\D_t\CFT(\TR)^{k+1}$ belongs to $\D_t\RT(\Pil^{k+1}(\TR))$\\
-  thus, every transduction of $\D_t\PebT_k(A^*,B^*)$ is computable by some tree-to-tree pushdown-transducer of level $k+2$ and has unary input;
one can easily see that such a transducer can be simulated by some word-to-word pushdown-transducer of level $k+2$.

For such maps a decomposition theorem similar to Theorem 6 of the extended abstract \cite{Sen07} has been  proved:
every map of $\D\PebT_k$ is a composition of $k$ maps of $\D\PebT_1$ (\cite{Eng-Man02}[theorem 1 p.237]).\\

Further progress on the hierarchy of transductions $(\DPT_k)_{k \in \bbbN}$ has been achieved in \cite{Boj22,Boj23} and \cite{Kie-Tit-Pra23}.
\subsection{Perspectives}
All the open problems mentionned at the end of \cite{Fra-Sen06} remain unsolved at the moment and deserve interest.\\
Let us mention some open problems which are specific to level 3.\\
1- The convolution operation is known to preserve $\mathbb{S}_2$ and it is shown in Proposition 67 of \cite{Fra-Sen06}
that, for every integer $k \geq 2$,  if $f \in \mathbb{S}_{k+1}, g \in \mathbb{S}_{k}$ then
$f * g \in \mathbb{S}_{k+1}$.\\
We wonder if the level $3$ is closed under convolution. The same question is open for all levels $k \geq 3$.\\
2- The equality problem for sequences in $\mathbb{S}_{3}(A^*, \NN)$ (for every finite alphabet $A$) is decidable (\cite{Sen07}).
Let us say that $(u(n))_{n \in \NN},(v(n))_{n \in \NN}$ are {\em almost}-equal (which we denote by $u=_a v$)  iff
$$ \{ n \in \NN \mid u_n \neq v_n\} \mbox{ is finite}.$$
The following decision problem is not known to be decidable nor undecidable:\\
{\bf Instance}: two sequences $u,v \in \mathbb{S}_3$\\
{\bf Question}: $u=_a v ?$\\
Note that the so-called ``Skolem problem'' (see below) is recursively reducible to this problem.
Skolem problem (usually credited to \cite{Sko34}) is the following decision problem:\\
{\bf Instance}: two sequences $u,v \in \mathbb{S}_2$\\
{\bf Question}: $\exists n \in \NN, u(n)= v(n) ?$\\
Skolem's problem is reducible (by a many-one recursive reduction) to the almost-equivalence problem for sequences of level 3,
by the following arguments.\\
Given  $u,v \in \mathbb{S}_2$, let us consider the sequence
$$w(n) := \prod_{i=0}^n (u(n) -v(n)).$$
This sequence $w$ belongs to ${\cal D}_3$ i.e. has the form
$$w(n)= w^+(n) -w^-(n)$$
where $w^+, w^- \in \mathbb{S}_3$ (see the sketch of proof below).\\
But
$$[\exists n \in \NN,\;u(n)=v(n)] \Leftrightarrow
[w^+ =_a w^-].
$$
\begin{sketch}
The sequence $w$ fulfills the recurrence relation
$$w(n+1) = w(n) \cdot (u(n+1) -v(n+1)).$$
which has the form
$$w(n+1) = P(w(n))$$
for some polynomial $P$ with coefficients in ${\cal D}_3$.
From Theorem 96 of \cite{Fra-Sen06}, step 1 of the proof, it follows that 
this sequence $w$ belongs to ${\cal D}_3$.
\end{sketch}
3- A natural decision problem is the following:\\
{\bf Instance}: a sequences $u \in \mathbb{S}_{k+1}$\\
{\bf Question}: $u \in \mathbb{S}_{k}?$\\
which we can name the {\em level-minimization problem}.
A positive solution is given in \cite{Doueneau23} for several sub-hierarchies of $(\D\PebT_k)_{k \in \bbbN}$.
It is not known whether this problem is decidable or not  for $(\D\PebT_k)_{k \in \bbbN}$ nor for $(\mathbb{S}_{k})_{k \in \bbbN}$.\\
Example \ref{n_pui_n} suggests that there is no simple property that characterizes  level 3 inside $\mathbb{S}_{4}$; for example, a low growth-rate does not imply
a low level in the hierarchy $(\mathbb{S}_{k})_{k \in \bbbN}$.
\paragraph{Aknowledgments}
I thank L{\^{e}} Th{\`{a}}nh Dung Nguy{\^{e}}n and Vincent Penelle for fruitful discussions
and useful information on these subjects.

\bibliographystyle{alpha}
\bibliography{seqlevel3}
\end{document}